\DeclareMathAlphabet{\mybb}{U}{BOONDOX-ds}{m}{n}
\setlist[enumerate]{label=\oldstylenums{\arabic*})}
\newcommand{\semfun}[1]{\llbracket#1\rrbracket}
\newcommand{\parcat}[1]{\mathsf{Par}(#1)}
\def\unp{Universal property}
\def\als{Algebraic structure}
\def\fre{Free c. on 1}
\def\syn{Syntax}
\def\sem{Semantics}
\def\spe{Specification}
\newcommand{\Par}{\mathsf{Par}}
\newcommand{\partialto}{\rightharpoonup}
\newcommand{\dom}[1]{\mathsf{dom}#1}
\newcommand{\unf}[1]{\mathsf{def}#1}
\newcommand{\rest}[1]{\overline{#1}}
\newcommand{\ra}{\to}
\def\To{\Rightarrow}
\newcommand{\Defeq}{\stackrel{\text{def}}{=}}
\newcommand{\eed}{\text{id}}
\newcommand{\from}{\colon}
\newcommand{\substitute}[2]{\raise2pt\hbox{$\scriptstyle{#1}$}/\lower2pt\hbox{$\scriptstyle{#2}$}}
\def\ev{\text{ev}}
\NewDocumentCommand{\makeabbrev}{mmm}
 {
  \yoruk_makeabbrev:nnn { #1 } { #2 } { #3 }
 }
\makeabbrev{\textbf}{bf#1}{
  a,b,c,d,e,f,g,h,i,j,k,l,m,n,o,p,q,r,t,u,v,w,x,y,z,%
  A,B,C,D,E,F,G,H,I,J,K,L,M,N,O,P,Q,R,T,U,V,W,X,Y,Z }
\makeabbrev{\boldsymbol}{bs#1}{%
    a,b,c,d,e,f,g,h,i,j,k,l,m,n,o,p,q,r,s,t,u,v,w,x,y,z,%
    A,B,C,D,E,F,G,H,I,J,K,L,M,N,O,P,Q,R,S,T,U,V,W,X,Y,Z }
\makeabbrev{\mathsf}{sf#1}{
  a,b,c,d,e,f,g,h,i,j,k,l,m,n,o,p,q,r,s,t,u,v,w,x,y,z,%
  A,B,C,D,E,F,G,H,I,J,K,L,M,N,O,P,Q,R,S,T,U,V,W,X,Y,Z }
\makeabbrev{\mathfrak}{fk#1}{
  a,b,c,d,e,f,g,h,j,k,i,l,m,n,o,p,q,r,s,t,u,v,w,x,y,z,%
  A,B,C,D,E,F,G,H,I,J,K,L,M,N,O,P,Q,R,S,T,U,V,W,X,Y,Z }
\makeabbrev{\mathcal}{cl#1}{
  A,B,C,D,E,F,G,H,I,J,K,L,M,N,O,P,Q,R,S,T,U,V,W,X,Y,Z }
\makeabbrev{\mybb}{bb#1}{
  A,B,C,D,E,F,G,H,I,J,K,L,M,N,O,P,Q,R,S,T,U,V,W,X,Y,Z }
\newcommand{\Mod}{\ct{Mod}}
\def\ct#1{\mathsf{#1}}
\def\Set{\ct{Set}}
\def\pLaw{\ct{pLaw}}
\def\Cat{\ct{Cat}}
\def\Lex{\ct{Lex}}
\def\ssLFP{\ct{LFP}}
\def\Tho{\ct{Th}}
\def\Th{\Tho}
\def\op{\text{op}}
\def\F{\bbF}
\def\N{\mathbb{N}}
\newcommand{\bnfEq}{\; ::= \;}
\newcommand{\bnfSep}{\;\; | \;\;}
\newlength{\seplen}
\newlength{\sepwid}
\def\firstblank{\,\rule{\seplen}{\sepwid}\,}
\newcommand*{\Scale}[2][4]{\scalebox{#1}{\ensuremath{#2}}}%
\title
\tikzset{commutative diagrams/.cd,arrow style=tikz,diagrams={>=stealth'}}
\let\phi\varphi
\newcommand{\reductionRule}[2]{{\prooftree{\scriptstyle #1}\justifies{\scriptstyle #2}\endprooftree}}
\newcommand{\typ}{\mathrel{:}}
\newcommand{\typeJudgment}[3]{{ {#2} \,\typ\, {#3}}}
\newcommand{\sort}[2]{\ensuremath{(#1,\,#2)}}
\tikzset{baseline=-0.5ex}
\definecolor{light-gray}{gray}{.7}
\tikzstyle{none}=[inner sep=0pt]
\tikzstyle{plain}=[inner sep=0pt]
\tikzstyle{black}=[circle, draw=black, fill=black, inner sep=0pt, minimum size=3.5pt]
\tikzstyle{black-faded}=[circle, draw=light-gray, fill=light-gray, inner sep=0pt, minimum size=4pt]
\tikzstyle{white}=[circle, draw=black, fill=white, inner sep=0pt, minimum size=3.5pt]
\tikzstyle{white-faded}=[circle, draw=light-gray, fill=white, inner sep=0pt, minimum size=4.5pt]
\tikzstyle{delay}=[fill=black, regular polygon, regular polygon sides=3,rotate=-90, scale=.55]
\tikzstyle{delay-op}=[fill=black, regular polygon, regular polygon sides=3,rotate=90, scale=.55]
\tikzstyle{reg}=[draw, fill=white, rounded rectangle, rounded rectangle left arc=none, minimum height=1em, minimum width=1em, node font={\scriptsize}]
\tikzstyle{coreg}=[draw, fill=white, rounded rectangle, rounded rectangle right arc=none, minimum height=1em, minimum width=1em, node font={\scriptsize}]
\tikzstyle{rn}=[circle, draw=red, fill=red, inner sep=0pt, minimum size=4pt]
\tikzstyle{place}=[circle, draw=black, fill=white, inner sep=0pt, minimum size=8pt]
\newcommand{\raisebox{.25em}{\begin{tikzpicture}[scale=.5,baseline=(current bounding box.center)]
\draw[iden] (0,2*\len) -- (1,2*\len);
\end{tikzpicture}}}{\raisebox{.25em}{\begin{tikzpicture}[scale=.5,baseline=(current bounding box.center)]
\draw[iden] (0,2*\len) -- (1,2*\len);
\end{tikzpicture}}}
\newcommand{\symDiag}{
\tikzset{x=1em, y=2.1ex}
\begin{tikzpicture}
	\begin{pgfonlayer}{nodelayer}
		\node [style=none] (0) at (-0.75, -0.5) {};
		\node [style=none] (1) at (-0.75, 0.5) {};
		\node [style=none] (2) at (0.75, -0.5) {};
		\node [style=none] (3) at (0.75, 0.5) {};
	\end{pgfonlayer}
	\begin{pgfonlayer}{edgelayer}
		\draw [in=180, out=0] (1.center) to (2.center);
		\draw [in=180, out=0] (0.center) to (3.center);
	\end{pgfonlayer}
\end{tikzpicture}
}
\tikzset{x=1em, y=1.5ex}
\newcommand{\raisebox{.25em}{\begin{tikzpicture}[scale=.5,baseline=(current bounding box.center)]
\akasa
\end{tikzpicture}}}{\raisebox{.25em}{\begin{tikzpicture}[scale=.5,baseline=(current bounding box.center)]
\akasa
\end{tikzpicture}}}
\newcommand{\raisebox{.25em}{\begin{tikzpicture}[scale=.75,baseline=(current bounding box.center)]
\mult
\end{tikzpicture}}}{\raisebox{.25em}{\begin{tikzpicture}[scale=.75,baseline=(current bounding box.center)]
\mult
\end{tikzpicture}}}%{\tikzfig{mult}}
\newcommand{\raisebox{.25em}{\begin{tikzpicture}[xscale=.5,baseline=(current bounding box.center)]
\unit
\end{tikzpicture}}}{\raisebox{.25em}{\begin{tikzpicture}[xscale=.5,baseline=(current bounding box.center)]
\unit
\end{tikzpicture}}}%{\tikzfig{./generators/co-delete}}
\newcommand{\raisebox{.25em}{\begin{tikzpicture}[scale=.75,baseline=(current bounding box.center)]
\comult
\end{tikzpicture}}}{\raisebox{.25em}{\begin{tikzpicture}[scale=.75,baseline=(current bounding box.center)]
\comult
\end{tikzpicture}}}%{\tikzfig{./generators/copy}}
\newcommand{\raisebox{.25em}{\begin{tikzpicture}[xscale=.5,baseline=(current bounding box.center)]
\counit
\end{tikzpicture}}}{\raisebox{.25em}{\begin{tikzpicture}[xscale=.5,baseline=(current bounding box.center)]
\counit
\end{tikzpicture}}}%{\tikzfig{./generators/delete}}
\newcommand{\comp}{\mathop{\fatsemi}}%{\mathrel{;}}
\def\sSigma{\mathfrak{S}}
\DeclareMathOperator{\colim}{colim}
\def\L{L}
\keywords{Lawvere theory, categories of partial maps, syntax, semantics, variety theorem.}  %% \keywords are mandatory in final camera-ready submission
\newcommand{\graffle}[3][0pt]{\lower#1\hbox{$\includegraphics[height=#2]{graffles/#3.pdf}$}}
\def\circ{\cdot}
\def\len{0.25}
\tikzset{
  wire/.style={rounded corners=3, line width=.5pt}
, iden/.style={line width=.5pt}
, dot/.style={draw,fill=black, circle, inner sep=1pt}
, wdot/.style={draw,fill=white, circle, inner sep=1pt}
, edot/.style={draw,fill=red!40, circle, inner sep=1pt}
, cdot/.style={draw,fill=#1, circle, inner sep=1pt}
, labeled/.style={draw,fill=white, inner sep=1pt}
}
\def\akasa{
\draw[densely dotted] (0,0) rectangle (1,1);
}
\def\braid{
  \draw[draw=none] (0,0) rectangle (1,1);
\draw[wire] (0,\len) -- (\len,\len) -- (3*\len, 3*\len) -- (1,3*\len);
\draw[white,wire] (0,3*\len) -- (\len,3*\len) -- (3*\len, \len) -- (1,\len);
}
\def\id{
  \draw[draw=none] (0,0) rectangle (1,1);
\draw[iden] (0,2*\len) -- (1,2*\len);
}
\def\twoid{
\draw[iden] (0,\len) -- (1,\len);
\draw[iden] (0,3*\len) -- (1,3*\len);
}
\def\mult{
  \draw[draw=none] (0,0) rectangle (1,1);
\draw[wire] (0,\len) -| (2*\len,3*\len) -- (0,3*\len);
\draw[wire] (2*\len,2*\len) -- (1,2*\len);
\node[dot] at (2*\len,2*\len) {};
}
\def\comult{
  \draw[draw=none] (0,0) rectangle (1,1);
\draw[wire] (4*\len,\len) -| (2*\len,3*\len) -- (1,3*\len);
\draw[wire] (2*\len,2*\len) -- (0,2*\len);
\node[dot] at (2*\len,2*\len) {};
}
\def\emult{
\draw[wire] (0,\len) -| (2*\len,3*\len) -- (0,3*\len);
\draw[wire] (2*\len,2*\len) -- (1,2*\len);
\node[edot] at (2*\len,2*\len) {};
}
\def\ecomult{
\draw[wire] (4*\len,\len) -| (2*\len,3*\len) -- (1,3*\len);
\draw[wire] (2*\len,2*\len) -- (0,2*\len);
\node[edot] at (2*\len,2*\len) {};
}
\def\unit{
  \draw[draw=none] (0,0) rectangle (1,1);
\draw[wire] (2*\len,2*\len) -- (1,2*\len);
\node[dot] at (2*\len,2*\len) {};
}
\def\eunit{
\draw[wire] (2*\len,2*\len) -- (1,2*\len);
\node[edot] at (2*\len,2*\len) {};
}
\def\lid{
  \draw[draw=none] (0,0) rectangle (1,1);
\draw[iden] (0,\len) -- (1,\len);
}
\def\uid{
  \draw[draw=none] (0,0) rectangle (1,1);
\draw[iden] (0,3*\len) -- (1,3*\len);
}
\def\mor#1{
  \draw[draw=none] (0,0) rectangle (1,1);
\id
\node[draw, fill=white, inner sep=1.5pt] at (2*\len,2*\len) {\tiny $#1$};
}
\def\lmor#1{
\lid
\node[draw, fill=white, inner sep=1.5pt] at (2*\len,\len) {\tiny $#1$};
}
\def\umor#1{
\uid
\node[draw, fill=white, inner sep=1.5pt] at (2*\len,3*\len) {\tiny $#1$};
}
\def\dcouni#1{
  \draw[draw=none] (0,0) rectangle (1,1);
\draw[wire] (2*\len,2*\len) -- ++(2*\len,0);
\node[draw, fill=white, inner sep=1.5pt] at (2*\len,2*\len) {\tiny $#1$};
}
\def\ldcouni#1{
\draw[wire] (2*\len,\len) -- ++(2*\len,0);
\node[draw, fill=white, inner sep=1.5pt] at (2*\len,\len) {\tiny $#1$};
}
\def\lcomult{
\draw[wire] (4*\len,0) -| (2*\len,2*\len) -- (1,2*\len);
\draw[wire] (2*\len,\len) -- (0,\len);
\node[dot] at (2*\len,\len) {};
}
\def\ucomult{
\draw[wire] (4*\len,2*\len) -| (2*\len,1) -- (1,1);
\draw[wire] (2*\len,3*\len) -- (0,3*\len);
\node[dot] at (2*\len,3*\len) {};
}
\def\counit{
\draw[wire] (0,2*\len) -- (2*\len,2*\len);
\node[dot] at (2*\len,2*\len) {};
}
\def\ucounit{
\draw[wire] (0,3*\len) -- (2*\len,3*\len);
\node[dot] at (2*\len,3*\len) {};
}
\def\lcounit{
\draw[wire] (0,\len) -- (2*\len,\len);
\node[dot] at (2*\len,\len) {};
}
\def\thingy#1{
  \draw[wire] (4*\len,\len) -| (2*\len,3*\len) -- (1,3*\len);
  \draw[wire] (2*\len,2*\len) -- (0,2*\len);
  \draw[fill=white] (\len,\len-.1) rectangle (3*\len,3*\len+.1) node[pos=.5] {$\scriptscriptstyle #1$};
}
\def\turn{
	\draw[rounded corners=3, line width=.5pt] (0,\len) -- (\len,\len) -- (3*\len, 3*\len) -- (4*\len,3*\len);
}
\def\coturn{
	\draw[rounded corners=3, line width=.5pt] (0,3*\len) -- (\len,3*\len) -- (3*\len, \len) -- (4*\len,\len);
}
\def\tenmult{
\draw[wire] (0,\len) -| (2*\len,3*\len) -- (0,3*\len);
\draw[wire] (2*\len,2*\len) -- (1,2*\len);
\node[fill=white,inner sep=-1.25pt,circle] at (2*\len,2*\len) { $\otimes$};
}
\newcommand{\ezs}[2]{
\begin{scope}[#2]
#1
\end{scope}
}
\renewcommand\fbox{\fcolorbox{gray!30}{white}}
\newcommand{\step}[2][1]{\ezs{#2}{xshift=#1 cm}}
\newcommand{\leftLabel}[2][0]{
\node[left] at (#1,2*\len) {$#2 =\quad$};
}
\newcommand{\rightLabel}[2]{
\node[right] at (#2,2*\len) {$\quad = #1$};
}
\newcommand{\umult}[1][black]{
\ezs{
\draw[wire] (0,\len) -| (2*\len,3*\len) -- (0,3*\len);
\draw[wire] (2*\len,2*\len) -- (1,2*\len);
\node[cdot=#1] at (2*\len,2*\len) {};
}{yshift=\len cm}
}
\newcommand{\lmult}[1][black]{
\ezs{
\draw[wire] (0,\len) -| (2*\len,3*\len) -- (0,3*\len);
\draw[wire] (2*\len,2*\len) -- (1,2*\len);
\node[cdot=#1] at (2*\len,2*\len) {};
}{yshift=-\len cm}
}
\def\cothingy#1{
  \ezs{\thingy{#1}}{xscale=-1}
}
\def\multi#1{
\draw[wire] (0,5*\len) -| (2*\len,7*\len) -- (0,7*\len);
\draw[wire] (0,\len) -| (2*\len,3*\len) -- (0,3*\len);
\ezs{\draw[wire] (3*\len,2*\len) -- ++(\len,0);
\draw[fill=white, thin] (\len,\len-.15) rectangle (3*\len,3*\len+.15) node[pos=.5,] {$\scriptscriptstyle #1$};}{yscale=2}
}
\newcommand{\up}[2][1]{
    {\ezs{#2}{yshift=#1*\len cm}}
}
\newcommand{\down}[2][1]{
    {\ezs{#2}{yshift=-#1*\len cm}}
}
\newcommand{\premor}[3]{
  \draw[iden] (0,2*\len) -- (1,2*\len);
  \node[draw, fill=white, inner sep=1.5pt] at (2*\len,2*\len) {\tiny $#1$};
\node[left] at (0,2*\len) {\tiny $#2$};
\node[right] at (1,2*\len) {\tiny $#3$};
}
\newcommand{\tmor}[3]{
\mor{#1}
\node[left] at (0,2*\len) {\tiny $#2$};
\node[right] at (1,2*\len) {\tiny $#3$};
}
\newcommand{\gau}[1]{
\node[left] at (0,2*\len) {\tiny $#1$};
}
\newcommand{\dro}[1]{
\node[right] at (1,2*\len) {\tiny $#1$};
}
\newcommand{\eql}[1]{
\node at (#1,2*\len) {$=$};
}
\def\theR{\ucounit
\lcounit
\draw[fill=white] (\len,\len-.1) rectangle (3*\len,3*\len+.1) node[pos=.5] {\tiny $R$};}
\theoremstyle{definition}
 \newtheorem{theorem}{Theorem}[section]
 \newtheorem*{theorem*}{Theorem}
 \newtheorem*{definition*}{Definition}
  \newtheorem{remark}[theorem]{Remark}
  \newtheorem{notation}[theorem]{Notation}
  \newtheorem{observation}[theorem]{Observation}
  \newtheorem{recipe}[theorem]{Recipe}
  \newtheorem{proposition}[theorem]{Proposition}
  \newtheorem{lemma}[theorem]{Lemma}
  \newtheorem{corollary}[theorem]{Corollary}
  \newtheorem{example}[theorem]{Example}
  \newtheorem{definition}[theorem]{Definition}
\begin{document}
\allowdisplaybreaks
\author{Ivan Di Liberti}
\email{diliberti.math@gmail.com}
\address{
  % \textbf{Ivan Di Liberti}\newline
  Institute of Mathematics, \newline
  Czech Academy of Sciences,\newline
  Prague, Czech Republic.
}

\author{Fosco Loregian}
\author{Chad Nester}
\author{Pawe{\l} Soboci\'nski}
\email{fosco.loregian@taltech.ee}
\email{chad.nester@taltech.ee}
\email{pawel.sobocinski@taltech.ee}
\address{
  Department of Software Science,\newline
  Tallinn University of Technology,\newline
  Tallinn, Estonia.
}

\begin{abstract}
We provide a Lawvere-style definition for partial theories, extending the classical notion of equational theory by allowing partially defined operations. As in the classical case, our definition is syntactic: we use an appropriate class of string diagrams as terms. This allows for equational reasoning about the class of models defined by a partial theory. We demonstrate the expressivity of such equational theories by considering a number of examples, including partial combinatory algebras and cartesian closed categories. Moreover, despite the increase in expressivity of the syntax we retain a well-behaved notion of semantics: we show that our categories of models are precisely locally finitely presentable categories, and that free models exist.
\end{abstract}

\maketitle
\paragraph{\bf Acknowledgements}
  Di Liberti was supported by the Grant Agency of the Czech Republic project EXPRO 20-31529X and RVO: 67985840. Loregian, Nester and Soboci\'{n}ski were supported by the ESF funded Estonian IT Academy research
  measure (project 2014-2020.4.05.19-0001).

\section{Introduction}
\def\fix#1{\textcolor{black}{#1}}

% equational theories + equational reasoning
Mathematicians interested in, say, the theory of monoids or the theory of groups work in an axiomatic setting, asserting the presence of a collection of $n$-ary \emph{operations} on an ambient set $A$ --- i.e.\ (total) functions $A^n\ra A$ for some $n:\bbN$ --- that satisfy a number of axioms. This data can be packaged up into an \emph{equational theory}: a pair $(\Sigma,E)$ where $\Sigma$ is the \emph{signature}, consisting of \emph{operation symbols}, each with a specified arity, and $E$ is a collection of \emph{equations} --- i.e.\ pairs of \emph{terms} built up from the signature $\Sigma$ and auxiliary variables --- that provide the axioms.
An ambient equational theory is thus the bread and butter of an algebraist, that together with the principles of equational reasoning provides the basic calculus of mathematical investigation into the structure of interest.

% universal algebra + varieties
Birkhoff~\cite{birkhoff1935structure} discovered that a substantial amount of mathematics can be done at the level of generality of an equational theory. Given an equational theory $(\Sigma,E)$, a \emph{model} is a set together with an interpretation of the function symbols $\Sigma$ that satisfies the equations $E$. A monoid is nothing but a model of the equational theory of monoids, a group is a model of the equational theory of groups, and so on. The semantics of an equational theory, i.e.\ its class of models, is called a \emph{variety}. %Working with generic varieties,
Birkhoff showed that certain results (e.g.\ the so-called isomorphism theorems, existence of free models) can be derived uniformly for generic varieties, independent of the equational theory at hand. %Moreover, there are structural similarities between the classes of models for different.
Most spectacularly, a class of sets-with-structure can be determined to be a variety through purely structural means; this is often referred to as Birkhoff's Variety Theorem or the HSP Theorem.

The resulting field is known as \emph{universal algebra}. Its mathematical objects of study are equational theories and varieties. Given its goal of uncovering methodological and technical similarities of a large swathe of contemporary algebra, universal algebra is in the intersection of mathematics and mathematical logic. It has influenced computer science, especially programming language theory, as a formal and generic treatment of syntax, terms, equational reasoning, etc.

\smallskip

% functorial semantics
Lawvere~\cite{lawvere1963functorial}, and the subsequent development of \emph{categorical} universal algebra, addressed some of the perceived shortcomings of the classical account. It is well known that a single variety can have many different axiomatic presentations, and in this sense the choice of a particular presentation may seem ad hoc. The requirement that models be sets-with-structure is also  restrictive, since algebraic structures appear in other mathematical contexts as well.
A Lawvere theory is a category $\clL$ that serves as a \emph{presentation-independent} way of capturing the specification of a variety. A central conceptual role is played by \emph{cartesian categories}, i.e.\ categories with finite products. The free cartesian category on one object often appears in the very definition of a Lawvere theory  -- the ``one object'' here capturing single-sortedness. Finite products track arities and ensure that operations are total functions. Functorial semantics gives us the correct generalisation of varieties: a model is cartesian functor $\clL\to\Set$. This point of view is flexible (e.g.\ $\Set$ can be replaced with another cartesian category) and leads to a rich theory~\cite{hyland2007category,lawvere1963functorial,adamek2003duality}, where the study of varieties and their specifications can take place at a high level of generality.

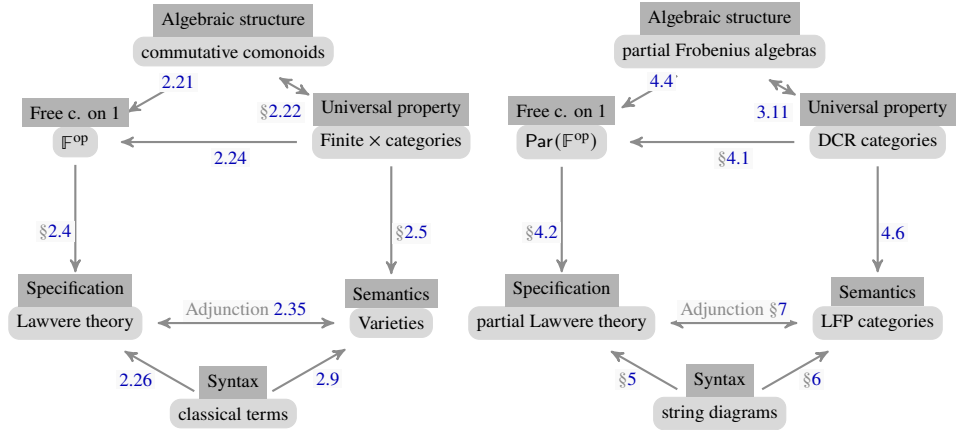
\begin{figure}
	\begin{center}
		\begin{tikzpicture}[scale=.8,
				cat/.style={
						draw=gray!30,
						fill=gray!30,
						rectangle,
						rounded corners,
						thin,
						font=\tiny,
						inner sep=3pt
					},
				chapter number/.style={
						anchor=south,
						fill=gray!60,
						font=\tiny,
						inner sep=3pt,
						outer sep=0pt,
						minimum size=.15in,
					},
				impli/.style={
						thick
						, gray!90
						, >=stealth'
						, ->
					},
				imply node/.style={
						fill=gray!5,
						font=\tiny,
						inner sep=1pt
					}
			]
			\foreach \i/\j in {
			0/{commutative comonoids}
			, 1/{$\bbF^\text{op}$}
			, 2/{Lawvere theory}
			, 3/{classical terms}
			, 4/{Varieties}
			, 5/{Finite $\times$ categories}
			}{
			\node[cat] (\i) at (90+60*\i:3) {\j};
			}
			\draw[impli]
			(1) edge[shorten >=4.5mm,shorten <=1mm] node[imply node, left] {§\ref{subsec:law}} (2)
			(3) edge[shorten >=3mm,shorten <=2mm] node[imply node, below left] {\ref{rem:terms}} (2)
			edge[shorten >=3mm,shorten <=2mm] node[imply node, below right] {\ref{birko}} (4)
			(5) edge[shorten >=4.5mm,shorten <=1mm] node[imply node, right] {§\ref{sem_for_alg}} (4)
			(4) edge[<->,shorten >=2mm,shorten <=2mm] node[imply node, above] {Adjunction \ref{thm:lawvereadjunction}} (2)
			(5) edge[shorten >=3mm,shorten <=8mm, <->] node[imply node, below left=1mm] {§\ref{thm:fox}} (0)
			edge[shorten >=3mm,shorten <=2mm] node[imply node, below right=1mm] {\ref{Fop_is_free_on_1}} (1)
			(0) edge[shorten >=4.5mm,shorten <=2mm] node[imply node, above right=1mm] {\ref{obs:Fopcomonoids}} (1);
			\foreach \u/\v in {
					0/\als
					, 1/\fre
					, 2/\spe
					, 3/\syn
					, 4/\sem
					, 5/\unp
				}{
					\node[chapter number] at (\u.90) {\v};
				}
			\begin{scope}[xshift=8cm]
				\foreach \i/\j in {
				0/{partial Frobenius algebras}
				, 1/{$\Par(\bbF^\text{op})$}
				, 2/{partial Lawvere theory}
				, 3/{string diagrams}
				, 4/{LFP categories}
				, 5/{DCR categories}
				}{
				\node[cat] (\i) at (90+60*\i:3) {\j};
				}
				\draw[impli]
				(1) edge[shorten >=4.5mm,shorten <=1mm] node[imply node, left] {§\ref{subsec:partialLawvere}} (2)
				(3) edge[shorten >=3mm,shorten <=2mm] node[imply node, below left] {§\ref{sec:partialequationaltheories}} (2)
				edge[shorten >=3mm,shorten <=2mm] node[imply node, below right] {§\ref{sec:multi}} (4)
				(5) edge[shorten >=4.5mm,shorten <=1mm] node[imply node, right] {\ref{models_of_parlaw}} (4)
				(4) edge[<->,shorten >=2mm,shorten <=2mm] node[imply node, above] {Adjunction §\ref{the_variety}} (2)
				(5) edge[shorten >=3mm,shorten <=8mm, <->] node[imply node, below left=1mm] {\ref{thm:dcrc}} (0)
				edge[shorten >=3mm,shorten <=2mm] node[imply node, below right=1mm] {§\ref{the_free_dcr_on}} (1)
				(0) edge[shorten >=4.5mm,shorten <=2mm] node[imply node, above right=1mm] {\ref{as_props_ultimo}} (1);
				\foreach \u/\v in {
						0/\als
						, 1/\fre
						, 2/\spe
						, 3/\syn
						, 4/\sem
						, 5/\unp
					}{
						\node[chapter number] at (\u.90) {\v};
					}
			\end{scope}
		\end{tikzpicture}
	\end{center}
	\caption{\label{fig:hexagons}Elements of classical functorial semantics on the left, and our contribution on the right.}
\end{figure}

% post-hoc
The beautiful abstract picture painted by Lawvere can be used to give a post-hoc explanation of the elements of \emph{classical} equational theories. Every equational theory yields a Lawvere theory. Free equational theories, i.e.\ those where $E=\varnothing$, are Lawvere theories whose arrows can be concretely described as (tuples of) terms. \fix{Indeed, it is well-known that terms are closely connected to the universal property of products}. The abstract mathematics, therefore, \emph{explains} the structure of terms and \emph{justifies} the use of ordinary equational reasoning. The elements of Lawvere's approach to universal algebra are illustrated in the left side of Fig.~\ref{fig:hexagons}.

\medskip
% partiality - examples.
% Equational reasoning in Computer science
In this paper we are concerned with \emph{partial} algebraic structures, i.e. those where the operations are not, in general, defined on their whole domain. Partiality is important in mathematics: the very notion of \emph{category} itself is a partial algebraic structure, since only compatible pairs of arrows can be composed. Even more so, partiality is an essential property of computation, and partial functions play a role in many different parts of computer science, starting with initial forays into recursion theory at the birth of the subject, and being ever present in more recent developments, for example arising as an essential ingredient in the study of fixpoints~\cite{bloom1993iteration}. From the start it was clear that additional care is necessary for partial operations, the terms built up from them, and the associated principles of (partial) equational reasoning. An example is the principle of \emph{Kleene equality}: using $s=t$ to assert that whenever one side is defined, so is the other, and they are equal, or the use of notation $-\!\!\mid_X$ to restrict the domain of definition of a function. In general, reasoning about partially defined terms can be quite subtle.

%syntactic approaches and their failures
% essential algebraic theories, finite product sketches are a kind of syntax, but they don't come with a usable notion of equational reasoning.
% semantics correspondences, Gabriel-Ulmer

\medskip
\textbf{Our contribution} is summarised on the right hand side of Fig.~\ref{fig:hexagons} and follows Lawvere's approach closely.
A key question that we address is what replaces the central notion of cartesian category. This turns out to be the notion of
\emph{discrete cartesian restriction} (DCR) category, which arose from research on the structure of partiality~\cite{Coc12}. Just as the free cartesian category on one object plays a central role in the definition of Lawvere theory, the free DCR category on one object plays a central role in the definition of \emph{partial} Lawvere theory that we propose. In our development, the category of sets and partial functions $\Par$ replaces $\Set$ as the universe of models. Much of the richness of the classical picture is unchanged: e.g.\ we obtain free models just as in the classical setting. Moreover, we prove a variety theorem that characterises partial varieties as locally finitely presentable (LFP) categories.

\medskip
% string diagrams
Props~\cite{maclane1965categorical,compprops} and their associated \emph{string diagrams}, play a crucial technical role. Props are a convenient categorical structure that capture generic \emph{monoidal} theories. Monoidal theories differ from equational theories in that, roughly speaking, \fix{in that we are able to consider more general monoidal structures other than the} cartesian \fix{one}. String diagrams are the syntax of props, and they are a bona fide syntax not far removed from traditional terms. For example, they can be recursively defined and enjoy similar properties as free objects, e.g.\ the principle of structural induction. The connective tissue between the classical story and string diagrams is Fox's Theorem~\cite{Fox76}, which states that the structure of cartesian categories can be captured by the presence of local algebraic structure: a coherent and natural commutative comonoid structure on each object. This implies several things: (\textit{i}) that classical terms can be seen as particular kinds of string diagrams, (\textit{ii}) that classical equational reasoning can be seen as diagrammatic reasoning on these string diagrams and (\textit{iii}) that the prop induced from the monoidal theory of commutative comonoids --- well-known to coincide with $\bbF^\op$, the opposite of the prop of finite sets and functions --- is the free cartesian category on one object. The correspondence goes further: as shown in~\cite{Bonchi2018},
Lawvere theories are particular kinds of monoidal theories.

We are able to identify the nature of the free DCR category on one object by proving a result similar to Fox's Theorem, but for DCR categories instead of cartesian categories. Instead of commutative comonoids, we identify the algebraic structure of interest as \emph{partial Frobenius algebras}. The free DCR category on one object is the prop induced from this monoidal theory, and it can be characterised as $\parcat{\F^\op}$: the prop of partial functions in $\F^\op$. This informs our definition of partial Lawvere theory. Crucially, just as the mathematics of ordinary Lawvere theories serves as a post hoc justification for equational theories, we identify the precise class of string diagrams that serve as \emph{partial terms}, which lets us define a \emph{partial equational theory} in a familiar way as pair of signature and equations. We give several examples, from partial commutative monoids, to several examples important in computer science, notably the theory of partial combinatory algebras~\cite{Bethke88}, the theory of pairing functions, and the theory of cartesian closed categories.

To summarise, the original contributions of this paper are:
\begin{itemize}
	\item A ``Fox theorem'' for DCR categories, which uses the notion of \emph{partial Frobenius algebra} and leads to the characterisation of the free DCR category on one object as $\parcat{\F^\op}$;
	\item The definitions of partial Lawvere theory and partial equational theory, which use string diagrammatic syntax informed by the aforementioned Fox theorem;
	\item The coupling of these notions into a comprehensive framework for partial algebraic theories, analogous to the work of Lawvere on classical algebraic theories, as illustrated in Fig.~\ref{fig:hexagons};
	\item The existence of free models, and---more generally--a variety theorem, building on known results about DCR categories, and the Gabriel-Ulmer duality.
\end{itemize}

\paragraph{Related work.}
There are a number of formalisms in the literature that aim at providing a rigorous way of specifying partial algebraic structure. Freyd's \emph{essentially algebraic theories}~\cite{freyd1972aspects} were introduced informally, but were subsequently formalised and generalised in various ways~\cite{adamek_rosicky_1994, palmgren_vickers_2007, Adamek2011}.
A different, but equally expressive approach is via \emph{finite limit sketches}~\cite{adamek_rosicky_1994}. Nevertheless, none of these approaches can claim to have the foundational status of classical equational theories - e.g.\ they do not, per se, provide a canonical notion of syntax to replace classical terms, nor a calculus for (partial) equational reasoning about the categories of models they define. Tout court, none of them can claim to be \emph{equational}. Interestingly, the semantic landscape (i.e.\ the corresponding notion of \emph{partial variety}) is better understood than the syntax. The class of models of essentially algebraic theories and finite limit sketches are closely related to Gabriel-Ulmer duality~\cite{centazzo2004generalised}, which asserts a contravariant (bi)equivalence between the category of categories with finite limits and the category of locally finitely presentable (LFP) categories.

Partial Frobenius algebras, which arise in our characterisation of DCR categories, are special/separable Frobenius algebras without units. The version \emph{with} units was originally studied in~\cite{Carboni1987}, is deeply connected to the relational algebra~\cite{Freyd1990}, characterises 2-dimensional TQFTs~\cite{Kock2003}, and has been used extensively in categorical approaches to the study of quantum information and quantum computing, such as the ZX calculus~\cite{Coecke2008}. In a similar way to the use of partial Frobenius algebras in this paper, they are used in the recently proposed \emph{Frobenius theories}~\cite{Bonchi2017c}, which are algebraic theories that take their models in the category of relations $\ct{Rel}$, and are guided by the structure of cartesian bicategories of relations~\cite{Carboni1987}.

Restriction categories were introduced in~\cite{Coc02} as a general framework for reasoning about categories of partial maps. Cartesian restriction (CR) categories are those with a certain sort of formal finite product structure -- restriction products -- introduced in~\cite{Coc07}. Notably, the p-categories of~\cite{Rob88} arise as restriction categories with restriction products. Discrete cartesian restriction (DCR) categories -- named for a similarity to categories of discrete topological spaces -- arise in \cite{Coc12} as the restriction categories with finite \emph{latent} limits -- again a sort of formal limit. DCR categories are closely connected to the \emph{discrete inverse categories} considered in \cite{Gil14} which are presentable in terms of \emph{semi-Frobenius} algebras, being those special/separable commutative Frobenius algebras with neither a unit \emph{nor a counit}.

\paragraph{Structure of the paper.} In \S\ref{sec:history} we lay the foundations by recalling the basic concepts of universal algebra, props and string diagrams, Fox's theorem, and functorial semantics. In \S\ref{sec:partialmaps}, after recalling the basics of restriction category theory, we prove Theorem~\ref{thm:newfox}, which is to DCR categories what Fox's theorem is to cartesian categories. In \S\ref{parlawve} we propose our original definitions: partial Lawvere theories and their varieties.  Next, \S\ref{sec:partialequationaltheories} is devoted to the associated notion of partial equational theory, and several examples, continued in \S\ref{sec:multi} with multi-sorted examples. Our variety theorem is in \S\ref{the_variety} where we also treat other semantic aspects, e.g.\ the existence of free models.

\section{Background material}\label{sec:history}

\subsection{Overview of classical universal algebra}
Universal algebra is the study of \textit{equational theories} and of their semantics, \emph{varieties}. In this section we recall the basic concepts and definitions.
\begin{definition}\label{unialg}
	A \emph{signature} is a pair $(\Sigma,\alpha)$ where $\Sigma$ is a set
	and $\alpha$ a function $\Sigma \to \bbN$ that assigns to every element $t : \Sigma$ a natural number $\alpha(n): \bbN$ called the \emph{arity} of the \emph{function symbol} $t$.
\end{definition}
\begin{notation}\rm
	The arity ``slices'' the set $\Sigma$ of function symbols.	The slice $\Sigma_n \subseteq\Sigma$ contains operations of arity $n$, and $t : \Sigma_n$ is a synonym for ``$t$ is a $n$-ary operation''. We will sometimes write $t_n$ for a generic element of $\Sigma_n$.
	We shall refer to the signature as just $\Sigma$ if the arity function is understood from the context. For example the signature $\Sigma_{\ct{M}}$ of \emph{monoids} is $\{m,\,e\}$, with $\alpha(m)=2$ and $\alpha(e)=0$.
\end{notation}
\begin{definition}\label{defn:sigmaalgebra}
	A $\Sigma$-\emph{algebra} is a pair $(A,\semfun{-}_A)$ where $A$ is a set and $\semfun{-}_A$ is a function sending  function symbols $t : \Sigma_n$ to functions $\semfun{t}_A\from A^n \ra A$. The function $\semfun{t}_A$ is called the $n$-ary operation on $A$ \emph{associated} to the function symbol $t: \Sigma_n$. We refer to $A$ as the \emph{carrier} of the $\Sigma$-algebra.
	A $\Sigma$-algebra homomorphism from $(A,\semfun{-}_A)$ to $(B,\semfun{-}_B)$ is a function $f\from A\to B$ that respects the $\Sigma$ structure: i.e.\ for every $n\in\bbN$ and $t : \Sigma_n$, the following diagram commutes:
	\[
		\begin{tikzcd}
			{A^n} \ar[d,"\semfun{t}_A"'] \ar[r, "f^n"] & {B^n} \ar[d,"\semfun{t}_B"] \\
			A \ar[r, "f"] & B.
		\end{tikzcd}
	\]
\end{definition}
\begin{remark}
	$\Sigma$-algebras and their homomorphisms define a category $\clV_\Sigma$.
\end{remark}
Of course, an algebraic structure isn't just about operations, but also about \emph{properties} enjoyed by those operations. To express this we first need the notion of \emph{term}. Fixing a signature $\Sigma$, we recall the usual recursive construction of the set of terms $T_\Sigma^{V}$, for some set of variables $V$:
\[
	T_\Sigma^V \bnfEq V \bnfSep t_0 \bnfSep t_1(T_\Sigma^V) \bnfSep t_2(T_\Sigma^V,T_\Sigma^V) \bnfSep \dots \bnfSep t_n(T_\Sigma^V,\dots,T_\Sigma^V) \bnfSep \dots
\]
In the above, each $t_i$ ranges over the function symbols in $\Sigma_i$.
For any $V$, $T^V_\Sigma$ carries a canonical $\Sigma$-algebra structure:
$\semfun{t}(t_1,t_2,\dots,t_{n_t}) \Defeq t(t_1,t_2,\dots,t_{n_t})$. We call this the term $\Sigma$-algebra over $V$.

\begin{observation}
	The term $\Sigma$-algebra $T_\Sigma^V$ enjoys a universal property: given
	a $\Sigma$-algebra $(A,\semfun{-}_A)$ and function $v \from V\to A$, there is a unique extension to a homomorphism of algebras $\bar v \from T_{\Sigma}^V \to A$. This is just the induction principle associated to the recursive definition of terms.
\end{observation}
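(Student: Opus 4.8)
The plan is to define $\bar v$ by structural recursion on terms, then check that the resulting function is a $\Sigma$-algebra homomorphism, and finally establish uniqueness by structural induction. The recursive clauses are forced: on a variable $x \in V \subseteq T_\Sigma^V$ we must set $\bar v(x) \Defeq v(x)$, and on a compound term $t(s_1,\dots,s_n)$ with $t : \Sigma_n$ and $s_1,\dots,s_n \in T_\Sigma^V$ we set $\bar v(t(s_1,\dots,s_n)) \Defeq \semfun{t}_A(\bar v(s_1),\dots,\bar v(s_n))$. Because $T_\Sigma^V$ is freely generated by the grammar displayed above — every term is \emph{either} a variable \emph{or} of the form $t(s_1,\dots,s_n)$ for a \emph{unique} choice of $t$ and $s_i$ — these clauses determine a well-defined total function $\bar v : T_\Sigma^V \to A$; this is precisely the recursion principle licensed by the inductive definition.

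Next I would verify that $\bar v$ is a homomorphism of $\Sigma$-algebras, i.e.\ that for each $n$ and each $t : \Sigma_n$ the square in Definition~\ref{defn:sigmaalgebra} commutes with $f = \bar v$. But this is immediate: chasing a tuple $(s_1,\dots,s_n) \in (T_\Sigma^V)^n$ around the square, the top-then-right composite gives $\bar v\big(\semfun{t}_{T_\Sigma^V}(s_1,\dots,s_n)\big) = \bar v\big(t(s_1,\dots,s_n)\big)$, which by the second recursive clause equals $\semfun{t}_A(\bar v(s_1),\dots,\bar v(s_n))$ — exactly the left-then-bottom composite. So commutativity holds by definition of $\bar v$, using that the term-algebra operations are the syntactic constructors.

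For uniqueness, suppose $g : T_\Sigma^V \to A$ is any homomorphism with $g|_V = v$. I would show $g = \bar v$ by induction on the structure of terms. The base case $g(x) = v(x) = \bar v(x)$ for $x \in V$ is the hypothesis on $g$. For the inductive step, given a term $t(s_1,\dots,s_n)$ with the claim already known for $s_1,\dots,s_n$, homomorphicity of $g$ gives $g(t(s_1,\dots,s_n)) = \semfun{t}_A(g(s_1),\dots,g(s_n)) = \semfun{t}_A(\bar v(s_1),\dots,\bar v(s_n)) = \bar v(t(s_1,\dots,s_n))$, where the middle equality is the inductive hypothesis and the outer two are homomorphicity of $g$ and the definition of $\bar v$ respectively. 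Hence $g = \bar v$. There is no real obstacle here — the only subtlety worth flagging is that everything rests on the grammar being unambiguous (unique parsing of terms), which is what makes both the recursive \emph{definition} of $\bar v$ and the inductive \emph{proof} of uniqueness legitimate; once that is granted, the argument is entirely routine.
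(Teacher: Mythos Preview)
Your proposal is correct and is precisely the argument the paper gestures at: the paper does not actually give a proof, merely remarking that ``this is just the induction principle associated to the recursive definition of terms,'' and your write-up is a faithful unpacking of exactly that induction/recursion principle.
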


\begin{definition}[$\Sigma$-equation]
	Fixing $V$,
	a $\Sigma$-\emph{equation} is a pair $(s,t) \in T^V_\Sigma\times T^V_\Sigma$;
	we usually write `$s=t$'. % for a $\Sigma$-equation $(s,t)$.
	A $\Sigma$-equation $s=t$ \emph{holds} in $\Sigma$-algebra $(A,\semfun{-}_A)$ if
	for all $v\from V\to A$ we have $\bar v (s)= \bar v (t)$ in $A$.
\end{definition}
Given the signature of monoids, we can express properties
such as associativity: $m(x,m(y,z)) = m(m(x,y),z)$; or commutativity: $m(x,y)=m(y,x)$; etc. The idea is that a set of $\Sigma$-equations \emph{constrains} the choice of algebras $(A,\semfun{-}_A)$ to those where every equation holds.
\begin{definition}[Equational Theory and Variety]\label{def:model}
	A pair $(\Sigma, E)$ where $\Sigma$ is a signature and $E$ a set of $\Sigma$-equations is called an \emph{equational theory}. A \emph{model} of $(\Sigma,E)$ is a $\Sigma$-algebra where every $e:E$ holds.
	The class of models for an equational theory is called a \emph{variety}.
\end{definition}

\begin{example}\label{ex:eqtheorymonoids}
	The equational theory of commutative monoids is
	\[
		(\,\{m,\,e\},\,\{\, m(m(x,y),z) = m(x,m(y,z)),\, m(x,y)=m(y,x),\, m(e,x)=x\,\}\,).
	\]
	The corresponding variety is the class of commutative monoids.
\end{example}
%\begin{example}\label{ex:varieties1} We list some examples of varieties:
%	\begin{itemize}
%		\item The class of sets and functions.
%		%\item The class monoids and monoid homomorphisms.
%		\item The class commutative monoids and monoid homomorphisms.
%		\item The class of groups and group homomorphisms.
%	\end{itemize}
%\end{example}
Some of the most famous results of universal algebra characterise varieties. For example:
%In particular, we recall the celebrated Birkhoff variety theorem.
\begin{theorem}[Birkhoff~\cite{birkhoff1935structure}]\label{birko}
	A class of $\Sigma$-algebras is a variety if and only if it is closed under homomorphic images, subalgebras and products.
\end{theorem}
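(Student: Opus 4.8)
The plan is to prove the two implications separately. The forward implication --- every variety is closed under $H$, $S$, $P$ --- is routine, and all the content sits in the converse.

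For the forward direction I would argue that satisfaction of a single $\Sigma$-equation $s=t$ is preserved by each of the three constructions. \emph{Products:} in $\prod_i A_i$ the operations, and hence the extension maps $\bar v$, are computed coordinatewise, so $s=t$ holds there as soon as it holds in every $A_i$. \emph{Subalgebras:} if $B$ is a subalgebra of $A$ then a valuation $V\to B$ is in particular a valuation $V\to A$, and the inclusion commutes with the extensions $\bar v$, so $s=t$ is inherited from $A$. \emph{Homomorphic images:} if $f\colon A\twoheadrightarrow B$ is surjective and $s=t$ holds in $A$, then given any $w\colon V\to B$ choose a lift $v\colon V\to A$ along $f$ and compute $\bar w(s)=f\bar v(s)=f\bar v(t)=\bar w(t)$. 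Applying this to every equation in $E$ shows that $\Mod(\Sigma,E)$ is closed under $H$, $S$, and $P$.

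For the converse, let $\clK$ be a class of $\Sigma$-algebras closed under $H$, $S$, and $P$. I would let $E$ be the set of \emph{all} $\Sigma$-equations that hold in every member of $\clK$ (a fixed countably infinite variable set suffices, since each term mentions only finitely many variables). Then $\clK\subseteq\Mod(\Sigma,E)$ is immediate, and the work is the reverse inclusion. The key construction is the \emph{relatively free} algebra on a set $X$: put $F_{\clK}(X)\Defeq T^X_\Sigma/{\sim_{\clK}}$, where $s\sim_{\clK}t$ iff $\bar h(s)=\bar h(t)$ for every $A\in\clK$ and every valuation $h\colon X\to A$. The crucial claim is that $F_{\clK}(X)\in\clK$. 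To prove it, observe that for each such $A$ and $h$ the induced homomorphism $\bar h\colon T^X_\Sigma\to A$ has image a subalgebra of $A$ of cardinality at most $\lvert T^X_\Sigma\rvert$; up to isomorphism there is only a \emph{set} of such pairs (subalgebra, valuation), so I may form their product $P\in P(\clK)$. The canonical homomorphism $T^X_\Sigma\to P$ has kernel exactly $\sim_{\clK}$ (two terms agree throughout $P$ iff they agree under every $\bar h$, i.e.\ iff the equation lies in $E$), so $F_{\clK}(X)$ is isomorphic to a subalgebra of $P$, whence $F_{\clK}(X)\in SP(\clK)\subseteq\clK$.

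With this lemma in hand the proof finishes quickly. Given any $B\in\Mod(\Sigma,E)$, use the carrier of $B$ itself as the variable set: by the universal property of term algebras the identity map extends to a \emph{surjective} homomorphism $q\colon T^B_\Sigma\to B$. Since $B$ satisfies every equation in $E$ we have $\sim_{\clK}\subseteq\ker q$, so $q$ factors through a surjection $F_{\clK}(B)\twoheadrightarrow B$; as $F_{\clK}(B)\in\clK$ and $\clK$ is closed under homomorphic images, $B\in\clK$. Hence $\Mod(\Sigma,E)\subseteq\clK$, and together with the trivial inclusion this gives $\clK=\Mod(\Sigma,E)$. The step I expect to be the main obstacle --- really the only one requiring genuine care --- is the size argument underpinning the product $P$: one needs to know that validity of an equation in $\clK$ is already detected by algebras arising as quotients of a term algebra on a fixed variable set, so that a representative \emph{set} of test algebras (and hence a legitimate product) exists; everything else is bookkeeping with the universal property of term algebras recalled in the excerpt.
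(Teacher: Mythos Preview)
Your proof is the standard textbook argument for Birkhoff's HSP theorem and is correct; the forward direction is routine, and for the converse your construction of the relatively free algebra $F_{\clK}(X)$ as a subalgebra of a product of small members of $\clK$, followed by presenting an arbitrary $B\in\Mod(\Sigma,E)$ as a quotient of $F_{\clK}(B)$, is exactly the classical line.

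However, the paper does not actually prove this theorem. It appears in \S\ref{sec:history} as background material, stated with a citation to Birkhoff's original 1935 paper and no accompanying proof. The theorem is there only to motivate the later variety theorem for partial Lawvere theories (Theorem~\ref{thm:main}); the authors make no claim to reprove it. So there is no ``paper's own proof'' to compare against: your write-up supplies an argument where the paper deliberately omits one.

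One minor remark on your argument, since you flagged the size issue yourself: the passage from $s\sim_{\clK}t$ in $T^B_\Sigma$ to an equation in $E$ (over the fixed countable variable set) uses that the finitely many variables $b_1,\dots,b_n$ occurring in $s,t$ are distinct elements of $B$, so that every valuation $\{x_1,\dots,x_n\}\to A$ into a nonempty $A\in\clK$ extends to some $B\to A$. This is harmless but worth saying explicitly, as it is the step linking the two a priori different notions of ``equation valid throughout $\clK$''.
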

\subsection{Props and monoidal theories}
%This paper focuses on the passage from ordinary algebraic theories to \emph{partial} algebraic theories.
Our development is informed by the differences between the algebraic structure of total functions and partial functions. Given the focus on algebra, the notion of prop is useful as a categorical gadget on which to hang an algebraic structure. Moreover, the associated notion of string diagram will lead us to a syntax with which to express partial equational theories by appropriately generalising classical terms. Here we recall the basic definitions of props \cite{compprops}, string diagrams and some of the algebraic structures that are prominent in subsequent sections.
\begin{definition}[Prop \protect{\cite[Ch. 5]{maclane1965categorical}}]
	A \emph{prop} is a symmetric strict monoidal category with set of objects the natural numbers $\N$, where the monoidal product on objects is addition: $m\otimes n := m+n$. A \emph{homomorphism of props} is an identity-on-objects symmetric strict monoidal functor.
\end{definition}

\begin{example}\label{ex:F}
	An important example is the prop $\bbF$ of finite ordinal numbers. In the following, $[m]\Defeq\{1,2,\dots,m\}$.
	The $\bbF$-arrows $m\to n$ are all functions $[m]\to [n]$: composition is function composition, and the monoidal product is ``disjoint union''; i.e.\ for $f_1\from m_1\to n_1$ and $f_2\from m_2\to n_2$,
	\[
		(f_1\otimes f_2)(i)\from m_1+m_2\to n_1+n_2 \Defeq
		\begin{cases}
			f_1(i) & \text{if }i\leq m_1 \\ f_2(i-m_1)+n_1 & \text{otherwise.}
		\end{cases}\]
\end{example}
%$\F$ is the skeleton of the category of finite sets and functions $\FinSet$, indeed there is an equivalence of categories $\F \simeq \FinSet$.

Free props generated from some signature of operations are of particular importance.
\begin{definition}[Monoidal signature]\label{monosigna}
	A monoidal signature $\Gamma$ is a collection of \emph{generators} $\gamma : \Gamma$, each with an arity $\text{ar}(\gamma) : \N$ and coarity $\text{coar}(\gamma) : \N$.
\end{definition}

Concrete terms can be given a BNF description, as follows:
\begin{equation}
	\label{eq:syntax}%\tag{BNFT}
	c \bnfEq \gamma\in \Gamma \bnfSep \,\raisebox{.25em}{} \bnfSep \raisebox{.25em}{} \bnfSep \!\!\symDiag \bnfSep c\otimes c \bnfSep c \comp c
\end{equation}
Arities and coarities are not handled in the BNF but with an associated sorting discipline, shown below. We only consider terms that have a sort, which is unique if it exists.
\begin{gather*}
	\reductionRule{}{ \typeJudgment{\gamma : \Gamma}{\gamma}{\sort{\text{ar}(\gamma)}{\text{coar}(\gamma)}} }\qquad
	\reductionRule{}{ \typeJudgment{}{\raisebox{.25em}{}}{\sort{0}{0}} }\qquad
	\reductionRule{}{ \typeJudgment{}{\raisebox{.25em}{}}{\sort{1}{1}} }\qquad
	\reductionRule{}{ \typeJudgment{}{\symDiag}{\sort{2}{2}} }\\[1em] 
	%===
	\reductionRule{ \typeJudgment{}{c}{\sort{n}{z}} \quad \typeJudgment{}{d}{\sort{z}{m}} }
	{ \typeJudgment{}{c\comp d}{\sort{n}{m}} }\qquad
	\reductionRule{ \typeJudgment{}{c}{\sort{n}{m}} \quad \typeJudgment{}{d}{\sort{r}{z}} }
	{ \typeJudgment{}{c \otimes d}{\sort{n+r}{m+z}} }
\end{gather*}
The idea is that the sort $c : \sort{m}{n}$ counts the number of ``dangling wires'' of each term. Every sortable term generated from~\eqref{eq:syntax} has a diagrammatic representation.
The convention for $\gamma : \Gamma$ is to draw it as a box with $\text{ar}(\gamma)$ ``dangling wires'' on the left and $\text{coar}(\gamma)$ on the right:
\[
	% \text{FIX }\sigma \ra \gamma : \qquad	\graffle[5pt]{.6cm}{sigma}.
	\text{ar}(\gamma)
	\Big\{
	\begin{tikzpicture}[baseline=(current bounding box.center)]
		\id\up[.5]\id\down\id
		\draw[fill=white] (\len,\len-.1) rectangle (3*\len,3*\len) node[pos=.5]
			{$\gamma$};
		\up[1.625]{\node {\scalebox{.4}{$\vdots$}};}
		\step{\up[1.625]{\node {\scalebox{.4}{$\vdots$}};}}
		% \up[2]{\node {\tiny $\vdots$};}
	\end{tikzpicture}
	\Big\}\text{coar}(\gamma)
\]
The conventions for the  \eqref{eq:syntax} operations are:
$
	c \comp c' \text{ is drawn}
	\graffle[9pt]{.8cm}{seqcomp}
	\;
	\text{and } c \otimes c' \text{ is drawn}
	\graffle[15pt]{1.2cm}{tensor}.
$
The sorting discipline ensures that the convention for $\comp$ makes sense.

\begin{example}\label{ex:generators}
	Consider the following signature, where the (co)arities are
	apparent from the
	\begin{equation}\label{eq:monoidgenerators} \tag{CMG}
		\Gamma \Defeq \Big\{ \raisebox{.25em}{},\ \raisebox{.25em}{} \Big\}
	\end{equation}
	glyphs. The term
	$(\raisebox{.25em}{} \otimes (\raisebox{.25em}{} \otimes \raisebox{.25em}{})) \comp
		((\raisebox{.25em}{} \otimes \raisebox{.25em}{}) \comp \symDiag)$
	has sort $\sort{3}{2}$ and diagram:
	\[
		\graffle{1.5cm}{examplediag}
	\]
	where the ``dotted line'' boxes serve the role of parentheses.
\end{example}

Terms of~\eqref{eq:syntax} are quotiented by the laws of symmetric strict monoidal categories. We do not go into the details here, but these are closely connected with the diagrammatic conventions. Indeed, they allow us to discard the ``dotted line'' boxes and focus \emph{only} on the connectivity between the generators. For example the following two diagrams are in the same equivalence class of terms:
\[
	\graffle[13pt]{1.3cm}{examplediaglhs}
	\quad=\quad
	\graffle[10pt]{1cm}{examplediagrhs}
\]
We refer to equivalence classes $[c]\from m\to n$ as \emph{string diagrams}.
\begin{definition}
	The free prop $\bbX_\Gamma$ on $\Gamma$ has as arrows $m\to n$ string diagrams $[c]:\sort{m}{n}$.
\end{definition}

String diagrams can be used to specify additional equations that specify algebraic structure.
\begin{definition}[Monoidal theory \cite{compprops}]
	For a monoidal signature $\Gamma$, a $\Gamma$-equation is a pair $([c],[d])$ of equally-sorted string diagrams; we usually write  `$[c]=[d]$'. % for a $\Gamma$-equation $([c],[d])$.
	A \emph{monoidal theory} is a pair $(\Gamma,F)$ where $F$ is a set of $\Gamma$-equations.
\end{definition}
Given a monoidal theory $(\Gamma,F)$, the induced prop $\bbX_{(\Gamma,F)}$ can be obtained by taking a coequaliser in $\Cat$. It can alternatively be given an explicit description as follows: as arrows $[m]\to [n]$ it has arrows of $\bbX_\Gamma$ quotiented by the smallest congruence containing $F$.

\begin{example}\label{ex:propmonoid}
	Consider the signature~\eqref{eq:monoidgenerators} and the following set of equations:
	\begin{equation}\label{eq:monoid}\tag{CM}
		E \Defeq
		\Big\{\hspace{.5em}
		\raisebox{.5em}{\begin{tikzpicture}[xscale=.75,baseline=(current bounding box.center)]
				\umult
				\lid
				\step{\mult}
				\ezs{\lmult
					\uid
					\step{\mult}}{xshift=3cm,yshift=\len cm}
				\node at (2.5,2*\len) {$=$};
			\end{tikzpicture}}\;,\hspace{.5em}
		\raisebox{.5em}{\begin{tikzpicture}[xscale=.75,baseline=(current bounding box.center)]
				\braid\step{\mult}\ezs{\mult}{xshift=3cm}
				\node at (2.5,2*\len) {$=$};
			\end{tikzpicture}}\;,\hspace{.5em}
		\raisebox{.5em}{\begin{tikzpicture}[xscale=.75,baseline=(current bounding box.center)]
				\ezs{\unit}{yshift=\len cm}\lid
				\ezs{\id}{xshift=3cm}
				\step{\mult}
				\node at (2.5,2*\len) {$=$};
			\end{tikzpicture}}
		\hspace{.5em}\Big\}.
	\end{equation}
	The resulting prop $\bbC\bbM$ is the prop of \emph{commutative monoids}. The equations, from left to right, express associativity, commutativity and unitality.
	\begin{remark}\label{reasoning_w_diag}
		String diagrams in $\bbX_{(\Gamma,F)}$ are amenable to equational reasoning, often referred to as \emph{diagrammatic reasoning} in this context: if $([c],[d]) \in F$ then substituting $c$ for $d$ inside any context is sound. For example in $\bbC\bbM$ the set of equations contains only one of the unit laws. The other may be derived:
		\begin{gather*}
\begin{tikzpicture}[xscale=.75,baseline=(current bounding box.center)]
\down\unit\uid
\step{\mult}
\ezs{\uid\down\unit
\step{\mult}
\step\akasa
}{xshift=3cm}
\ezs{\uid\down\unit
\step\braid
\ezs{\step[.5]\akasa}{xscale=2}
\step[2]\mult
}{xshift=6cm}
\ezs{
\uid\down\unit
\step\braid
\step[2]\mult
}{xshift=10cm}
\eql{2.5}
\eql{5.5}
\eql{9.5}
\end{tikzpicture}\\[-3mm]
\begin{tikzpicture}[xscale=.75,baseline=(current bounding box.center)]
\up\unit\lid
\step{\mult}
\ezs{\up\unit\lid
\step{\mult}
\ezs{\step[.25]\akasa}{xscale=1.25}
}{xshift=3cm}
\ezs{
\id
\step[.25]{\ezs{\akasa}{xscale=.5}}
}{xshift=6cm}
\ezs{
\id}{xshift=8cm}
\eql{-.5}
\eql{2.5}
\eql{5.5}
\eql{7.5}
\end{tikzpicture}
\end{gather*}
		We typically omit the ``dotted line'' boxes in such chains of reasoning.
	\end{remark}
\end{example}
Interestingly, $\bbC\bbM$ can be seen as the algebraic characterisation of $\F$.
\begin{observation}[\cite{compprops}]\label{obs:Fmonoids}
	As props, $\F\cong \bbC\bbM$.
\end{observation}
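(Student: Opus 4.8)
The plan is to exhibit an explicit identity-on-objects symmetric strict monoidal functor $F\colon \bbC\bbM \to \F$ and to prove that it is a bijection on each hom-set; since $\F$ is a prop this already makes it an isomorphism. Because $\bbC\bbM = \bbX_{(\text{CMG},\text{CM})}$ is presented by generators and relations, defining $F$ amounts to choosing arrows of $\F$ of matching sort for the two generators in~\eqref{eq:monoidgenerators} and checking that the three defining equations of~\eqref{eq:monoid} are respected. There is exactly one function $[2]\to[1]$ and exactly one function $[0]\to[1]$, and I take these as $F(\multDiag)$ and $F(\unitDiag)$. Associativity and commutativity then hold in $\F$ for free, because $\mathrm{Hom}_\F(3,1)$ and $\mathrm{Hom}_\F(2,1)$ are singletons; and the unit law holds because $F$ sends its left-hand side to the unique endofunction of $[1]$, which is the identity. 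So $F$ is a well-defined prop homomorphism. (Conceptually, this packages the assertion that $\F$ is the free symmetric monoidal category on a commutative monoid object, which is exactly what $\bbC\bbM$ is by construction.)

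Fullness of $F$: given $f\colon[m]\to[n]$ I would build a preimage in two layers. Let $k_j=|f^{-1}(j)|$ and let $\sigma_f$ be any permutation of $[m]$ that lists the elements of $f^{-1}(1)$, then those of $f^{-1}(2)$, and so on. For each $r\in\N$ define the iterated multiplication $\mu_r\colon r\to 1$ by $\mu_0 \Defeq \unitDiag$, $\mu_1 \Defeq \idDiag$, and $\mu_{r+1}\Defeq (\mu_r\otimes \idDiag)\comp \multDiag$; note that the unit law makes this recursion consistent even at $r=0$. Then $[\sigma_f]\comp(\mu_{k_1}\otimes\cdots\otimes\mu_{k_n})$ has sort $\sort{m}{n}$, and $F$ sends it to $f$. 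Hence $F$ is full.

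Faithfulness is the crux. I would prove that every arrow $c\colon m\to n$ of $\bbC\bbM$ equals, modulo the equations~\eqref{eq:monoid}, a normal form of exactly the shape produced above: a permutation followed by a parallel composite $\mu_{k_1}\otimes\cdots\otimes\mu_{k_n}$ of iterated multiplications, with $\sum_j k_j = m$. This goes by structural induction on string diagrams: identities, symmetries and the two generators are (or at once reduce to) this shape, and one checks that the class of such normal forms is closed under $\comp$ and $\otimes$ — the essential moves being that associativity and commutativity let one merge two iterated multiplications feeding a common output and freely permute the wires entering a given output block, while unitality absorbs stray units. Two normal forms of this shape whose underlying functions in $\F$ agree are then equal in $\bbC\bbM$, since that function pins down the block sizes $k_j$ (hence the factors $\mu_{k_j}$) and determines the permutation up to reordering within blocks, which~\eqref{eq:monoid} identifies. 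As $F$ sends the normal form of $c$ to the function read off from its blocks, this shows $F$ is injective on hom-sets, completing the proof.

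The main obstacle is precisely the normal-form/closure argument in the last paragraph: establishing that "permutation-then-parallel-multiplications" diagrams are closed under sequential and parallel composition up to~\eqref{eq:monoid}, and that distinct underlying functions cannot share such a normal form. This is a completeness (confluence) statement for the rewriting system generated by~\eqref{eq:monoid}, and essentially all of the work lives there; everything else is routine bookkeeping with string diagrams.
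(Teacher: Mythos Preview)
The paper does not prove this observation: it simply cites \cite{compprops} and moves on, offering only the informal Remark~\ref{rem:pictures} that string diagrams in $\bbC\bbM$ are ``pictures of functions''. Your proposal therefore supplies strictly more than the paper does.

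Your argument is the standard one and is correct in outline. Defining $F$ on generators and checking the relations is unproblematic for the reasons you give. The fullness argument is fine. For faithfulness, the normal-form strategy you describe is exactly how such results are established (and is essentially what Lack's distributive-law machinery in \cite{compprops} encodes abstractly: $\bbC\bbM$ factors as permutations composed with order-preserving maps, i.e.\ the prop of commutative-monoid diagrams without the symmetry, and the distributive law lets one push all symmetries to the left). The closure of your normal forms under $\comp$ and $\otimes$ is genuine work but routine: the only nontrivial case is composing $\mu_{k_1}\otimes\cdots\otimes\mu_{k_n}$ with a permutation of $[n]$ followed by another layer of $\mu$'s, and here commutativity of $\multDiag$ is exactly what lets the permutation of outputs be absorbed into a permutation of inputs. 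Your observation that two normal forms with the same underlying function are equal in $\bbC\bbM$ is also correct, since the block sizes are determined and the residual ambiguity in $\sigma_f$ lies in $\prod_j S_{k_j}$, which commutativity kills.

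One small point: you should also remark that $F$ is surjective on objects (trivial, since both props have object set $\N$ and $F$ is identity-on-objects by construction), so that full and faithful suffices for isomorphism of props rather than mere equivalence.
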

\begin{remark}\label{rem:pictures}
	In fact, arrows of $\bbC\bbM$ can be intuitively understood as ``pictures of functions''.
	For example, the function $f\from 2\ra 2$ where $f(1)=f(2)=1$ is drawn\ \begin{tikzpicture}[baseline=(current bounding box.center)].
		\up[2]\mult
		\unit
	\end{tikzpicture}
\end{remark}
\begin{example}\label{ex:comonoids}
	The theory of commutative \emph{comonoids} plays an important role for us. The data is:
	\begin{equation}
		\label{eq:comonoidgenerators} \tag{CCMG}
		\raisebox{.5em}{\begin{tikzpicture}[baseline=(current bounding box.center)]
				\comult
			\end{tikzpicture}\hspace{2em}
			\begin{tikzpicture}[baseline=(current bounding box.center)]
				\counit
			\end{tikzpicture}}
	\end{equation}
	\vspace{-.5cm}
	%and the equations are:
	\begin{equation}\label{eq:comonoid}\tag{CCM}
		\raisebox{.5em}{\begin{tikzpicture}[xscale=-.75,baseline=(current bounding box.center)]
				\umult
				\lid
				\step{\mult}
				\ezs{\lmult
					\uid
					\step{\mult}}{xshift=3cm,yshift=\len cm}
				\node at (2.5,2*\len) {$=$};
			\end{tikzpicture}\hspace{2em}
			\begin{tikzpicture}[xscale=-.75,baseline=(current bounding box.center)]
				\braid\step{\mult}\ezs{\mult}{xshift=3cm}
				\node at (2.5,2*\len) {$=$};
			\end{tikzpicture}\hspace{2em}
			\begin{tikzpicture}[xscale=-.75,baseline=(current bounding box.center)]
				\ezs{\unit}{yshift=\len cm}\lid
				\ezs{\id}{xshift=3cm}
				\step{\mult}
				\node at (2.5,2*\len) {$=$};
			\end{tikzpicture}}
	\end{equation}
	Let $\bbC\bbC$ be the prop induced from the monoidal theory (\eqref{eq:comonoidgenerators},\eqref{eq:comonoid}).
\end{example}
Given that~\eqref{eq:comonoidgenerators} and~\eqref{eq:comonoid} are mirrored~
\eqref{eq:monoidgenerators} and~\eqref{eq:monoid}, Observation~\ref{obs:Fmonoids} gives:
\begin{observation}\label{obs:Fopcomonoids}
	As props, $\F^\op \cong \bbC\bbC$.
\end{observation}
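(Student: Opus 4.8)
The plan is to deduce this directly from Observation~\ref{obs:Fmonoids} ($\F \cong \bbC\bbM$) by exploiting the evident duality between the presentation of commutative monoids and that of commutative comonoids. Indeed, the generators \eqref{eq:comonoidgenerators} are literally the left--right reflections of the generators \eqref{eq:monoidgenerators} (the comultiplication $\comultDiag$ has sort $\sort{1}{2}$, which is the swap of the sort $\sort{2}{1}$ of $\multDiag$, and similarly for the (co)unit), and each equation of \eqref{eq:comonoid} is obtained from the corresponding equation of \eqref{eq:monoid} by reversing both sides. Reflecting a string diagram is exactly the passage from an arrow of a prop to the corresponding arrow of the opposite prop, so the statement should fall out once one checks that taking opposites is compatible with the prop structure and with presentations by monoidal theories.

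Concretely I would proceed in three steps. First, record that $(-)^\op$ restricts to an involution on the category $\prop$: if $\bbP$ is a prop then $\bbP^\op$ is again a symmetric strict monoidal category with object monoid $(\N,+)$ --- using that $+$ is commutative, so the monoidal product on objects is unchanged --- the opposite of a symmetric monoidal category is canonically symmetric monoidal, strictness is preserved, and a homomorphism of props dualises to one. Second, observe that, as monoidal theories, $\mathrm{CCMG} = \mathrm{CMG}^\op$ and $\mathrm{CCM} = \mathrm{CM}^\op$ in the sense just described. Third, note that forming the prop presented by a monoidal theory commutes with $(-)^\op$, i.e.\ $\bbX_{(\Gamma^\op,F^\op)} \cong (\bbX_{(\Gamma,F)})^\op$: since $(-)^\op$ is an automorphism of $\Cat$ it preserves the coequaliser defining $\bbX_{(\Gamma,F)}$ and carries the parallel pair associated to $(\Gamma,F)$ to the one associated to $(\Gamma^\op,F^\op)$; equivalently, at the level of the explicit description, diagram reflection is an identity-on-objects bijection $\bbX_\Gamma(m,n)\to\bbX_{\Gamma^\op}(n,m)$ that reverses composition, preserves tensor and symmetries, and sends the smallest congruence containing $F$ to the smallest one containing $F^\op$. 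Chaining these gives $\bbC\bbC = \bbX_{(\mathrm{CCMG},\mathrm{CCM})} = \bbX_{(\mathrm{CMG}^\op,\mathrm{CM}^\op)} \cong (\bbX_{(\mathrm{CMG},\mathrm{CM})})^\op = \bbC\bbM^\op \cong \F^\op$, the last isomorphism being Observation~\ref{obs:Fmonoids}.

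I do not anticipate a genuine obstacle: the mathematical content is entirely carried by Observation~\ref{obs:Fmonoids}, and the only thing requiring care is the bookkeeping that $(-)^\op$ genuinely preserves the prop structure (symmetry, strictness) and commutes with the presentation. If one prefers to sidestep the abstract ``opposite commutes with coequalisers'' argument, the same conclusion follows by exhibiting diagram reflection directly as a well-defined, invertible, identity-on-objects, composition-reversing symmetric monoidal functor $\bbC\bbM \to (\bbC\bbC)^\op$, whose well-definedness is immediate because the defining equations \eqref{eq:monoid} and \eqref{eq:comonoid} are interchanged by reflection.
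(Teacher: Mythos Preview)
Your proposal is correct and takes essentially the same approach as the paper: the paper's entire justification is the single sentence ``Given that~\eqref{eq:comonoidgenerators} and~\eqref{eq:comonoid} are mirrored~\eqref{eq:monoidgenerators} and~\eqref{eq:monoid}, Observation~\ref{obs:Fmonoids} gives:'', and you have simply spelled out carefully the bookkeeping that this sentence leaves implicit.
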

While we have specialised our discussion of string diagrams as the syntax of props, it is well-known that they can be used
as a sound calculus in any symmetric (strict) monoidal category. Roughly speaking, objects are represented by wires, and morphisms by boxes.
%We end our discussion of monoidal theories by observing that the machinery above can be adapted to reason about morphisms in arbitrary
%symmetric strict monoidal categories: the arities and coarities become lists of generating objects, and the sorting discipline changes to match.
\subsection{Fox's theorem}
%\fos{Pawel introduces this section with a crisp explaining sentence about its relationship with the previous?}
Equational and monoidal theories are linked by Fox's theorem (\cite{Fox76}), recalled here -- this will be explained in~\S\ref{subsec:termsasdiagrams}.
%It is possible to capture important universal properties in terms of monoidal theories. Here we discuss the case of
%
%Recall that
\emph{Cartesian} categories are categories with finite products, and \emph{cartesian functors} preserve them. Fox showed %(Theorem~\ref{thm:fox})
that cartesian categories are exactly those that have a certain algebraic structure.

A commutative comonoid on an object $X$ of a symmetric monoidal category $\bbX$ is a triple $(X,\delta_X,\varepsilon_X)$ s.t.\  $\delta_X : X \to X \otimes X$ and $\varepsilon_X : X \to I$, depicted as $\raisebox{.25em}{}$ and $\raisebox{.25em}{}$ respectively, and these satisfy (CCM). %Now if $\bbX$ is a symmetric monoidal category in which
If all objects %$X$ of $\bbX$
are so equipped,  %with such a commutative comonoid structure,
%$(X,\delta_X,\varepsilon_X)$,
then the structures are \emph{coherent} if for all objects $X,Y$:
\begin{equation}\label{eq:coherent}\tag{coherent}
	\begin{tikzpicture}[xscale=.75,baseline=(current bounding box.center)]
		\comult
		\node[left] at (0,2*\len) {$\scriptscriptstyle X\otimes Y$};
		\node[right] at (1,\len) {$\scriptscriptstyle X\otimes Y$};
		\node[right] at (1,3*\len) {$\scriptscriptstyle X\otimes Y$};
		\ezs{
			\comult
			\ezs{\comult}{yshift=1cm}
			\ezs{\braid}{xshift=1cm, yshift=.5cm}
			\ezs{\lid}{xshift=1cm}
			\ezs{\uid}{xshift=1cm, yshift=1cm}
			\node[left] at (0,2*\len) {$\scriptscriptstyle X$};
			\node[left] at (0,6*\len) {$\scriptscriptstyle Y$};
			\node[right] at (2,\len) {$\scriptscriptstyle X$};
			\node[right] at (2,5*\len) {$\scriptscriptstyle X$};
			\node[right] at (2,3*\len) {$\scriptscriptstyle Y$};
			\node[right] at (2,7*\len) {$\scriptscriptstyle Y$};
		}{xshift=3cm,yshift=-.5cm}
		\node at (2.25,2*\len) {\Scale[1.25]{=}};
	\end{tikzpicture}
	\qquad
	\begin{tikzpicture}[baseline=(current bounding box.center)]
		\counit
		\node[left] at (0,2*\len) {$\scriptscriptstyle X\otimes Y$};
		\ezs{
			\ezs{\counit}{yshift=-\len cm}
			\ezs{\counit}{yshift= \len cm}
			\node[left] at (0,\len) {$\scriptscriptstyle X$};
			\node[left] at (0,3*\len) {$\scriptscriptstyle Y$};
		}{xshift=2cm}
		\node at (1.225,2*\len) {\Scale[1.25]{=}};
	\end{tikzpicture}
\end{equation}
Further, we say that the $\delta$ and $\varepsilon$ are \emph{natural} if for any arrow $f : X \to Y$ of $\bbX$, we have:
\begin{equation}\label{eq:natural}\tag{natural}
	\begin{tikzpicture}[baseline=(current bounding box.center),xscale=.5]
		\mor{f}
		\ezs{\comult}{xshift=1cm}
		\node[left] at (0,2*\len) {$\scriptscriptstyle X$};
		\node[right] at (2,\len) {$\scriptscriptstyle Y$};
		\node[right] at (2,3*\len) {$\scriptscriptstyle Y$};
		\ezs{
			\comult
			\ezs{\umor{f}}{xshift=1cm}
			\ezs{\lmor{f}}{xshift=1cm}
			\node[left] at (0,2*\len) {$\scriptscriptstyle X$};
			\node[right] at (2,\len) {$\scriptscriptstyle Y$};
			\node[right] at (2,3*\len) {$\scriptscriptstyle Y$};
		}{xshift=5cm}
		\node at (3.5,2*\len)  {\Scale[1.25]{=}};
	\end{tikzpicture}
	\hspace{1cm}
	\begin{tikzpicture}[baseline=(current bounding box.center),xscale=.5]
		\mor{f}
		\ezs{\counit}{xshift=1cm}
		\ezs{\counit}{xshift=4cm,xscale=2}
		\node[left] at (0,2*\len) {$\scriptscriptstyle X$}; %here
		\node at (2.75,2*\len) {$=$};
	\end{tikzpicture}
\end{equation}

\begin{theorem}[\protect{\cite{Fox76}}]\label{thm:fox}
  A cartesian category is the same thing as a symmetric monoidal category
	where every object is equipped with a \eqref{eq:coherent} and \eqref{eq:natural} commutative comonoid structure.
	%A symmetric monoidal category is cartesian if and only if
	%every object can be equipped with a commutative comonoid structure that is \eqref{eq:coherent} and \eqref{eq:natural}.
\end{theorem}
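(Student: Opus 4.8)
The plan is to establish the correspondence in both directions and then observe that the two assignments are mutually inverse. For the easy direction, suppose $\bbX$ has finite products; take $\otimes$ to be the chosen product, $I$ the terminal object, with the canonical associators, unitors and symmetry. On each object the diagonal $\Delta_X \colon X \to X\otimes X$ and the unique arrow ${!}_X \colon X \to I$ form a triple, and the three equations \eqref{eq:comonoid} together with the \eqref{eq:coherent} and \eqref{eq:natural} conditions all follow from the universal property of products: each is checked by post-composing with projections, after which both sides collapse to the same canonical arrow. In particular \eqref{eq:natural} is exactly the familiar fact that \emph{every} morphism of a cartesian category is automatically a homomorphism of the diagonal comonoids.

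For the converse, let $\bbX$ be symmetric monoidal with a \eqref{eq:coherent} and \eqref{eq:natural} commutative comonoid $(X,\delta_X,\varepsilon_X)$ on each object. First, $I$ is terminal: for any $X$ there is $\varepsilon_X \colon X \to I$, and if $f\colon X \to I$ is arbitrary then naturality of $\varepsilon$ at $f$ gives $\varepsilon_I f = \varepsilon_X$, while the counit law forces $\varepsilon_I = \mathrm{id}_I$, so $f = \varepsilon_X$. Next, define projections $p\colon X\otimes Y \to X$ and $q\colon X\otimes Y \to Y$ by discarding one tensor factor with $\varepsilon$ (and composing with the relevant unitor isomorphism), and for $f\colon Z\to X$, $g\colon Z\to Y$ set $\langle f,g\rangle \Defeq (f\otimes g)\,\delta_Z$. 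Using naturality of $\varepsilon$ (to absorb $g$, resp.\ $f$, into a counit) together with the counit law for $Z$ one gets $p\langle f,g\rangle = f$ and $q\langle f,g\rangle = g$.

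The crux is uniqueness of the pairing. Given $h\colon Z\to X\otimes Y$, I would show $\langle ph, qh\rangle = h$: naturality of $\delta$ at $h$ rewrites $(ph\otimes qh)\,\delta_Z = (p\otimes q)(h\otimes h)\,\delta_Z = (p\otimes q)\,\delta_{X\otimes Y}\,h$, and then the coherence condition \eqref{eq:coherent} together with the counit laws identifies $(p\otimes q)\,\delta_{X\otimes Y}$ with $\mathrm{id}_{X\otimes Y}$, whence $h = \langle ph, qh\rangle$. Thus $X\otimes Y$ with $p,q$ is a product and $I$ is terminal, so $\bbX$ is cartesian. I expect the identity $(p\otimes q)\,\delta_{X\otimes Y} = \mathrm{id}_{X\otimes Y}$ — the one place where \eqref{eq:coherent} is genuinely used — to be the main technical hurdle; the remaining equalities are diagram chases that follow the string-diagrammatic pictures more or less directly.

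Finally one checks that the two constructions are mutually inverse, so the statement is a genuine identification rather than merely a pair of passages. Starting from a cartesian $\bbX$, the associated comonoid is $(\Delta_X,{!}_X)$ and the reconstructed pairing is $(f\otimes g)\Delta_Z = \langle f,g\rangle$, recovering the original product; starting from a comonoid structure, the diagonal of the reconstructed product is $(\mathrm{id}\otimes\mathrm{id})\delta_X = \delta_X$ and its terminal map is $\varepsilon_X$, and the reconstructed symmetric monoidal structure agrees with the given one — this last matching of associator, unitors and symmetry again leaning on \eqref{eq:coherent} and \eqref{eq:natural}. The same bookkeeping shows that a functor is cartesian precisely when it is symmetric monoidal (and then automatically preserves the comonoids), so the correspondence is compatible with morphisms as well.
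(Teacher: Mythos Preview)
The paper does not supply its own proof of this statement: Theorem~\ref{thm:fox} is quoted from \cite{Fox76} and used as background, so there is no argument in the text to compare your proposal against. That said, your sketch is the standard proof of Fox's theorem and is essentially correct; the decomposition into ``$I$ terminal'' via naturality of $\varepsilon$, then ``$X\otimes Y$ is a product'' via $\langle f,g\rangle = (f\otimes g)\delta_Z$, with uniqueness coming from $(p\otimes q)\delta_{X\otimes Y} = \mathrm{id}$, is exactly how one usually argues. One small point to tighten: your claim that ``the counit law forces $\varepsilon_I = \mathrm{id}_I$'' is not quite immediate from the counit axiom alone; one typically either includes the nullary coherence condition (that $\delta_I,\varepsilon_I$ are the structural isomorphisms) as part of \eqref{eq:coherent}, or derives it by combining naturality of $\varepsilon$ at $\varepsilon_I$ (giving $\varepsilon_I\varepsilon_I = \varepsilon_I$) with the counit law on $I$. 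Either way this is a routine fix and does not affect the overall strategy.
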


In light of Observation~\ref{obs:Fopcomonoids}, we know that a commutative comonoid structure on $X$ is equivalently a cartesian functor $\clX:\bbF^{\op}\to \bbX$ where $\clX[1]=X$. The action of $\clX$ on objects is determined by its action on 1, and the generators give arrows $\clX(\raisebox{.25em}{}) = \delta_X \from X \to X\otimes X$ and $\clX({\scriptstyle X}\,\raisebox{.25em}{}) = \varepsilon_X \from X \to I$ of $\bbX$ which satisfy (CCM). Thus we may specialize Theorem~\ref{thm:fox}, to props as follows:

\begin{corollary}\label{cor:foxprop}
	A prop $\bbX$ is cartesian (with categorical product the monoidal product) if and only if there is a homomorphism of props $\bbF^\op\to\bbX$ and
	the picked out comonoid structure is \eqref{eq:natural}.
\end{corollary}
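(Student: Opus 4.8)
The plan is to read the corollary straight off Fox's Theorem (Theorem~\ref{thm:fox}) together with the identification $\F^\op\cong\bbC\bbC$ of Observation~\ref{obs:Fopcomonoids}. The only substantive point is that, for a prop, the coherence condition \eqref{eq:coherent} is automatic, which is precisely why it does not appear in the statement; everything else is bookkeeping.

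First I would unwind what a homomorphism of props $\clX\from\F^\op\to\bbX$ amounts to. Since $\F^\op\cong\bbC\bbC$ and $\bbC\bbC$ is the prop presented by the monoidal theory of commutative comonoids (Example~\ref{ex:comonoids}), the universal property of a prop presented by a monoidal theory says that $\clX$ is exactly a choice of arrows $\delta\from 1\to 2$ and $\varepsilon\from 1\to 0$ in $\bbX$ satisfying the comonoid laws (CCM), i.e.\ a commutative comonoid structure on the generating object $1$. Moreover, every object of the prop $\bbX$ is an iterated monoidal product $1^{\otimes n}$ (with $1^{\otimes 0}=I$), so such an $\clX$ equips \emph{every} object with a commutative comonoid structure: the image under $\clX$ of the comonoid structure carried by $n$ in $\F^\op$, equivalently the $n$-fold tensor power of $(\delta,\varepsilon)$ with its wires suitably permuted. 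Because $\clX$ is a strict symmetric monoidal functor and the comonoid structures in $\F^\op$ are coherent in the sense of \eqref{eq:coherent} --- $\F^\op$ being a cartesian category with product $\otimes$, being dual to the coproduct $\otimes$ of $\F$, so that coherence there follows from Fox's Theorem (or can be checked directly on string diagrams) --- applying the monoidal functor $\clX$ shows the induced structures on $\bbX$ are coherent as well; on $I$ the structure is forced. Conversely, a coherent family of commutative comonoid structures on $\bbX$ is determined by its component on $1$, hence comes from a unique such $\clX$. In summary: homomorphisms of props $\F^\op\to\bbX$ correspond bijectively to coherent families of commutative comonoid structures on $\bbX$, with the structure on an object $X$ ``picked out'' as $\clX$ applied to the comonoid generators at $X$.

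It then remains only to invoke Fox's Theorem: a prop $\bbX$ is cartesian, with categorical product the monoidal product, if and only if it carries a family of commutative comonoid structures that is both coherent and \eqref{eq:natural}; by the previous paragraph this is the same as the existence of a homomorphism of props $\F^\op\to\bbX$ whose picked-out comonoid structure additionally satisfies \eqref{eq:natural}. This is exactly the claimed equivalence. The one step that genuinely requires care --- and the place I would expect any difficulty --- is the automaticity of coherence \eqref{eq:coherent} for props, i.e.\ that the comonoid structure on $1^{\otimes n}$ is forced to be the ``$n$ copies side by side with a reshuffle'' one; once that is in hand, the rest reduces to Observation~\ref{obs:Fopcomonoids} and the universal property of $\bbX_{(\Gamma,F)}$.
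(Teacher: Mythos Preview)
Your proposal is correct and follows essentially the same approach as the paper: the paper's ``proof'' is the paragraph immediately preceding the corollary, which invokes Observation~\ref{obs:Fopcomonoids} to identify a comonoid structure with a prop homomorphism $\F^\op\to\bbX$ and then specializes Theorem~\ref{thm:fox}. You are if anything more explicit than the paper about why coherence~\eqref{eq:coherent} is automatic for props---the paper leaves this implicit in the phrase ``specialize to props''---but the route is the same.
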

It is easy to show that a coherent and natural commutative comonoid structure, if it exists, is unique. An easy consequence of Theorem~\ref{thm:fox} is that a cartesian functor is precisely a symmetric monoidal functor that preserves the comonoid structure.
This, combined with Corollary~\ref{cor:foxprop}, gives:
%Together with the restatement of the theorem for props this gives that:
\begin{proposition}\label{Fop_is_free_on_1}
	The prop $\bbF^\op$ is the free cartesian category on a single object.
\end{proposition}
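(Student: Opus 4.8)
The plan is to derive the statement from Fox's theorem (Theorem~\ref{thm:fox}), the isomorphism $\bbF^\op\cong\bbC\bbC$ (Observation~\ref{obs:Fopcomonoids}), and the universal property of $\bbC\bbC$ as the prop presented by the theory of commutative comonoids. First I would check that $\bbF^\op$ is itself cartesian. Arguing through Observation~\ref{obs:Fopcomonoids} with $\bbC\bbC$: the object $1$ carries the comonoid $(\comultDiag,\counitDiag)$, which satisfies (CCM) by fiat, and transporting this along the monoidal product equips every object $n=1^{\otimes n}$ with a comonoid structure that is \eqref{eq:coherent} by construction. Naturality of this structure at an arbitrary morphism of $\bbC\bbC$ is then automatic: naturality is closed under $\comp$ and $\otimes$ and holds trivially at identities and symmetries, so it is enough to verify it at the generators $\comultDiag$ and $\counitDiag$, where it is precisely (CCM). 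Hence Theorem~\ref{thm:fox} (equivalently, Corollary~\ref{cor:foxprop} applied to the identity homomorphism $\bbF^\op\to\bbF^\op$) shows $\bbC\bbC\cong\bbF^\op$ is cartesian, with categorical product given by the monoidal product.

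Next I would establish the universal property. Fix a cartesian category $\bbX$ --- which I may take to be strict monoidal, since freeness is a notion up to equivalence --- and an object $X$ of $\bbX$. By Theorem~\ref{thm:fox}, $X$ carries a \eqref{eq:coherent} and \eqref{eq:natural} commutative comonoid structure $(X,\delta_X,\varepsilon_X)$, and this is unique by the remark following Corollary~\ref{cor:foxprop}. Regarding $(X,\delta_X,\varepsilon_X)$ as a commutative comonoid in the symmetric monoidal category $\bbX$ and invoking the universal property of $\bbC\bbC$ --- symmetric monoidal functors $\bbC\bbC\to\bbX$ correspond exactly to commutative comonoids in $\bbX$ --- yields a symmetric monoidal functor $F\from\bbF^\op\to\bbX$ with $F(1)=X$, $F(\comultDiag)=\delta_X$ and $F(\counitDiag)=\varepsilon_X$. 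Because the comonoid structure on every object of $\bbF^\op$ is built from the one on $1$ via \eqref{eq:coherent}, and symmetric monoidal functors respect that recipe, $F$ sends the comonoid structure of $\bbF^\op$ to that of $\bbX$; so by the remark following Corollary~\ref{cor:foxprop}, $F$ is a cartesian functor.

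For uniqueness, suppose $G\from\bbF^\op\to\bbX$ is cartesian with $G(1)=X$. By the same remark $G$ preserves comonoid structure, so $G(\comultDiag)$ and $G(\counitDiag)$ form a \eqref{eq:coherent} and \eqref{eq:natural} comonoid on $X$, which by uniqueness equals $(X,\delta_X,\varepsilon_X)$. Thus $F$ and $G$ agree on $1$ and on the generators of $\bbC\bbC$, hence on all of $\bbC\bbC\cong\bbF^\op$, so $G=F$. If one wants the full $2$-dimensional universal property --- that evaluation at $1$ is an equivalence $\mathsf{Cart}(\bbF^\op,\bbX)\simeq\bbX$ --- the remaining point is that monoidal natural transformations between such $F$'s are comonoid morphisms between the canonical comonoids of $\bbX$, and by the naturality clause of Fox's theorem \emph{every} morphism of $\bbX$ is such; this upgrades the bijection on objects to an equivalence.

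I expect the one genuinely delicate step to be the universal property of $\bbC\bbC$ used above: that symmetric monoidal functors out of the prop freely generated by $(\text{CCMG},\text{CCM})$, landing in an arbitrary symmetric monoidal category, are the same as models of the theory. This is the standard fact that a prop presents a symmetric monoidal theory, but it conceals a strictification/coherence argument, and it is also why ``free cartesian category on one object'' should be read up to equivalence rather than on the nose --- which, in turn, is what makes replacing $\bbX$ by a strict monoidal model harmless.
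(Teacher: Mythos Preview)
Your proposal is correct and follows essentially the same route as the paper: the paper does not give a detailed proof but simply observes that the proposition follows from Corollary~\ref{cor:foxprop} together with the uniqueness of a \eqref{eq:coherent} and \eqref{eq:natural} comonoid structure and the characterisation of cartesian functors as symmetric monoidal functors preserving that structure. Your argument is a careful unpacking of precisely these ingredients, with the added care about strictification that the paper leaves implicit.
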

\subsection{Lawvere theories}\label{subsec:law}
We recall Lawvere's approach~\cite{lawvere1963functorial} of \emph{functorial semantics} of algebraic theories in the rest of the section. %Lawvere addressed two perceived weaknesses of the classical
% account. First, the reliance on \emph{presentations}, i.e. a particular choice $(\Sigma,E)$ of function symbols and equations in an equational theory. This is sub-optimal since the same variety might, in general, have many different presentations. Second, models as `sets with structure' is overly restrictive: algebraic structures exist in diverse parts of mathematics; e.g.\ the notion of topological group cannot be easily captured by classical universal algebra.
Lawvere's approach is centered on the theory of cartesian categories.
\begin{definition}[Lawvere theory]\label{lo_tiori}
	A \emph{Lawvere theory} is a cartesian prop.
	A morphism of Lawvere theories is a cartesian prop homomorphism. Lawvere theories and homomorphisms define the category $\textsf{Law}$.
\end{definition}
Finite products do two jobs: they keep track of arities of operations, and---less obviously---they ensure the \emph{totality} and \emph{single-valuedness} of the interpretation of function symbols in any model.

\emph{Free} categories with products play a leading role. %, as made apparent in Remark~\ref{rem:terms}, are closely related with terms.
Recall from Proposition~\ref{Fop_is_free_on_1} that $\bbF^\op$ is the free category with products on one object.
Spelled out, a Lawvere theory is a cartesian category $\clL$ and an identity-on-objects cartesian functor $\bbF^\op \to \clL$. A morphism of Lawvere theories is a functor $h :  \clL\to \clM$ s.t.\ the following triangle commutes:
\[
	\scriptsize\begin{tikzcd}
		& \bbF^\op \ar[dr, "q"] \ar[dl, "p"'] & \\
		\clL \ar[rr, "h"'] && \clM.
	\end{tikzcd}
\]
% Given the above, it is clear that $\bbF^\op$ is the initial object in $\textsf{Law}$. The terminal object of $\textsf{Law}$ coincides with the terminal prop, where there exactly one morphism between any two $m,n : \bbN$.

\begin{remark}\label{rem:terms}
	Every equational theory gives rise to a Lawvere theory.
	For the case of no equations $(\Sigma,\varnothing)$, this Lawvere theory $\clL_{\Sigma}$ is the
	free category with products on $\Sigma$. It also has a simple, concrete description that uses $\Sigma$-terms.
	An arrow $m\to n$ is an $n$-tuple
	\begin{equation}
		(t_1,\,t_2,\,\dots,\,t_n) \quad \text{where each}\quad t_i : T_{\Sigma}^{[m]} \label{eq:LSigma}
	\end{equation}
	i.e. where each term in the tuple may use formal variables from the set $\{1,\dots,m\}$. Composition
	of $(s_1,\dots, s_k) \from m \to k$ with $(t_1,\dots,t_n)\from k\to n$ is via substitution:
	\[(t_1[\substitute{s_1}{1},\dots \substitute{s_k}{k}],\dots, t_n[\substitute{s_1}{1},\dots,\substitute{s_k}{k}]) \from m \to n.\]
	Given a set of equations $E$, ordering the variables in each $s=t:E$ induces pairs of arrows $s,t: m\to 1$ (where $m$ is the number of variables appearing in $s$ and $t$). Then $\clL_{(\Sigma,E)}$ is obtained by ``equating'' $s$ and $t$ --this can be computed via a coequaliser, or directly as in~\eqref{eq:LSigma} where terms are taken modulo the smallest congruence containing the required equations. We omit the details.
\end{remark}
The Lawvere theory induced from the empty equational theory $(\varnothing,\,\varnothing)$ is $\bbF^\op$.

\subsection{Semantics for algebraic theories}\label{sem_for_alg}
Here we recall some of the basic elements of functorial semantics.
\begin{definition}[Model of a Lawvere theory]\label{def:lawveremodel}
	A \emph{model} for a Lawvere theory $\clL$ is a cartesian functor $L \from \clL \to \Set$.
	A model homomorphism $L \to L'$ is a natural transformation $\alpha \from L \Rightarrow L'$.
	This defines the category of models $\Mod_\clL$ of a Lawvere theory $\clL$.
\end{definition}

\begin{remark}\label{rem:forget}
	%Categories of models have
	There are forgetful functors $U: \Mod_\clL \to \Set$, given by evaluating on the terminal object $F \mapsto F(1)$. % Given a model,
	Intuitively, $U$ forgets the algebraic structure, returning the underlying carrier set.% We shall see that it is natural to equate classical varieties with pairs $(\Mod\clL,\,U)$.
\end{remark}

%This re-definition
Definition~\ref{def:lawveremodel}
is compatible with its classical counterpart: the data required to give a model of $(\Sigma,\,E)$, in the sense of Definition~\ref{def:model}, is precisely that required to give a functor $\clL_{(\Sigma,\,E)}\to\Set$.
%
%In this way, all the classes of Example~\ref{ex:varieties1} are instances of varieties in the above sense. %and \ref{ex_law_vars} below.
The functorial approach lends itself to generalisations: e.g.\ replacing $\Set$ with another cartesian category. Moreover, it allows for further structural analysis.
\begin{observation}
	$\Mod_\clL$ is closed under limits computed in the category of all functors $[\clL,\Set]$, because limits commute with limits. For a similar reason it is closed under sifted colimits. %because sifted colimits commute with finite products in $\Set$.
	Thus the inclusion $\Mod_\clL \hookrightarrow [\clL,\Set]$ \fix{creates} limits and sifted colimits\footnote{Sifted $\clJ$-indexed colimits satisfy the following property: given a functor $E : I \times \clJ \to \Set$, s.t.\ the category $I$ is discrete (namely, it is just a set), the following isomorphism holds:
		\[\textstyle\colim_\clJ \lim_I E(I,J)\cong \lim_I\colim_\clJ E(I,J).\]
		In other words, sifted colimits are those that commute with finite products in $\Set$.}.
\end{observation}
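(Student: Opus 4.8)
The plan is to reduce the entire statement to a single mechanism. Being a model, i.e.\ preserving finite products, is the requirement that a prescribed family of canonical cones be sent to limit cones in $\Set$, and such a requirement is automatically stable under any operation that commutes with limits. Limits in $[\clL,\Set]$ commute with limits for the trivial reason that limits commute with limits; sifted colimits commute with limits \emph{over finite discrete diagrams} specifically, which is precisely the defining property recorded in the footnote. So both kinds of (co)limit preserve the conditions cutting out $\Mod_\clL$.

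To make this precise I would first recall that $[\clL,\Set]$ is complete and cocomplete with all small (co)limits formed objectwise in $\Set$, and that $\Mod_\clL \hookrightarrow [\clL,\Set]$ is the inclusion of a full subcategory. Since $\clL$ is a prop whose categorical products are its monoidal product, every object is $n = 1^{\otimes n}$; hence a functor $F \from \clL \to \Set$ lies in $\Mod_\clL$ exactly when, for every $n$, the canonical comparison map $F(n) \to F(1)^n$ assembled from the projections $n \to 1$ is an isomorphism (the $n = 0$ case saying $F(0)$ is terminal). Thus membership in $\Mod_\clL$ is the assertion that, for each $n$, one specific canonical cone over a discrete $n$-object diagram in $\Set$ is a limit cone. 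Now let $D \from I \to \Mod_\clL$ be a small diagram and $F \Defeq \lim_i D_i$ its pointwise limit in $[\clL,\Set]$, so $F(n) = \lim_i D_i(n)$. For fixed $n$, the comparison cone of $F$ at $n$ is the $I$-limit of the comparison cones of the $D_i$, each of which is a limit cone because $D_i$ is a model; a limit of limit cones is a limit cone, so $F(n) \to F(1)^n$ is an isomorphism, and $F(0) = \lim_i D_i(0)$ is terminal as a limit of terminal sets. Hence $F \in \Mod_\clL$. Replacing $\lim$ by $\colim$ and $I$ by a sifted category $\clJ$ yields the sifted-colimit case verbatim: $F \Defeq \colim_j D_j$ computed pointwise has $F(n) = \colim_j D_j(n)$, the colimit of the product cones is again a product cone because sifted colimits commute with finite products in $\Set$, and $F(0) = \colim_j 1 \cong 1$ by connectedness of $\clJ$ (the empty-product instance of the same commutation). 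So again $F \in \Mod_\clL$.

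``Creates'' is then formal: a fully faithful functor that, for a prescribed class of diagrams, both preserves and reflects the relevant (co)limits creates them, and here the pointwise (co)limit computed in $[\clL,\Set]$ has just been shown to lie in the full subcategory $\Mod_\clL$ --- so it is the (co)limit there and the universal cone lifts uniquely. The one point that is genuinely a proof rather than bookkeeping is the coherence check hidden in the phrase ``the comparison cone of $F$ is the limit/colimit of the comparison cones'': one must verify that the isomorphism $F(n) \cong F(1)^n$ produced by ``limits commute with limits'' (resp.\ by siftedness) is \emph{literally} the canonical comparison map attached to $F$, not merely some abstract isomorphism. This is a naturality diagram chase exploiting that the projection arrows $n \to 1$ of $\clL$ are the same for every $D_i$ (resp.\ $D_j$) and for $F$, so that the objectwise (co)limit of the comparison maps is the comparison map of the objectwise (co)limit. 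I expect this to be the main --- indeed the only real --- obstacle; the rest is an application of the two general facts quoted above.
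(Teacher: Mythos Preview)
Your proposal is correct and is precisely the argument the paper has in mind: the Observation carries no separate proof beyond the two phrases ``limits commute with limits'' and ``for a similar reason [sifted colimits]'', and you have simply unpacked those phrases in full detail, including the routine coherence check that the pointwise (co)limit of the comparison maps is the comparison map of the pointwise (co)limit.
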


\begin{remark}[Multi-sorted, unsorted]\label{rem:sorts}\rm

	The codomain of $U\from\Mod_\clL \to \Set$ betrays that our presentation is \textit{single sorted}. Indeed when $\clL$ is $S$-sorted we obtain %in a similar way
	a functor $U\from \Mod_\clL \to [S,\Set]$. Historically, syntactic aspects are %were usually developed in a
	single sorted, while
	%setting,
	categorical variety theorems are most crisp in the \emph{unsorted} case. It is thus worthwhile to focus on these concepts in more detail.

	An $S$-sorted Lawvere theory is a cartesian $S$-coloured prop. Spelled out,
	it is an identity-on-objects cartesian functor $\bbF(S) \to \clL$, where $\bbF(S)$ is the free cartesian category on %the set of sorts
	$S$.

	An \emph{unsorted} Lawvere theory is simply a (small) category with products.
\end{remark}

In the remainder of this section we focus on the unsorted version, because the treatment is notationally and technically simplified. Nevertheless, much of the following is sort-agnostic.

\begin{observation}\label{obs:modeladjunction}
	A theory morphism $h\from \clL\ra \clL'$ induces a (contravariant)
	functor \[\Mod_{h}: \Mod_{\clL'} \to \Mod_{\clL}\] taking $F\from \clL' \to \Set$ to
	$F\circ h \from \clL \to \Set$. The functor $\Mod_h$ always admits a left adjoint:
	$\Mod_h$ preserves limits and sifted colimits because they are computed in the underlying functor category.
	The special adjoint functor theorem can now be used to obtain a left adjoint $L_h:\Mod_{\clL} \ra \Mod_{\clL'}$.
\end{observation}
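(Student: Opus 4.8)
The plan is to verify, in order, (i) that $\Mod_h$ is a well-defined functor, (ii) that it preserves all limits and all sifted colimits, and (iii) that these two continuity properties, together with local presentability of the model categories, force the existence of the left adjoint $L_h$ via an adjoint functor theorem.

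For (i): a model of $\clL'$ is by definition a finite-product-preserving functor $F\from\clL'\to\Set$, and since the theory morphism $h\from\clL\to\clL'$ preserves finite products, the composite $F\circ h$ does too, hence is a model of $\clL$. On morphisms, a model homomorphism $F\To G$ is just a natural transformation, and whiskering it with $h$ yields a natural transformation $F\circ h\To G\circ h$, i.e.\ a model homomorphism $\Mod_h(F)\to\Mod_h(G)$. Preservation of identities and composites is inherited from whiskering, so $\Mod_h\from\Mod_{\clL'}\to\Mod_{\clL}$ is a functor, with the assignment $h\mapsto\Mod_h$ contravariant.

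For (ii): by the earlier observation, the inclusions $\Mod_{\clL}\hookrightarrow[\clL,\Set]$ and $\Mod_{\clL'}\hookrightarrow[\clL',\Set]$ create limits and sifted colimits, so such (co)limits in the model categories are computed pointwise. The precomposition functor $h^{*}\from[\clL',\Set]\to[\clL,\Set]$, $F\mapsto F\circ h$, preserves \emph{all} limits and colimits, since evaluated at any object $X$ of $\clL$ it is simply evaluation at $h(X)$, and (co)limits of functors are pointwise. Chasing a limit (resp.\ sifted colimit) cone of models through $[\clL',\Set]$, then through $h^{*}$, and recognising the resulting cone back inside $\Mod_{\clL}$, shows that $\Mod_h$ sends it to a limit (resp.\ sifted colimit) cone. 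In particular $\Mod_h$ preserves filtered colimits, hence is finitary and therefore accessible.

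For (iii): it is classical that the model categories of (possibly many-sorted) Lawvere theories are locally finitely presentable, and a functor between locally presentable categories that preserves all limits and is accessible has a left adjoint; this is the form of the (special) adjoint functor theorem applicable here. Steps (i)--(ii) supply exactly these hypotheses, yielding $L_h\from\Mod_{\clL}\to\Mod_{\clL'}$. The only point requiring genuine care is this last invocation: one should not appeal to the textbook special adjoint functor theorem via a cogenerating set, since categories of algebras (e.g.\ groups) generally lack one, so the appropriate tool is the adjoint functor theorem for locally presentable, accessible functors, whose hypotheses step (ii) was tailored to meet. Everything else is a routine pointwise verification.
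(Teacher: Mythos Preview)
Your proposal is correct and follows essentially the paper's own reasoning, which is embedded in the statement of the observation itself: show that $\Mod_h$ preserves limits and sifted colimits (because these are created by the inclusions into the ambient functor categories, where precomposition with $h$ preserves everything pointwise), and then invoke an adjoint functor theorem. Your caution in step (iii) is well-placed and in fact sharpens the paper's phrasing: the paper simply cites ``the special adjoint functor theorem,'' but as you observe, the cogenerating-set hypothesis of the textbook SAFT fails for many varieties (groups being the standard counterexample), so the version for accessible functors between locally presentable categories is the technically correct tool here---and your argument in (ii) is exactly what is needed to verify its hypotheses.
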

\begin{example}
	For intuition, we consider a concrete example.
	Consider the inclusion $i$ of the theory of monoids in commutative monoids.
	Then $\Mod_i$ is the functor that ``forgets'' commutativity. % of the monoid operation.
	Its left adjoint takes a monoid and ``forces'' commutativity by %appropriately
	quotienting the underlying carrier set.
\end{example}

\begin{observation}\label{obs:freemodels}
	%The previous observation is enough to show that
	Lawvere theories have free models. Let $p: \bbF^\op \to \clL$ be a Lawvere theory.
	%Indeed we can look at it as a morphism in $\mathsf{Prod}$, then
	Observation~\ref{obs:modeladjunction} gives an adjunction %$F \dashv \Mod_{p}$
	\[F: \Mod_{\bbF^\op} \leftrightarrows \Mod_{\clL} : \Mod_{p}.\] Then $\Mod_{p}$ coincides with the forgetful functor of Remark~\ref{rem:forget}. The left adjoint $F$ gives %the construction of
	free objects.
\end{observation}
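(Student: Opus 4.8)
The plan is to instantiate Observation~\ref{obs:modeladjunction} at the theory morphism $p\from\bbF^\op\to\clL$ and then to recognise the right adjoint it produces as the forgetful functor $U$ of Remark~\ref{rem:forget}. Taking $h=p$ there immediately gives the precomposition functor $\Mod_p\from\Mod_\clL\to\Mod_{\bbF^\op}$, $L\mapsto L\circ p$, together with a left adjoint $F\from\Mod_{\bbF^\op}\to\Mod_\clL$ (obtained from the special adjoint functor theorem). So the only thing left is to check that, along the canonical equivalence $\Mod_{\bbF^\op}\simeq\Set$, the functor $\Mod_p$ becomes $U$; granting this, $F$ transported to $\Set$ is a genuine left adjoint to $U$, and the image of a set $S$ is the free model on $S$.

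First I would make $\Mod_{\bbF^\op}\simeq\Set$ precise. By Proposition~\ref{Fop_is_free_on_1}, $\bbF^\op$ is the free cartesian category on one object, so evaluation at the generating object $1$ is an equivalence $\ev_1\from\Mod_{\bbF^\op}=\ct{Cart}(\bbF^\op,\Set)\xrightarrow{\sim}\Set$. Concretely, in $\bbF^\op$ the monoidal product is the categorical product, so every object $n$ is a power $1^{\times n}$ of the generator; hence a cartesian functor $\bbF^\op\to\Set$ is determined up to canonical isomorphism by its value at $1$ (it sends $n$ to that set's $n$-th power), and a natural transformation between two such is determined by its component at $1$. A pseudo-inverse sends a set $A$ to the cartesian functor $n\mapsto A^n$.

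Next I would trace $\Mod_p$ through this equivalence. Since $p$ is an identity-on-objects prop homomorphism, $p(1)=1$, so for any model $L$ of $\clL$ the functor $\Mod_p(L)=L\circ p$ has value $(L\circ p)(1)=L(1)$ at the generator, i.e.\ $\ev_1\bigl(\Mod_p(L)\bigr)=L(1)$; on morphisms a model homomorphism $\alpha\from L\To L'$ is sent to the component $\alpha_1$. Thus $\ev_1\circ\Mod_p$ is the functor $L\mapsto L(1)$, which is exactly $U$. Composing the adjunction $F\dashv\Mod_p$ with the adjoint equivalence $\Mod_{\bbF^\op}\simeq\Set$ then exhibits a left adjoint to $U$, whose value at a set $S$ is the free model on $S$ (in particular its value at a singleton is the free model on one generator) --- which is what the observation asserts.

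Honestly, once Observation~\ref{obs:modeladjunction} is in hand there is no serious obstacle here: the argument is pure bookkeeping. The one place demanding a little care is that ``free cartesian category on one object'' must be read $2$-categorically, so that $\ev_1$ is an equivalence of categories --- not merely a bijection on objects --- since transporting the adjunction $F\dashv\Mod_p$ along it requires $\ev_1$ to act correctly on model homomorphisms, not only on models. It is also worth recalling that the real substance sits inside Observation~\ref{obs:modeladjunction}, namely the special adjoint functor theorem, which applies because $\Mod_p$ preserves limits and sifted colimits (computed pointwise in the functor category) and $\Mod_\clL$ is locally presentable; a more hands-on alternative would construct the free model on $S$ directly as a term model over $S$ in the spirit of Remark~\ref{rem:terms}.
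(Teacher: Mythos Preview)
Your proposal is correct and follows exactly the approach the paper takes: the paper states this as a self-contained observation with no separate proof, simply invoking Observation~\ref{obs:modeladjunction} at $h=p$ and asserting that $\Mod_p$ is the forgetful functor. Your write-up is a careful unpacking of that assertion, in particular making explicit the equivalence $\Mod_{\bbF^\op}\simeq\Set$ via evaluation at the generating object and checking that $\Mod_p$ transports to $U$ along it; this is precisely the detail the paper leaves implicit.
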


Because of Observation~\ref{obs:modeladjunction}, it is natural to take adjunctions as \emph{the} notion of variety morphism. Below, by \textit{unsorted}-variety we mean a category equivalent to $\Mod_\clL$ for $\clL$ with finite products.

\begin{definition}\label{morovar}
	Let $\clV, \clW$ be two (unsorted) varieties; a \emph{morphism of varieties}
	is a functor $R : \clV \to \clW$  satisfying the following:
	\begin{enumtag}{mv}
		\item $R$ admits a left adjoint $L : \clW \to \clV$;
		\item $R$ commutes with sifted colimits.
	\end{enumtag}
	Given that adjunctions compose, this data yields a category $\mathsf{Var}$.
\end{definition}

Let $\mathsf{Prod}$ be the 2-category whose objects are small cartesian categories, morphisms are cartesian functors and 2-cells are natural transformations. Then Observation~\ref{obs:modeladjunction} boils down to defining a 2-functor $\Mod\from \mathsf{Prod}^\op \to \mathsf{Var}$. The following captures the relationship between $\mathsf{Law}$ and $\mathsf{Var}$.

\begin{theorem}[\protect{\cite[Theorem 4.1]{adamek2003duality}}] \label{thm:lawvereadjunction}
	There exists a 2-adjunction whose unit is an equivalence:
	\[ \Th : \mathsf{Var}  \leftrightarrows  \mathsf{Prod}^{\text{op}} : \Mod \]
\end{theorem}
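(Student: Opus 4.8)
The plan is to construct the $2$-functor $\Th$ by hand in terms of \emph{perfectly presentable} objects, to identify the unit and counit of the claimed adjunction with standard comparison functors, and thereby to reduce the whole statement to a single structural fact about varieties, applied twice.

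First I would fix notation. For a variety $\clV$, write $\clV_{\mathsf p}$ for (a small skeleton of) the full subcategory spanned by the perfectly presentable objects, i.e.\ those $P$ for which $\clV(P,-)\colon\clV\to\Set$ preserves sifted colimits; choosing a presentation $\clV\simeq\Mod_\clL$ one checks these are exactly the retracts of the representables $\clL(c,-)$, so $\clV_{\mathsf p}$ is essentially small and closed under finite coproducts. Set $\Th(\clV)\Defeq\clV_{\mathsf p}^{\op}$, a small category with finite products. For a variety morphism $R\colon\clV\to\clW$ with left adjoint $L\colon\clW\to\clV$, the natural isomorphism $\clV(LP,-)\cong\clW(P,R-)$ together with the fact that $R$ preserves sifted colimits shows that $L$ carries $\clW_{\mathsf p}$ into $\clV_{\mathsf p}$ and preserves finite coproducts there, so one sets $\Th(R)\Defeq(L|_{\clW_{\mathsf p}})^{\op}$, and defines $\Th$ on $2$-cells via the mate correspondence between natural transformations of right adjoints and of their left adjoints. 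Together with the $2$-functor $\Mod$ of Observation~\ref{obs:modeladjunction}, this provides the two functors; it remains to produce the unit and counit.

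The counit $\varepsilon_\clL\colon\Th\Mod(\clL)\to\clL$ in $\mathsf{Prod}^{\op}$ unwinds to a cartesian functor $\clL\to(\Mod_\clL)_{\mathsf p}^{\op}$, namely the opposite of the co-Yoneda embedding $c\mapsto\clL(c,-)$: it is well defined since each $\clL(c,-)$ preserves the pointwise sifted colimits of $\Mod_\clL$ (hence is perfectly presentable) and since $\clL(c\times c',-)\cong\clL(c,-)\sqcup\clL(c',-)$ in $\Mod_\clL$ — this last isomorphism holding precisely because models preserve finite products — and it is fully faithful by Yoneda. Dually, the unit $\eta_\clV\colon\clV\to\Mod\Th(\clV)=\Mod_{\clV_{\mathsf p}^{\op}}$ sends $X$ to the restricted representable $\clV(-,X)|_{\clV_{\mathsf p}}$, a finite-product-preserving functor $\clV_{\mathsf p}^{\op}\to\Set$ that restricts to the Yoneda embedding on $\clV_{\mathsf p}$. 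The key structural input — which I would take from the theory of locally finitely presentable categories (Gabriel–Ulmer; Adámek–Rosický–Vitale) — is that \emph{the category of models of a finite-product theory is the free completion under sifted colimits of the opposite of that theory}, equivalently that a variety is the free sifted cocompletion of its own subcategory of perfectly presentable objects. Applied to $\clV$ and to $\clV_{\mathsf p}^{\op}$, this identifies both $\Mod\Th(\clV)$ and $\clV$ as free sifted cocompletions of $\clV_{\mathsf p}$ and $\eta_\clV$ as the comparison functor respecting both universal inclusions; hence $\eta_\clV$ is an equivalence, and being an equivalence it is automatically a morphism of varieties. The same fact identifies $\Th\Mod(\clL)$ with the Cauchy (idempotent) completion of $\clL$ and $\varepsilon_\clL$ with its inclusion. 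The triangle identities then follow formally from these universal properties, yielding the $2$-adjunction $\Th\dashv\Mod$ with invertible unit.

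The main obstacle is exactly that structural fact: that $\Mod_\clT$ for $\clT$ with finite products is the free completion of $\clT^{\op}$ under sifted colimits, and hence that a variety is the free sifted cocompletion of its perfectly presentable objects, which form a dense subcategory. Establishing it requires the machinery of sifted colimits — their decomposition into filtered colimits and reflexive coequalisers, their commutation with finite products in $\Set$ — together with the identification of perfectly presentable objects in varieties and the density/strong-generator arguments of local presentability theory. Everything else (the explicit description of $\Th$, the unit, the counit, the triangle identities) is then routine; the only further point needing care is the variance bookkeeping in the $2$-functoriality of $\Th$, where $1$-cells are sent to opposites of restricted left adjoints and $2$-cells to mates.
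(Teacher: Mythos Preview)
The paper does not prove this theorem: it is quoted from the literature as \cite[Theorem 4.1]{adamek2003duality}, so there is no ``paper's own proof'' to compare against. Your proposal is correct and is essentially the argument one finds in Ad\'amek--Rosick\'y--Vitale: define $\Th(\clV)$ as the opposite of the perfectly presentable objects, use the adjunction and preservation of sifted colimits to see that left adjoints restrict, and reduce everything to the identification of $\Mod_\clL$ with the free sifted cocompletion of $\clL^{\op}$. The only places where one should be a bit more careful are (i) the pseudo-naturality of the unit and counit at the level of $2$-cells, and (ii) checking that $\eta_\clV$ really is a morphism in $\mathsf{Var}$ (i.e.\ a right adjoint preserving sifted colimits) rather than merely an equivalence of underlying categories---but as you note, being an equivalence takes care of this.
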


\begin{remark}
	One obtains the $S$-sorted version of Theorem~\ref{thm:lawvereadjunction} by slicing on both sides over the free category with products on $S$. This is given in more detail for \emph{partial} Lawvere theories in~\S\ref{gusection}.
\end{remark}

\subsection{Equational theories as monoidal theories}\label{subsec:termsasdiagrams}
Given that Lawvere theories are cartesian props, Theorem~\ref{thm:fox} suggests how to consider them as monoidal theories. We recall the recipe from~\cite{Bonchi2018}: the idea is to characterise $\Sigma$-terms as certain string diagrams, and then---through this lens---turn any equational theory into a monoidal theory.
\begin{recipe}\label{recipe:cartesian}
	Fix a signature $\Sigma$. A $\Sigma$-term $t: T_\Sigma^{[n]}$ is the same thing as
	a string diagram $n\to 1$ in the prop induced by the monoidal theory
	with
	\begin{itemize}
		\item
		      generators $\Gamma \Defeq \Sigma + \eqref{eq:comonoidgenerators}$
		\item
		      \eqref{eq:comonoid} together with equations that ensure naturality with respect to the comonoid structure. The latter
		      can be easily added as two additional equations for each $\sigma : \Sigma$:
		      \[\begin{tikzpicture}[xscale=.5, baseline=(current bounding box.center)]
				      \mor{\sigma}
				      \step{\comult}
				      \ezs{\comult\step{\umor{\sigma}\lmor{\sigma}}}{xshift=4cm}
				      \node at (2.5,2*\len) {$=$};
				      \ezs{\mor{\sigma}\step{\counit}
					      \ezs{\counit}{xshift=4cm}
				      }{xshift=9cm}
				      \node[left] at (0,2*\len) {$m$};
				      \node[left] at (4,2*\len) {$m$};
				      \node[left] at (9,2*\len) {$m$};
				      \node[left] at (13,2*\len) {$m$};
				      \node at (11.5,2*\len) {$=$};
			      \end{tikzpicture} \tag{SN$\sigma$}\label{eq:sigmanaturality}\]
	\end{itemize}
\end{recipe}
The Lawvere theory induced by equational theory $(\Sigma,E)$ can now be seen as the prop induced by
the monoidal theory $(\Gamma,F)$ where $F$ is the set of equations obtained by translating the equations in $E$ to string diagrams, together with \eqref{eq:comonoid}, and \eqref{eq:sigmanaturality} for each $\sigma : \Sigma$.

\smallskip
It is important to build an intuition behind this translation.
An obvious difference between terms and string diagrams is that the latter do not have named variables. The translation ensures that \emph{wires} play the role of variables,
and the comonoid structure plays the role of ``variable management''. We illustrate this with an example below.
\begin{example}\label{ex:propfromlawvere}
	The prop corresponding to the Lawvere theory induced by the equational theory of commutative monoids (Example~\ref{ex:eqtheorymonoids}) is the same as the prop of commutative bialgebra. For example, the term $m(m(x,x),y)$ in the theory of commutative monoids can be depicted as
	\[
		\begin{tikzpicture}[scale=.75,baseline=(current bounding box.center)]
			\up\comult
			\lid
			\step\lid
			\step[2]{\up{\cothingy{m}}}
			\step[3]{\cothingy{m}}
		\end{tikzpicture}
	\]

	In the term we have considered, the variable $x$ appears twice. In the corresponding diagram, the wire corresponding to $x$ starts with a comultiplication that witnesses the ``copying of $x$''.
\end{example}

\section{Algebra of partial maps}\label{sec:partialmaps}

We have seen that finite products are central in classical universal algebra. It is therefore natural to begin our development of its partial analogue by identifying the corresponding universal property in the partial setting. We will see that this amounts to replacing the class of cartesian categories with the class of \emph{discrete cartesian restriction categories} (DCR categories)~\cite{Coc12}. Next, we %identify the appropriate syntax by
characterise DCR categories in terms of algebraic structure, analogous to \ref{thm:fox} for cartesian categories.
%Finally we combine these pieces to obtain a notion of partial equational theory, partial Lawvere theory (Definition~\ref{partial_law}) and the associated (2-)category of varieties via a Lawvere-style functorial semantics (Definition~\ref{}).

\subsection{Partial functions}\label{sec:pardef}
The starting point of our journey is the (2-)category $\Par$ of sets and partial functions. Just as $\Set$ was the semantic universe for ordinary equational theories, $\Par$ is the semantic universe for partial equational theories.
We first recall an elementary, set theoretic presentation:
\begin{definition}\label{def:pardef}
	$\Par$ has sets as objects and \emph{partial functions} $f\from X\partialto Y$ as arrows, where
	a partial function $f$ is a pair $(\dom{f},\,\unf{f})$ where $\dom{f}\subseteq X$ is the \emph{domain of definition} of $f$ and $\unf{f}\from \dom{f}\to Y$ is a (total) function.
	Given a partial function $f\from X\partialto Y$, and some $X'\subseteq X$ we write $f\!\!\mid_{X'}$ for the partial function $(\dom{f}\cap X',\, f')$ where $f'\from \dom{f}\cap X'\to Y$ is $\unf{f}$ restricted to the (potentially smaller) domain of definition $\dom{f}\cap X'$. Similarly, given $Y'\subseteq Y$, write $f^{-1}(Y')=\{x\in \dom{f}\;\mid\; \unf{f}(x)\in Y'\}$.
	Given
	$f\from X\partialto Y$ and $g\from Y\partialto Z$, their composite is defined by
	$f \comp g = (f^{-1}(\dom{g}),\, (\unf f\!\mid_{f^{-1}(\dom{g})} \comp \unf g)$. The identity on $X$ is $(X,\eed_X)$.

	There is a natural partial order between partial functions $X\partialto Y$:
	\[
		f\leq g \quad \Defeq \quad
		\dom{f}\subseteq \dom{g} \;\wedge\; g \!\mid_{\dom{f}}\, = f. \]
	It is straightforward to verify that this data makes $\Par$ a category, and with $\leq$, a 2-category.
\end{definition}

Categorifying partiality has long history (see e.g.,~\cite{Rob88,Coc02}). We recall a classical approach:
\begin{definition}\label{defn:parcat}
	Suppose that $\mybb{C}$ has finite limits.
	Its 2-category of \emph{partial maps}, $\parcat{\mybb{C}}$ has:
	\begin{enumerate}
		\item \textbf{objects} are objects of $\mybb{C}$.
		\item \textbf{arrows} are equivalence classes $[m,f] : X \to Y$ of spans
		      $X \xleftarrow{m} A \xrightarrow{f} Y$
		      where $m$ is monic. We equate $(m,f) \sim (m',f')$ iff there is an isomorphism $\alpha$ s.t.\ the following diagram commutes:
		      \[
			      \scriptsize\begin{tikzcd}
				      & A \ar[dl,"m"'] \ar[drr,"f"'] \ar[r,"\alpha"] & A' \ar[dll,"m'"] \ar[dr,"f'"]  \\
				      X &&& Y.
			      \end{tikzcd}
		      \]
		\item \textbf{2-cells}: $[m,f]\leq [m',f']$ when there exists \emph{any} $\alpha$ that makes the diagram commute.
		\item \textbf{composition} is defined by pullback. Explicitly, the composite of $(m,f) : A \to B$ and $(m',g) : B \to C$ is the outer span of the diagram on the left
		      \[
						\begin{tikzcd}[row sep=small,column sep=small]
				      && X \wedge X' \ar[ld,"\pi_0"'] \ar[rd,"\pi_1"]\\
				      & X \ar[ld,"m"'] \ar[rd, "f"] && X' \ar[ld,"m'"'] \ar[rd,"g"] \\
				      A && B && C
			      \end{tikzcd}
			      % \hspace{1.5cm}
			      % \scriptsize\begin{tikzcd}
				    %   X \wedge X' \ar[d,"\pi_0"'] \ar[r,"\pi_1"] & X' \ar[d,"m'"]\\
				    %   X \ar[r,"f"'] & B
			      % \end{tikzcd}
		      \]
		      where the square with top $X\wedge X'$ is a pullback in $\mybb{C}$. Note that it doesn't matter which pullback, since any two choices will give isomorphic spans, and therefore equal morphisms.
		\item \textbf{Identities} are diagonal spans $(1_A,1_A) : A \to A$.
	\end{enumerate}
\end{definition}
Given a morphism $(m,f) : A \to B$ in $\mathsf{Par}(\mybb{C})$, we think of the monic $m : X \to A$ as a subobject, specifying which part of $A$ the morphism is defined on, and then $f : X \to B$ tells us what it does.

The following is a straightforward sanity check:
\begin{observation}
	There is an isomorphism of (2-)categories $\Par \cong \parcat{\Set}$.
\end{observation}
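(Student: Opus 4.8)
The plan is to construct an identity-on-objects 2-functor $\Phi\from\Par\to\parcat{\Set}$ and a strict inverse $\Psi$. On objects both 2-categories have sets, so $\Phi$ and $\Psi$ act as the identity there. On arrows, $\Phi$ sends a partial function $f=(\dom f,\unf f)\from X\partialto Y$ to the equivalence class of the span $X\hookleftarrow\dom f\xrightarrow{\unf f}Y$, whose left leg is the subset inclusion --- monic in $\Set$, so this is a legitimate arrow of $\parcat{\Set}$. Conversely, given a span $X\xleftarrow{m}A\xrightarrow{g}Y$ with $m$ monic, hence injective, corestrict $m$ to a bijection $A\cong m(A)\subseteq X$ and put $\Psi[m,g]\Defeq(m(A),\,g\circ m^{-1})$, with $m^{-1}\from m(A)\to A$ the inverse of that corestriction. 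This is well defined on equivalence classes: an isomorphism $\alpha\from A\to A'$ with $m'\circ\alpha=m$ and $g'\circ\alpha=g$ forces $m(A)=m'(A')$ and $g\circ m^{-1}=g'\circ (m')^{-1}$.

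Then I would carry out the following routine checks. (i) $\Psi\Phi=\eed$: the round trip of $(\dom f,\unf f)$ has domain of definition the image of the inclusion, namely $\dom f$, and underlying function $\unf f$ again. (ii) $\Phi\Psi=\eed$: the round trip of $[m,g]$ is the span $X\hookleftarrow m(A)\xrightarrow{g\circ m^{-1}}Y$, and the corestriction bijection $A\cong m(A)$ exhibits it as isomorphic --- hence equal in $\parcat{\Set}$ --- to $[m,g]$. (iii) $\Phi$ preserves identities, as the identity partial function $(X,\eed_X)$ is sent to the diagonal span $(\eed_X,\eed_X)$. (iv) $\Phi$ preserves composition: for $f\from X\partialto Y$ and $h\from Y\partialto Z$, the pullback in $\Set$ of $\dom f\xrightarrow{\unf f}Y\hookleftarrow\dom h$ is $\{x\in\dom f\mid\unf f(x)\in\dom h\}=f^{-1}(\dom h)$, with legs the inclusion into $\dom f$ and the corestriction of $\unf f$; composing the two spans therefore produces exactly $(f^{-1}(\dom h),\ \unf f\!\mid_{f^{-1}(\dom h)}\comp\unf h)$, the composite $f\comp h$ of Definition~\ref{def:pardef}. (v) $\Phi$ is a 2-functor that is bijective on 2-cells: $f\leq g$ in $\Par$ means $\dom f\subseteq\dom g$ and $g\!\mid_{\dom f}=f$; the comparison map $\alpha\from\dom f\to\dom g$ demanded by $\parcat{\Set}$ is pinned down by the condition on left legs to be the subset inclusion, and then the remaining condition $\unf g\circ\alpha=\unf f$ is precisely $g\!\mid_{\dom f}=f$, so such an $\alpha$ exists iff $f\leq g$ (and is then unique).

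There is no genuine obstacle --- this is a sanity check --- but two points deserve a moment's care. First, arrows of $\parcat{\Set}$ are equivalence classes of spans, so one must verify both that $\Psi$ respects the equivalence and that the round trip $\Phi\Psi$ returns the \emph{same} class rather than merely an isomorphic span; item (ii) is where this is pinned down. Second, step (iv) is the only computation of any length: one has to unwind the construction of pullbacks in $\Set$ and match it against the formula $f\comp g=(f^{-1}(\dom g),\ \unf f\!\mid_{f^{-1}(\dom g)}\comp\unf g)$ from Definition~\ref{def:pardef}, but this is immediate once the pullback apex is identified with $f^{-1}(\dom g)$.
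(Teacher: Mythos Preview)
Your proof is correct. The paper does not actually give a proof of this observation: it is introduced with the sentence ``The following is a straightforward sanity check'' and then simply stated, so there is no argument to compare against. Your construction of the mutually inverse identity-on-objects functors $\Phi$ and $\Psi$, together with the verification that composition is preserved via the explicit computation of the pullback apex as $f^{-1}(\dom h)$ and that the 2-cells match up, is precisely the routine verification the paper alludes to but omits.
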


Just as a model of a total operation of arity $n$ is a function $A^n\ra A$ (an arrow in $\Set$), a model of a partial operation ought to be a partial function $A^n\partialto A$ (an arrow in $\Par$). For this reason, it is important to understand the mathematical status of the cartesian product in $\Par$. Interestingly, $\Par$ has categorical products, but these \emph{do not} correspond to the cartesian product of sets.\footnote{The categorical product of $A$ and $B$ in $\Par$ is $(A+\{\star\})\times(B+\{\star\}) - \{(\star,\,\star)\}$. This can be seen via the equivalence $1/\Set\simeq \Par$. Limits in the coslice category $1/\Set$ are calculated pointwise,
	and the functor $1/\Set \to \Par$ removes the point}
\subsection{Cartesian Restriction Categories}

It is by focusing on the universal property of the cartesian product in $\Par$ that we are able to identify a  generalisation of Lawvere's approach to partial operations. This is the goal of this section.

\emph{Restriction categories} were devised to study partial phenomena in an axiomatic setting. Here we give a whirlwind tour, more details can be found in~\cite{Coc02,Coc07,Coc12}. % and Appendix~\ref{rc_basic_defs}.
In a restriction category, every arrow $f : A \to B$ has an associated idempotent $\rest{f} : A \to A$, thought of as the identity function restricted to the domain of definition of $f$. We call them %$\rest{f}$
\emph{domain idempotents}.
%, and they must satisfy a number of axioms.
Arrows %$f : A \to B$ in a restriction category with
where $\rest{f} = 1_A$ are called \emph{total}, and form a subcategory. Further, we have:

\begin{remark}\label{rem:enriched}
	Any restriction category is poset-enriched, with the ordering defined by
	\[
		f \leq g \Leftrightarrow \rest{f}\comp g = f
	\]
\end{remark}

%It is often desirable for functors between restriction categories to
Functors $F$ that preserve domain idempotents ($F\rest{f} = \rest{Ff}$) are called \emph{restriction functors}. Restriction categories and restriction functors form a category. This extends to a 2-category in which the 2-cells are \emph{lax transformations}. A lax transformation $\alpha : F \to G$ of restriction functors $F,G : \bbX \to \bbY$ consists of a family of total maps $\alpha_A : FA \to GA$ in $\bbY$ indexed by the objects $A$ of $\bbX$ s.t.\ for every $f : A \to B$ of $\bbX$ the usual naturality square \emph{commutes up to inequality}:
\[
	\scriptsize\begin{tikzcd}
		FA \ar[d,"Ff"'] \ar[r,"\alpha_A"] & GA \ar[d,"Gf"] \\
		FB \ar[r,"\alpha_B"'] \ar[ur,phantom,"\leq"] & GB
	\end{tikzcd}
\]
where $\leq$ is the ordering introduced above. Call this 2-category $\mathsf{RCat}^\leq$. Just as categories with finite products enjoy a universal property in the 2-category $\Cat$, those with finite restriction products have a universal property in $\mathsf{RCat}^\leq$. In general, formal limits in $\mathsf{RCat}^\leq$ are called \emph{restriction limits}~\cite{Coc07}. A \emph{cartesian restriction} (CR) category is a restriction category with finite restriction products.

\begin{observation}[\cite{Coc02}]
	$\Par$ is a CR category, with the cartesian product as restriction product.
\end{observation}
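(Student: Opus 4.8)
The plan is to work with the elementary presentation of $\Par$ from Definition~\ref{def:pardef} and verify three things in turn: that $\Par$ is a restriction category, that the one-element set is restriction-terminal, and that the set-theoretic cartesian product equipped with its (total) projections is the binary restriction product. An alternative would be to invoke the general fact of~\cite{Coc02} that $\parcat{\bbC}$ is a CR category whenever $\bbC$ has finite limits, with restriction products computed as products in $\bbC$, and then use the isomorphism $\Par\cong\parcat{\Set}$ together with the fact that products in $\Set$ are cartesian products; but it seems more illuminating to exhibit the structure directly.

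For the restriction structure, given $f=(\dom f,\unf f)\from X\partialto Y$ I would take $\rest f\from X\partialto X$ to be the partial function with domain of definition $\dom f$ acting as the inclusion $\dom f\hookrightarrow X$. The four restriction axioms then reduce to immediate unwindings of the composition formula of Definition~\ref{def:pardef}: $\rest f\comp f=f$; $\rest f\comp\rest g=\rest g\comp\rest f$ for $f,g$ with common domain $X$ (both sides the partial identity on $\dom f\cap\dom g$); $\overline{\rest f\comp g}=\rest f\comp\rest g$ likewise; and $f\comp\rest g=\overline{f\comp g}\comp f$ for $f\from X\partialto Y$, $g\from Y\partialto Z$ (both sides defined on $f^{-1}(\dom g)$ and acting as $\unf f$ there). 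The only care needed is to keep the diagrammatic order of composition straight. Note $\rest f=1_X$ iff $\dom f=X$, so the total arrows in the restriction sense are exactly the total functions.

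Next I would check that the singleton $1$ is restriction-terminal: the total function $!_X\from X\to 1$ is the unique total arrow $X\partialto 1$, and an arbitrary $f\from X\partialto 1$, being determined by its domain of definition $D\subseteq X$, satisfies $f=\rest f\comp\,!_X$ since $\rest f$ is the partial identity on $D$ and composing with $!_X$ merely forgets the (trivial) value; equivalently, arrows $X\partialto 1$ are in bijection with restriction idempotents on $X$. For binary restriction products I would take $A\times B$ to be the cartesian product of sets with the \emph{total} projections $p_0\from A\times B\to A$ and $p_1\from A\times B\to B$, and for $f\from C\partialto A$, $g\from C\partialto B$ define $\langle f,g\rangle\from C\partialto A\times B$ with domain of definition $\dom f\cap\dom g$, acting by $x\mapsto(\unf f(x),\unf g(x))$. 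Two things then need checking. First, the defining equations $\langle f,g\rangle\comp p_0=\rest g\comp f$ and $\langle f,g\rangle\comp p_1=\rest f\comp g$, each a one-line computation of domains and values --- and exhibiting the characteristic laxness of restriction products, namely that the first component of the pairing is not $f$ but $f$ restricted to where $g$ is also defined. Second, uniqueness: any $h$ satisfying these two equations has $\rest h=\overline{h\comp p_0}=\overline{\rest g\comp f}$ (using the standard consequence of the restriction axioms that $\overline{h\comp p_0}=\rest h$ since $p_0$ is total), which has domain of definition $\dom f\cap\dom g$, and the values of $\unf h$ are then pinned down coordinatewise by $h\comp p_0=\rest g\comp f$ and $h\comp p_1=\rest f\comp g$; hence $h=\langle f,g\rangle$. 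Together with the restriction-terminal object (the nullary case), this exhibits all finite restriction products, and they are the cartesian products of sets, as claimed.

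Since every step is routine, the main point requiring attention is conceptual rather than technical: this restriction product must not be conflated with the \emph{categorical} product in $\Par$, which, as noted in the footnote to Definition~\ref{defn:parcat}, is the quite different set $(A+\{\star\})\times(B+\{\star\})\setminus\{(\star,\star)\}$. The universal property being verified lives in $\mathsf{RCat}^{\leq}$ --- the projections are total and the pairing is lax, $\langle f,g\rangle\comp p_0=\rest g\comp f$ rather than $=f$ --- and it is exactly this lax limit that will later play the role that ordinary finite products play in Lawvere's classical development.
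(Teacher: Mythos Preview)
The paper does not prove this observation at all: it simply attributes it to~\cite{Coc02} and moves on. Your proposal therefore supplies what the paper omits, and does so correctly --- the restriction structure, the restriction-terminal object, and the verification that the cartesian product with total projections is the binary restriction product are all handled cleanly, with the uniqueness step (using totality of $p_0$ to get $\rest h=\overline{h\comp p_0}$, then R3) being the only place requiring a moment's thought.

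You even flag the alternative route the paper implicitly leans on: the general fact that $\parcat{\bbC}$ is a CR category for any finitely complete $\bbC$, combined with $\Par\cong\parcat{\Set}$. That is closer in spirit to what the citation is doing, whereas your elementary verification trades generality for self-containment and makes the lax pairing law $\langle f,g\rangle\comp p_0=\rest g\comp f$ visible in concrete terms --- which is pedagogically useful given how central that laxness is later in the paper.
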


CR categories have appeared in the literature under a variety of different names, including \emph{p-category with a one-element object}~\cite{Rob88} and \emph{partially cartesian category}~\cite{Cur89}. For our development, it is crucial that the data of CR categories can be equivalently captured as follows:

\begin{theorem}[\cite{Coc07}]\label{thm:newfox}
	A CR category is the same thing as a symmetric monoidal category where every object is equipped with a commutative comonoid structure that is \eqref{eq:coherent} and
	the comultiplication is natural. That is, for any $f : A \to B$ we have $f \comp \delta_B = \delta_A \comp (f \otimes f)$.
  %   A cartesian category is the same thing as a symmetric monoidal category
	%	where every object can be equipped with a \eqref{eq:coherent} and \eqref{eq:natural} commutative comonoid structure.
	%
	%A CR category is exactly a symmetric monoidal category equipped with a coherent commutative comonoid %structure $(A,\delta_A,\varepsilon_A)$ for each object $A$, s.t.\
  %
	%A symmetric monoidal category is a CR category if and only if
	%every object can be equipped with a commutative comonoid structure that is \eqref{eq:coherent} and
	%the comultiplication is natural. That is, for any $f : A \to B$ we have $f \comp \delta_B = \delta_A \comp (f \otimes f)$.
\end{theorem}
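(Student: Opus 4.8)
The plan is to establish the equivalence in both directions and then check the two passages are mutually inverse, following the proof of Fox's theorem~\ref{thm:fox} but dropping the naturality requirement on the counit. First, starting from a CR category, its finite restriction products make the underlying category symmetric monoidal, with $\otimes$ the restriction product, $I$ the restriction-terminal object, and associator, unitors and symmetry read off from the universal property just as ordinary finite products do in $\Cat$. Each object acquires the diagonal $\delta_X\from X\to X\otimes X$ and the unique total map $\varepsilon_X\from X\to I$, and the standard computations with restriction products show these form a \eqref{eq:coherent} commutative comonoid with $\delta$ natural: $f\comp\delta_B$ and $\delta_A\comp(f\otimes f)$ are both the pairing with components $f,f$ and domain idempotent $\rest f$, hence equal. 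Naturality of $\varepsilon$, on the other hand, fails in general --- in $\Par$, for non-total $f\from A\partialto B$ the map $f\comp\varepsilon_B$ is defined only on $\dom f\subsetneq A$ while $\varepsilon_A$ is total --- which is precisely what separates CR from cartesian categories.

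Conversely, suppose $\bbX$ is symmetric monoidal with a \eqref{eq:coherent} comonoid on each object whose comultiplication is natural. For $f\from A\to B$ define the domain idempotent $\rest f\from A\to A$ by
\[
  \rest f \;\Defeq\; \delta_A\comp(\eed_A\otimes f)\comp(\eed_A\otimes\varepsilon_B)\comp\rho_A,
\]
i.e.\ the string diagram that copies $A$, runs $f$ on the second copy, and discards its output. One checks $\rest f$ is idempotent (coassociativity and cocommutativity), and that the four restriction axioms hold: $\rest f\comp\rest g = \rest g\comp\rest f$ and $\rest{\rest g\comp f}=\rest g\comp\rest f$ for $f,g$ with common source follow from coassociativity and cocommutativity alone, whereas $\rest f\comp f = f$ and $f\comp\rest g=\rest{f\comp g}\comp f$ are the two axioms that genuinely use naturality of $\delta$ --- it is what lets one push $f$ through a copy --- together with the counit law. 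The total maps (those $t$ with $\rest t=\eed$) then form a subcategory containing every coherence isomorphism and every $\delta_X$ and $\varepsilon_X$.

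It remains to see that $\otimes$ is a restriction product. First one shows the comonoid on $I$ is the canonical one, in particular $\varepsilon_I=\eed_I$ --- using that $\mathrm{End}_\bbX(I)$ is a commutative monoid, the counit law, and $\delta$-naturality at $\varepsilon_I$ --- and concludes that $I$ is restriction-terminal, since any total $g\from X\to I$ satisfies $g=\rest g\comp\varepsilon_X=\varepsilon_X$. Taking projections $p\Defeq(\eed_X\otimes\varepsilon_Y)\comp\rho_X$ and $q\Defeq(\varepsilon_X\otimes\eed_Y)\comp\lambda_Y$ and pairing $\langle a,b\rangle\Defeq\delta_Z\comp(a\otimes b)$, a diagram chase --- in which \eqref{eq:coherent} is needed to unfold the comonoid on $X\otimes Y$ --- then gives $\langle a,b\rangle\comp p=\rest b\comp a$, the symmetric identity $\langle a,b\rangle\comp q=\rest a\comp b$, and uniqueness of the pairing; so $\otimes$ is a restriction product and $\bbX$ is a CR category. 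The two constructions are mutually inverse: applied in a CR category the formula for $\rest f$ returns the original domain idempotents, because there $\rest f=\langle\eed_A,f\rangle\comp p$ as well; and re-extracting the diagonal and terminal map from the restriction product recovers $(\delta,\varepsilon)$. Matching restriction functors that preserve restriction products with symmetric monoidal functors that preserve the comonoid structure is then routine, and upgrades this object-level bijection to an equivalence of the relevant (2-)categories.

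I expect the crux to be the axiom $f\comp\rest g=\rest{f\comp g}\comp f$ and the uniqueness clause of the restriction-product universal property: these are exactly where one must invoke $\delta$-naturality and \eqref{eq:coherent} rather than just the comonoid equations, and where the \emph{absence} of $\varepsilon$-naturality has to be respected, since the output of an arbitrary morphism cannot be discarded freely. The surrounding bookkeeping with unitors, associators and symmetry is routine but must be carried out carefully to confirm that the two round-trips are genuine identities rather than mere equivalences.
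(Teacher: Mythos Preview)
The paper does not actually prove this theorem: it is stated with a citation to \cite{Coc07} and no argument is given in the text, so there is no ``paper's own proof'' to compare against. Your proposal is a correct and standard proof sketch --- defining $\rest f$ via the copy--discard idiom, verifying the four restriction axioms (with $\delta$-naturality doing the real work for $\rest f\comp f=f$ and $f\comp\rest g=\rest{f\comp g}\comp f$), and then recovering the restriction product from the comonoid --- and is essentially the argument one finds in the cited source.
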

%\begin{proof}
%	In Appendix~\ref{app:a_crc_is}.
%\end{proof}

From this perspective a CR category is very close to a cartesian category viewed as a monoidal category through \ref{thm:fox}. The difference is that we do not ask for the counit of the comonoid to be natural. This has profound consequences: for instance, the same symmetric monoidal category may have more than one such chosen comonoid structure, thus definining different CR categories.

Given the algebraic data, the domain idempotent $\rest{f} : A \to A$ of an arrow $f : A \to B$ in a CR category is recovered as:
\[
	\begin{tikzpicture}[xscale=.5]
		\comult
		\ezs{\umor{f}}{xshift=1cm}
		\ezs{\counit}{xshift=2cm, yshift=\len cm}
		\ezs{\lid}{xshift=1cm,xscale=2}
		\node[left] at (-1,2*\len) {$\overline f$};
		\node[left] at (0,2*\len) {$=$};
	\end{tikzpicture}
\]
and so in particular the subcategory of $\bbX$ for which the counit is natural is precisely the subcategory of total maps. Notice that this means the subcategory of total maps of a CR category is cartesian.

\begin{definition}
	A \emph{CR functor} between two CR categories $F : \bbX \to \bbY$ is a functor that preserves the %cartesian restriction
	algebraic	structure. That is, $F(A\otimes B) = FA \otimes FB$, $F1 = 1$, $F\delta_A = \delta_{FA}$ and $F\varepsilon_A = \varepsilon_{FA}$.
\end{definition}

\begin{remark}\label{rem:laxtotal}
	A lax transformation of CR functors may be equivalently specified as a family of maps $\alpha_A : FA \to GA$ indexed by the objects $A$ of $\bbX$ s.t.\ for every $f : A \to B$ we have $Ff \comp \alpha_B \leq \alpha_A \comp Gf$. We do not need to ask that each $\alpha_A$ is total, since if $F$ and $G$ preserve the cartesian restriction structure, then they are \emph{automatically} total. In particular the diagram on the left gives the inequality on the right, which gives that $\alpha_A$ is total:
	\[
		\scriptsize\begin{tikzcd}
			FA \ar[d,"F\varepsilon_A"'] \ar[r,"\alpha_A"] & GA \ar[d,"G\varepsilon_A"] \\
			FI \ar[r,"\alpha_I"'] \ar[ur,phantom,"\leq"]& GI
		\end{tikzcd}
		\qquad\qquad
		\graffle[0pt]{.3cm}{totalleft}
		\;\leq\;
		\graffle[5pt]{.5cm}{totalright}
	\]
\end{remark}

\subsection{Discrete Cartesian Restriction Categories}
Restriction products do not quite capture all the properties of $\Par$ needed for partial universal algebra. In particular, we require CR categories with the following extra structure:
\begin{definition}
	A CR category is said to be $\emph{discrete}$ (DCR category~\cite{Coc12}) if for each object $A$ there is an arrow $\mu_A : A \otimes A \to A$ that is \emph{partial inverse} to $\delta_A$. That is, $\delta_A \comp \mu_A = \rest{\delta_A} = 1_A$ and $\mu_A \comp \delta_A = \rest{\mu_A}$.
\end{definition}

We give a novel presentation of DCR categories, inspired by the work of \cite{Gil14}. Central to our presentation is the notion of a commutative special Frobenius algebra in which the monoid does not have a unit, which we call a \emph{partial Frobenius algebra}. More precisely:

\begin{definition}
	A \emph{partial Frobenius algebra} $(A,\delta_A,\mu_A,\varepsilon_A)$ in a symmetric monoidal category consists of a commutative comonoid $(A,\delta_A,\varepsilon_A)$ and a commutative semigroup $(A,\mu_A)$ s.t.\ $(A,\delta_A,\mu_A)$ is a semi-Frobenius algebra. Diagramatically, this is the comonoid structure we have already seen together $\mu_A$, which we depict as $\raisebox{.25em}{}$ in our string diagrams, subject to the following equations:
	\begin{equation} \label{eq:multassociative} \tag{MCA}
		\raisebox{.5em}{\begin{tikzpicture}[xscale=.75,baseline=(current bounding box.center)]
				\umult
				\lid
				\step{\mult}
				\ezs{\lmult
					\uid
					\step{\mult}}{xshift=3cm,yshift=\len cm}
				\node at (2.5,2*\len) {$=$};
			\end{tikzpicture}}\hspace{2em}
		\raisebox{.5em}{\begin{tikzpicture}[xscale=.75,baseline=(current bounding box.center)]
				\braid\step{\mult}\ezs{\mult}{xshift=3cm}
				\node at (2.5,2*\len) {$=$};
			\end{tikzpicture}}
	\end{equation}
	\vspace{-.8cm}
	\begin{equation}
		\label{eq:sfrob} \tag{SFROB}
		\begin{tikzpicture}[xscale=.75,baseline=(current bounding box.center)]
			\ezs{\id}{yshift=2*\len cm}\lcomult
			\step{\umult\ezs{\lid}{yshift=-\len cm}}
			\ezs{\mult\step{\comult}}{xshift=3cm}
			\ezs{\ezs{\id}{yshift=2*\len cm}\lcomult
				\step{\umult\ezs{\lid}{yshift=-\len cm}}}{xshift=8cm, xscale=-1}
			\node at (2.5,2*\len) {$=$};
			\node at (5.5,2*\len) {$=$};
			\ezs{\comult\step{\mult}
				\ezs{\id}{xshift=3cm}
				\node at (2.5,2*\len) {$=$};}{xshift=10cm}
		\end{tikzpicture}
	\end{equation}
\end{definition}
Note that there is some redundancy in the equational presentation above, as discussed in~\cite{carboni1991matrices}.
We now extend the characterisation of CR categories given in Theorem~\ref{thm:newfox} to DCR categories:
\begin{theorem}\label{thm:dcrc}
  A DCR category is the same thing as a symmetric monoidal category where every object $A$ is equipped
	with a coherent partial Frobenius algebra structure $(A,\delta_A,\varepsilon_A,\mu_A)$ s.t.\ the comultiplication is natural. That is, for any $f : A \to B$ we have $f \comp \delta_B = \delta_A \comp (f \otimes f)$.
  %A CR category is the same thing as a symmetric monoidal category where every object is equipped with a commutative comonoid structure that is \eqref{eq:coherent} and the comultiplication is natural. That is, for any $f : A \to B$ we have $f \comp \delta_B = \delta_A \comp (f \otimes f)$.
	%
	%A symmetric monoidal category is a DCR category if and only if
	%every object can be equipped with a coherent partial Frobenius algebra structure $(A,\delta_A,\varepsilon_A,\mu_A)$ for each object $A$ s.t.\ the comultiplication is natural. That is, for any $f : A \to B$ we have $f \comp \delta_B = \delta_A \comp (f \otimes f)$.
	%
	%	commutative comonoid structure that is \eqref{eq:coherent} and
	%
	%A DCR category is precisely a symmetric monoidal category equipped with a coherent partial Frobenius algebra structure $(A,\delta_A,\varepsilon_A,\mu_A)$ for each object $A$ s.t.\ the comultiplication is natural. That is, for any $f : A \to B$ we have $f \comp \delta_B = \delta_A \comp (f \otimes f)$.
\end{theorem}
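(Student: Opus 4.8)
The plan is to leverage Theorem~\ref{thm:newfox} so that only the extra datum $\mu_A$ needs new attention. A DCR category is in particular a CR category, so by Theorem~\ref{thm:newfox} its underlying data amounts to a symmetric monoidal category in which every object $A$ carries a coherent commutative comonoid $(A,\delta_A,\varepsilon_A)$ whose comultiplication is natural; conversely, any symmetric monoidal category so equipped is a CR category, with restriction structure recovered by $\rest{g}=\delta_X\comp(g\otimes 1_X)\comp(\varepsilon_Y\otimes 1_X)$ for $g\from X\to Y$. Hence the whole content of the theorem is the equivalence, for a fixed such comonoid on every object: the category is \emph{discrete} --- there is, for each $A$, an arrow $\mu_A\from A\otimes A\to A$ with $\delta_A\comp\mu_A=1_A$ and $\mu_A\comp\delta_A=\rest{\mu_A}$ --- if and only if each $\delta_A$ extends to a coherent partial Frobenius algebra $(A,\delta_A,\varepsilon_A,\mu_A)$ with the same $\mu_A$. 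Two easy observations will be used throughout. First, a partial inverse $\mu_A$ of $\delta_A$ in the above sense is \emph{unique} if it exists: this is the standard uniqueness of partial inverses in a restriction category, proved by a short manipulation of domain idempotents. Second, if the $\mu_A$ exist they are automatically \emph{coherent} (i.e.\ $\mu_{A\otimes B}$ agrees with the shuffled product built from $\mu_A$ and $\mu_B$, mirroring \eqref{eq:coherent}), since naturality of the symmetry shows the shuffled product to be a partial inverse of $\delta_{A\otimes B}$, hence equal to $\mu_{A\otimes B}$ by uniqueness.

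For the direction from partial Frobenius algebras to discreteness: the comonoid part already makes the category a CR category via Theorem~\ref{thm:newfox}, so it remains to verify that $\mu_A$ is a partial inverse of $\delta_A$. The equation $\delta_A\comp\mu_A=1_A$ is exactly the ``special'' clause of \eqref{eq:sfrob}. For $\mu_A\comp\delta_A=\rest{\mu_A}$ I would expand $\rest{\mu_A}$ with the recovery formula together with coherence ($\delta_{A\otimes A}=(\delta_A\otimes\delta_A)\comp(1_A\otimes\sigma\otimes1_A)$ and $\varepsilon_{A\otimes A}=\varepsilon_A\otimes\varepsilon_A$) and then rewrite the resulting string diagram into $\mu_A\comp\delta_A$ using the Frobenius clauses of \eqref{eq:sfrob} and counitality. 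This is a self-contained diagram chase, shortened by the redundancy among the equations \eqref{eq:multassociative}--\eqref{eq:sfrob} noted in~\cite{carboni1991matrices}.

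The substantive direction is the converse: from discreteness, recover the partial Frobenius equations. The ``special'' equation is given outright. The crux --- and the step I expect to be the \textbf{main obstacle} --- is the Frobenius law $(1_A\otimes\delta_A)\comp(\mu_A\otimes1_A)=\mu_A\comp\delta_A=(\delta_A\otimes1_A)\comp(1_A\otimes\mu_A)$. The essential tool is the naturality of $\delta$: instantiated at the arrow $\mu_A$ it already rewrites $\mu_A\comp\delta_A$ as $\delta_{A\otimes A}\comp(\mu_A\otimes\mu_A)$, and combining this with coherence of $\delta$, the restriction-category axioms, and the equation $\delta_A\comp\mu_A=1_A$ should let one cancel the superfluous copies and reach the Frobenius normal form; this is precisely where the restriction-categorical meaning of ``discrete'' gets repackaged as an equation between string diagrams, and it is the only step that is not essentially mechanical. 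Once the Frobenius law holds, commutativity of $\mu_A$ follows from cocommutativity of $\delta_A$ by the standard fact that the multiplication of a Frobenius algebra is commutative iff its comultiplication is cocommutative, and associativity of $\mu_A$ together with the rest of \eqref{eq:multassociative}--\eqref{eq:sfrob} beyond specialness follow by similar manipulations (again only a minimal subset needs direct checking, by~\cite{carboni1991matrices}); coherence of the $\mu_A$'s was already recorded above.

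Finally, I would note the functorial upgrade: any symmetric monoidal functor preserving the comonoid structure on each object automatically preserves the $\mu_A$'s, since $F\mu_A$ is a partial inverse of $F\delta_A=\delta_{FA}$ and such inverses are unique. Together with the uniqueness observation from the first paragraph, this is what licenses reading the statement as an identification of DCR categories and their functors with the described symmetric monoidal structures.
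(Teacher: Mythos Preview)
The paper states Theorem~\ref{thm:dcrc} without proof: after the statement the text moves directly to the connection with finite-limit categories, and no argument (not even a sketch or an appendix) is supplied. So there is nothing to compare your proposal against.

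Assessed on its own, your strategy is sound. Reducing to Theorem~\ref{thm:newfox} so that only the interaction of $\mu_A$ with the comonoid needs checking is the natural move, and your identification of the Frobenius law as the one nontrivial step in the direction ``discrete $\Rightarrow$ partial Frobenius'' is accurate. The uniqueness-of-partial-inverses observation is correct and does exactly the work you assign to it (coherence of $\mu$, and preservation by CR functors). One small caution: in your outline of the Frobenius-law derivation you invoke naturality of $\delta$ at $\mu_A$ and then ``cancel the superfluous copies'' using $\delta_A\comp\mu_A=1_A$; be careful here, since you are working in a restriction category and cancellation introduces domain idempotents rather than simplifying to identities. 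The argument does go through, but you will need to track those domain idempotents explicitly and use the identity $\mu_A\comp\delta_A=\rest{\mu_A}$ together with the restriction-category axiom $\rest{f}\comp f=f$ to discharge them.
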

%\begin{proof}
%	In \ref{app:a_crc_is}.
%\end{proof}

DCR categories are intimately connected to categories with finite limits \cite{Coc12}. In particular:
\begin{proposition}
	If $\bbC$ is a category with finite limits, $\mathsf{Par}(\bbC)$ is a DCR category.
\end{proposition}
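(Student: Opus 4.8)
The plan is to use the intrinsic characterisation of a DCR category as a cartesian restriction category in which every object $A$ carries an arrow $\mu_A\from A\otimes A\to A$ that is partial inverse to the diagonal $\delta_A$. Accordingly the argument splits in two: (i) equip $\parcat{\mybb{C}}$ with its cartesian restriction structure, and (ii) produce the partial diagonal inverses and verify the two equations $\delta_A\comp\mu_A=1_A$ and $\mu_A\comp\delta_A=\rest{\mu_A}$. The finite limits of $\mybb{C}$ supply precisely what is needed: pullbacks make composition in $\parcat{\mybb{C}}$ well defined, while the terminal object and binary products provide the formal restriction-product structure. Part (i) is essentially a citation of the classical theory of \cite{Coc02,Coc07}; part (ii) is a short span-and-pullback computation.

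For (i): the restriction operator sends $[m,f]\from A\to B$ to $\rest{[m,f]}\Defeq[m,m]\from A\to A$ --- the subobject $m$ regarded as a domain idempotent on its codomain --- and one checks the four restriction axioms, each reducing to a one-line diagram chase in $\mybb{C}$; this is the standard fact that $\parcat{\mybb{C}}$ is a restriction category whenever $\mybb{C}$ has pullbacks. The total maps are exactly those of the form $[1_A,f]$, so the subcategory of total maps is a copy of $\mybb{C}$. Since $\mybb{C}$ has finite products, $\parcat{\mybb{C}}$ becomes symmetric monoidal with tensor the product of $\mybb{C}$, acting as $[m,f]\otimes[m',f']=[m\times m',\,f\times f']$ and unit the terminal object $1$. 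On each $A$ we pick out the \emph{total} comonoid $\delta_A=[1_A,\Delta_A]$ and the counit $\varepsilon_A$ the total map $A\to 1$, where $\Delta_A\from A\to A\times A$ is the diagonal of $\mybb{C}$; coherence of this comonoid and naturality of $\delta$ are inherited verbatim from the cartesian structure of $\mybb{C}$ because all the maps involved are total, so by Theorem~\ref{thm:newfox} $\parcat{\mybb{C}}$ is a CR category. (Equivalently, one checks directly that the product of $\mybb{C}$ is a restriction product in $\mathsf{RCat}^{\leq}$.)

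For (ii): set $\mu_A\Defeq[\Delta_A,1_A]\from A\times A\to A$; this is a legitimate arrow of $\parcat{\mybb{C}}$ since $\Delta_A$ is split monic. The composite $\delta_A\comp\mu_A$ is computed from the pullback of $\Delta_A$ against $\Delta_A$, which --- $\Delta_A$ being monic --- is $A$ with both legs the identity, so $\delta_A\comp\mu_A=[1_A,1_A]=1_A=\rest{\delta_A}$, the last step because $\delta_A$ is total. Dually, $\mu_A\comp\delta_A$ is computed from the pullback of $1_A$ against $1_A$, which is trivial, yielding $\mu_A\comp\delta_A=[\Delta_A,\Delta_A]$; and $\rest{\mu_A}=\rest{[\Delta_A,1_A]}=[\Delta_A,\Delta_A]$ by the formula above, so $\mu_A\comp\delta_A=\rest{\mu_A}$. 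Hence $\mu_A$ is partial inverse to $\delta_A$ and $\parcat{\mybb{C}}$ is discrete. (One could instead verify all the partial Frobenius and naturality equations relating the $\delta_A,\varepsilon_A,\mu_A$ and appeal to Theorem~\ref{thm:dcrc}; the underlying pullbacks are the same.)

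The part that takes genuine care, rather than routine bookkeeping, is (i): one must not conflate the product of $\mybb{C}$ with the \emph{categorical} product of $\parcat{\mybb{C}}$, which is the different object $(A+\{\star\})\times(B+\{\star\})-\{(\star,\star)\}$ in the case of $\Set$. What is true is that the product of $\mybb{C}$ is a \emph{restriction} product, i.e.\ it enjoys the appropriate formal universal property in the poset-enriched $2$-category $\mathsf{RCat}^{\leq}$, and spelling out that lax universal property and checking its compatibility with the order enrichment is where attention is needed; everything else is mechanical.
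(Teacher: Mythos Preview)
Your argument is correct and is precisely the expected one; the paper itself does not spell out a proof of this proposition, treating it as a known fact from \cite{Coc12}, so there is nothing to compare against beyond noting that your span-and-pullback computations match how the structure is described elsewhere in the paper (e.g.\ the counit of the $K_t\dashv\Par$ adjunction, where $\rest{(m,f)}=(m,m)$ is used explicitly).

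One small expository point: your justification that ``naturality of $\delta$ [is] inherited verbatim from the cartesian structure of $\mybb{C}$ because all the maps involved are total'' reads as though you only need to check naturality for total maps. In fact naturality of $\delta$ must hold for \emph{every} $[m,f]$ in $\parcat{\mybb{C}}$, and this does require a (short) pullback computation: the pullback of $\Delta_A$ along $m\times m$ is identified with the domain $X$ of $m$ precisely because $m$ is monic, yielding $[m,\langle f,f\rangle]$ on both sides. This is routine, but it is not literally ``inherited verbatim'' from the total case; you may wish to tighten the phrasing.
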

%\begin{proof}
%	In \ref{dut}.
%\end{proof}
\begin{definition}[the 2-category $\mathsf{DCRC}^\leq$]\label{the_def_of_dcrc}
	It follows that for any CR functor $F : \bbX \to \bbY$ between DCR categories, we have $F\mu_A = \mu_{FA}$. CR functors therefore give the notion of morphism between DCR categories. We consider the strict 2-category of DCR categories, restriction functors, and lax transformations, which we call $\mathsf{DCRC}^\leq$.
\end{definition}

\section{Partial Lawvere theories}\label{parlawve}

In this section we develop a Lawvere-style approach to \emph{partial algebraic theories}, where %\textit{(i)}
operations may be partial. % defined only on a proper subset of their domain.
%, and \textit{(ii)} equations relate terms constructed from such operations.
Ordinary Lawvere theories are determined by % a bijective on object functor from
the free cartesian category on a single object $\bbF^\op$; we are thus interested in the analogue of $\bbF^\op$ in the world of DCR categories.
\subsection{The free DCR category on one object}\label{the_free_dcr_on}

Given Theorem~\ref{thm:dcrc}, we have an explicit description for the DCR category on one object: it
is the prop $\bbP\bbF$ induced from the monoidal theory of partial Frobenius algebras. That is, it
has generators $\{\,\raisebox{.25em}{} \; , \; \raisebox{.25em}{} \; , \; \raisebox{.25em}{} \,\}$
and equations~\eqref{eq:multassociative}, \eqref{eq:comonoid} and~\eqref{eq:sfrob}.

It turns out that one can gain a precise intuition on what $\bbP\bbF$ looks like by mimicking the way in which the props $\bbF$ and its opposite $\bbF^\op$ describe familiar algebraic structures. In fact, the prop $\bbC\bbM$ of commutative monoids is isomorphic to the prop $\bbF$ (see Observation~\ref{obs:Fmonoids}), and similarly, the prop $\bbC\bbC$ of commutative comonoids is isomorphic to $\bbF^\op$ (Observation~\ref{obs:Fopcomonoids}).

The prop $\bbP\bbF$ of partial Frobenius algebras that we want to describe here can be given a similar ``combinatorial'' characterisation relying on the insights of Lack~\cite{compprops}.
First, we note that the prop $\bbC\bbA\bbM$ induced by the monoidal theory of commutative semigroups ($\{\,\raisebox{.25em}{}\,\}$ and equations~\eqref{eq:multassociative}) is isomorphic to sub-prop $\bbF_s\subset\bbF$ of finite sets and \emph{surjective functions}.
\begin{observation}\label{obs:surjectivefunctions}
	As props, $\bbC\bbA\bbM \cong \bbF_s$.
\end{observation}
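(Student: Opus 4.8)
The plan is to mimic, with surjectivity bookkeeping, Lack's proof \cite{compprops} that $\bbC\bbM \cong \bbF$ (Observation~\ref{obs:Fmonoids}). First I would build a homomorphism of props $\Phi \from \bbC\bbA\bbM \to \bbF_s$. Note that $\bbF_s$ really is a sub-prop of $\bbF$: surjections contain all identities and symmetries, are closed under composition, and are stable under the monoidal product of $\bbF$ (disjoint union of functions). The object $1 \in \bbF_s$ carries a commutative semigroup structure whose multiplication is the unique function $[2] \to [1]$, the associativity and commutativity equations~\eqref{eq:multassociative} holding automatically because $\bbF_s(3,1)$ and $\bbF_s(2,1)$ are singletons. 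By the universal property of $\bbC\bbA\bbM$ as the prop freely generated by $\multDiag$ modulo~\eqref{eq:multassociative}, this data determines $\Phi$. Since prop homomorphisms are identity-on-objects, $\Phi$ is an isomorphism precisely when it is full and faithful.

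Fullness is the easy half, via an explicit construction. Given a surjection $f \from [m] \to [n]$, each fibre $f^{-1}(j)$ is nonempty, so I define a canonical diagram $c_f \from m \to n$ by first applying the permutation that reorders $[m]$ fibre-by-fibre (each fibre in increasing order), and then placing in parallel, for $j = 1, \dots, n$, a left-nested tree of $|f^{-1}(j)| - 1$ multiplications --- a bare wire when the fibre is a singleton, and, crucially, never a unit, precisely because $f$ is surjective. By construction $\Phi(c_f) = f$, so $\Phi$ is full.

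Faithfulness is the crux, and I expect the underlying normalisation argument to be the main obstacle. The goal is to show that every $d \in \bbC\bbA\bbM(m,n)$ is equal, already in $\bbC\bbA\bbM$, to the canonical form $c_{\Phi(d)}$; faithfulness then follows at once, since $\Phi(d) = \Phi(d')$ forces $d = c_{\Phi(d)} = c_{\Phi(d')} = d'$. Spelled out, this is the statement that~\eqref{eq:multassociative} is \emph{complete} for surjections: any two unit-free string diagrams denoting the same function are provably equal using associativity and commutativity of $\multDiag$ alone. I would obtain this by bootstrapping from \cite{compprops}: there the isomorphism $\bbC\bbM \cong \bbF$ is witnessed by rewriting each diagram to a normal form in which every output wire carries a sorted multiplication tree over its fibre (inserting a unit when the fibre is empty); when $d$ is unit-free the unit equations never fire, so its normal form is unit-free, is reached from $d$ using only~\eqref{eq:multassociative}, and coincides with $c_{\Phi(d)}$. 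Equivalently, this step shows that the evident homomorphism $\bbC\bbA\bbM \to \bbC\bbM$ is faithful, with image the sub-prop of surjections. Together with fullness, and with $\Phi$ being bijective on objects, this yields the isomorphism $\bbC\bbA\bbM \cong \bbF_s$.
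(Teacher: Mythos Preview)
Your proposal is correct. The paper does not give a formal proof: the statement is recorded as an Observation, followed only by the intuitive remark that string diagrams in $\bbC\bbM$ depict all functions (Remark~\ref{rem:pictures}) and that dropping the unit generator leaves exactly the surjective ones. Your argument is the natural fleshing-out of this intuition, and it follows the same route as Lack's proof of $\bbC\bbM \cong \bbF$ that both you and the paper invoke. The one step worth tightening is the assertion that on unit-free input ``the unit equations never fire'' during the $\bbC\bbM$ normalization: this is true provided the rewrite system is oriented so that units are only eliminated, never introduced, which is the case in \cite{compprops}, but it should be stated explicitly rather than left implicit.
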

This is intuitive: as observed in Remark~\ref{rem:pictures}, string diagrams of $\bbC\bbM$ allow one to ``draw'' all functions $[m]\to [n]$. Doing without the unit means that we can express exactly the surjective ones.
%\fos{Here reference to ``picture of functions''}

Next, we know from~\cite{compprops} that the prop $\bbF\bbR\bbO\bbB$ induced by the monoidal theory of special Frobenius algebras with generators $\{\, \raisebox{.25em}{} \; , \; \raisebox{.25em}{} \; , \; \raisebox{.25em}{} \; , \; \raisebox{.25em}{} \,\}$
and equations~\eqref{eq:monoid}, \eqref{eq:comonoid} and \eqref{eq:sfrob} is isomorphic to the prop of \emph{cospans} of finite sets $\mathsf{Cospan}(\bbF)$. An arrow $m\ra n$ here is (an isomorphism class of) a cospan of functions
$[m] \xrightarrow{f} [k] \xleftarrow{g} [n]$, and composition is by pushout.
\begin{proposition}[\protect{\cite{compprops}}]\label{pro:Frobeniuscospans}
	As props, $\bbF\bbR\bbO\bbB \cong \mathsf{Cospan}(\bbF)$.
\end{proposition}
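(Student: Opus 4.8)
The plan is to exhibit $\mathsf{Cospan}(\bbF)$ as the prop assembled from the ``monoid half'' and the ``comonoid half'' of a special commutative Frobenius algebra, via Lack's calculus of composing props~\cite{compprops}. Each half is already pinned down: by Observation~\ref{obs:Fmonoids} the monoid fragment $\bbC\bbM$ (generators $\multDiag,\unitDiag$; equations \eqref{eq:monoid}) is isomorphic to $\bbF$, and by Observation~\ref{obs:Fopcomonoids} the comonoid fragment $\bbC\bbC$ (generators $\comultDiag,\counitDiag$; equations \eqref{eq:comonoid}) is isomorphic to $\bbF^\op$. Under this dictionary a cospan $[m]\xrightarrow{f}[k]\xleftarrow{g}[n]$ is read as its left leg $f$ --- an arrow $m\to k$ of $\bbF\cong\bbC\bbM$ --- postcomposed with $g^\op$ --- an arrow $k\to n$ of $\bbF^\op\cong\bbC\bbC$; conversely, the aim is to show that every arrow of $\bbF\bbR\bbO\bbB$ can be brought to this ``monoid-then-comonoid'' shape and that distinct cospans remain distinct.

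First I would define a prop homomorphism $\Phi\from\bbF\bbR\bbO\bbB\to\mathsf{Cospan}(\bbF)$ on generators, sending the multiplication $\mu$ to the cospan $[2]\xrightarrow{!}[1]\xleftarrow{\mathrm{id}}[1]$, the unit $\eta$ to $[0]\to[1]\xleftarrow{\mathrm{id}}[1]$, and the comonoid generators $\delta,\varepsilon$ to the reflected cospans. Well-definedness of $\Phi$ amounts to checking that \eqref{eq:monoid}, \eqref{eq:comonoid} and \eqref{eq:sfrob} hold in $\mathsf{Cospan}(\bbF)$, each being a short pushout computation in $\bbF$ --- for instance the special law $\delta\comp\mu=\mathrm{id}$ is the statement that the pushout of $[1]\xleftarrow{!}[2]\xrightarrow{!}[1]$ is a point. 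Surjectivity of $\Phi$ on arrows is immediate from the leg-factorisation above, so the substance of the proposition is \emph{faithfulness}: no equations hold between cospans beyond those already forced in $\bbF\bbR\bbO\bbB$.

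For faithfulness I would appeal to Lack's composition theorem. There is a distributive law $\lambda$ allowing one to slide a $\bbC\bbC$-arrow past a $\bbC\bbM$-arrow: on a composable pair presented by cospan legs --- a merge/counit pattern $g$ followed by a split/unit pattern $f$ --- it returns the pushout of the span $(g,f)$ in $\bbF$, re-expressed as a $\bbC\bbM$-arrow followed by a $\bbC\bbC$-arrow. Verifying that $\lambda$ genuinely satisfies the axioms of a distributive law of props (compatibility with the two composition operations, with units, and with the symmetries) is the main obstacle; it reduces to the associativity and pasting behaviour of pushouts in $\bbF$, and is precisely the computation carried out in~\cite{compprops}. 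Granting it, Lack's theorem delivers two things at once: the composite prop formed from $\bbC\bbM$ and $\bbC\bbC$ along $\lambda$ has as arrows $m\to n$ exactly the composable pairs, i.e.\ isomorphism classes of cospans $[m]\to[k]\leftarrow[n]$ composed by pushout, whence this composite is isomorphic to $\mathsf{Cospan}(\bbF)$; and it is presented by the generators and relations of $\bbC\bbM$ and $\bbC\bbC$ together with the relations encoding $\lambda$. The final step is to recognise those $\lambda$-relations as being generated, modulo \eqref{eq:monoid} and \eqref{eq:comonoid}, by \eqref{eq:sfrob}: the Frobenius law is exactly the rule ``push a $\delta$ through a $\mu$'', and the special law is the instance of $\lambda$ at the composite $\delta\comp\mu$. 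Hence this presentation coincides with that of $\bbF\bbR\bbO\bbB$, and tracking the identifications shows that $\Phi$ is the desired isomorphism. A more hands-on alternative bypasses Lack's machinery by proving directly that every $\bbF\bbR\bbO\bbB$-diagram rewrites to a unique ``spider'' normal form indexed by a cospan, with confluence of the rewriting playing the role of the distributive-law axioms.
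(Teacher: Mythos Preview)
Your proposal is correct and essentially reconstructs the argument from~\cite{compprops}: the paper does not supply an independent proof of this proposition but simply cites Lack's result, and your sketch via the distributive law of $\bbC\bbC$ over $\bbC\bbM$ computed by pushout is precisely Lack's argument. The alternative spider-normal-form route you mention is also standard and yields the same conclusion.
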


Given that surjective functions are closed under composition and pushouts in $\bbF$,
we can consider the subprop $\mathsf{Cospan}_s(\bbF)$ of  $\mathsf{Cospan}(\bbF)$ with arrows those cospans where the left leg is surjective. Now, combining Observation~\ref{obs:surjectivefunctions} and Proposition~\ref{pro:Frobeniuscospans} yields:
\begin{proposition}
	As props, $\bbP\bbF \cong \mathsf{Cospan}_s(\bbF)$.
\end{proposition}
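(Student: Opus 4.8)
The plan is to build an isomorphism of props $\Phi\colon \bbP\bbF \to \mathsf{Cospan}_s(\bbF)$ and check bijectivity on hom-sets by transporting the combinatorial descriptions already in hand. First I would define $\Phi$ on generators, sending $\multDiag$, $\comultDiag$, $\counitDiag$ to the cospans they correspond to under the isomorphism $\bbF\bbR\bbO\bbB\cong\mathsf{Cospan}(\bbF)$ of Proposition~\ref{pro:Frobeniuscospans}. Each of these target cospans has surjective left leg, and $\mathsf{Cospan}_s(\bbF)$ is a sub-prop of $\mathsf{Cospan}(\bbF)$ (surjections being closed under pushout, composition and coproduct, as recalled just before the statement), so the images lie in $\mathsf{Cospan}_s(\bbF)$; and $\Phi$ respects \eqref{eq:multassociative}, \eqref{eq:comonoid} and \eqref{eq:sfrob} because these hold among the chosen cospans already inside $\mathsf{Cospan}(\bbF)\cong\bbF\bbR\bbO\bbB$. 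Hence $\Phi$ is a well-defined homomorphism of props.

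For fullness: composites and tensors of $\Phi(\multDiag)$ with identities realise exactly the cospans $[m]\to[k]=[k]$ whose left leg is an arbitrary surjection (every surjection is a composite of codiagonals and identities), while composites and tensors of $\Phi(\comultDiag),\Phi(\counitDiag)$ with identities realise exactly the ``op-function'' cospans $[k]=[k]\leftarrow[n]$. Any cospan $[m]\to[k]\leftarrow[n]$ with surjective left leg is the composite in $\mathsf{Cospan}_s(\bbF)$ of a surjective function cospan followed by an op-function cospan, the intermediate pushout being trivial. Therefore every arrow of $\mathsf{Cospan}_s(\bbF)$ lies in the image of $\Phi$.

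Faithfulness is the crux, and the plan is a normal-form argument. Write $\bbC\bbA\bbM\hookrightarrow\bbP\bbF$ and $\bbC\bbC\hookrightarrow\bbP\bbF$ for the evident sub-props generated by $\{\multDiag\}$ and by $\{\comultDiag,\counitDiag\}$. The first claim is that, modulo \eqref{eq:multassociative}, \eqref{eq:comonoid} and \eqref{eq:sfrob}, every arrow of $\bbP\bbF$ equals a composite $s\comp c$ with $s$ coming from $\bbC\bbA\bbM$ and $c$ from $\bbC\bbC$; the only non-routine reduction is to push a comonoid diagram past a semigroup diagram, which is precisely the Frobenius interchange packaged in \eqref{eq:sfrob}. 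This is the unitless restriction of Lack's presentation of $\bbF\bbR\bbO\bbB$ as a composite prop --- a distributive law between commutative monoids and cocommutative comonoids --- \cite{compprops}, equivalently of the matrix calculus of \cite{carboni1991matrices}; the restriction is legitimate because a pushout of a surjection along any function is again a surjection. The second claim is that $\Phi$ separates normal forms: if $\Phi(s\comp c)=\Phi(s'\comp c')$, then the cospans $[m]\to[k]\leftarrow[n]$ and $[m]\to[k']\leftarrow[n]$ they compute are isomorphic via some permutation of the apex; absorbing this permutation on either side (permutations live in every sub-prop) we may assume apices and legs agree on the nose, so the underlying surjections coincide and the underlying op-functions coincide. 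By $\bbC\bbA\bbM\cong\bbF_s$ (Observation~\ref{obs:surjectivefunctions}) the $s$-parts are then equal already in $\bbC\bbA\bbM$, hence in $\bbP\bbF$; by $\bbC\bbC\cong\bbF^\op$ (Observation~\ref{obs:Fopcomonoids}) the $c$-parts are equal in $\bbC\bbC$, hence in $\bbP\bbF$; so $s\comp c=s'\comp c'$ in $\bbP\bbF$. Together with fullness this yields $\bbP\bbF\cong\mathsf{Cospan}_s(\bbF)$.

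The main obstacle is the normal-form step: showing that \eqref{eq:sfrob} together with the (co)associativity and (co)commutativity laws suffices to commute every comonoid generator past every semigroup generator, i.e.\ that $\bbP\bbF$ really is the composite prop $\bbC\bbA\bbM\comp\bbC\bbC$ along the restricted Frobenius distributive law. Once this is granted by invoking the composite-prop machinery of \cite{compprops}, everything else is a transport of Observations~\ref{obs:surjectivefunctions} and~\ref{obs:Fopcomonoids} along $\Phi$.
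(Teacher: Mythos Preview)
Your proposal is correct and follows essentially the same approach as the paper, which simply states that the result follows by ``combining'' Observation~\ref{obs:surjectivefunctions} and Proposition~\ref{pro:Frobeniuscospans}; you have spelled out in detail what that combination amounts to via Lack's composite-prop machinery, which is exactly the intended argument.
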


This gives us a combinatorial characterisation of $\bbP\bbF$. But there is a more familiar and satisfactory way of describing $\mathsf{Cospan}_s(\bbF)$. Given that cospans in $\bfC$ are spans in $\bfC^\op$, and epimorphisms in $\bfC$ are monomorphisms in $\bfC^\op$, we see that $\mathsf{Cospan}_s(\bbF) = \parcat{\bbF^\op}$,
since a cospan in $\bbF$ with left leg surjective is the same thing as a span in $\bbF^{\op}$ with left leg a monomorphism.
Therefore, we obtain:
\begin{proposition}\label{as_props_ultimo}
	As props, $\bbP\bbF\cong \parcat{\bbF^\op}$.
\end{proposition}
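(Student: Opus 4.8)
The plan is to reduce the statement to the previous proposition, which already provides $\bbP\bbF \cong \mathsf{Cospan}_s(\bbF)$, and then to establish an isomorphism of props $\mathsf{Cospan}_s(\bbF) \cong \parcat{\bbF^\op}$; composing the two gives the result. All the content is in this second isomorphism, and the argument is essentially dualisation. Before anything else I would check that $\parcat{\bbF^\op}$ is even meaningful: Definition~\ref{defn:parcat} requires the base category to have finite limits, and since $\bbF \simeq \FinSet$ has finite colimits, $\bbF^\op$ has finite limits, so the construction applies.

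The core observation is that a span $A \xleftarrow{m} X \xrightarrow{f} B$ in $\bbF^\op$ is literally the same datum as a cospan $A \xrightarrow{m} X \xleftarrow{f} B$ in $\bbF$, because $\bbF = (\bbF^\op)^\op$. This identification is the identity on objects (both props have object set $\N$) and sends identity spans to identity cospans, so it respects identities; it sends monic left legs to epic left legs, and in $\bbF$ --- exactly as in $\FinSet$ --- the epimorphisms are precisely the surjections, so the ``monic left leg'' condition cutting $\parcat{\bbF^\op}$ out of $\mathsf{Span}(\bbF^\op)$ matches the ``surjective left leg'' condition cutting $\mathsf{Cospan}_s(\bbF)$ out of $\mathsf{Cospan}(\bbF)$. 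Composition agrees because a pullback in $\bbF^\op$ is a pushout in $\bbF$, and the two quotients --- by isomorphisms of the apex --- coincide; that monos are stable under pullback (the dual of the fact, noted above, that surjections are stable under pushout in $\bbF$) is what keeps this restricted class closed under composition on both sides.

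Finally I would verify that the resulting bijection is an isomorphism of \emph{props}, not merely of categories. It is identity-on-objects by construction, and it is symmetric strict monoidal because on both sides the monoidal product is $+$ on objects and is computed legwise --- on $\mathsf{Cospan}(\bbF)$ using the coproduct of $\bbF$, on $\parcat{\bbF^\op}$ using the product of $\bbF^\op$, which is again the coproduct of $\bbF$ --- so the dualisation carries one monoidal structure to the other and likewise preserves the symmetries. Chaining with the previous proposition gives $\bbP\bbF \cong \parcat{\bbF^\op}$. I do not expect a genuine obstacle here: the only real care needed is the bookkeeping of the dualisation --- keeping straight which leg is which, and that ``monic left leg in $\bbF^\op$'' corresponds to ``surjective (equivalently, epic) left leg in $\bbF$'' without crossing legs or variances --- together with the initial check that $\bbF^\op$ has the finite limits Definition~\ref{defn:parcat} needs.
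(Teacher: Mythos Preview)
Your proposal is correct and follows essentially the same approach as the paper: the paper reduces to the preceding proposition $\bbP\bbF \cong \mathsf{Cospan}_s(\bbF)$ and then observes that cospans in $\bbF$ are spans in $\bbF^\op$, with epimorphisms (surjections) in $\bbF$ becoming monomorphisms in $\bbF^\op$, so that $\mathsf{Cospan}_s(\bbF) = \parcat{\bbF^\op}$. You have simply filled in more of the bookkeeping (finite limits in $\bbF^\op$, compatibility of composition, the prop structure) than the paper bothers to spell out.
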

\subsection{Partial Lawvere theories}\label{subsec:partialLawvere}

We have seen that $\bbF^\op$ is central to the definition of Lawvere theories, being the free cartesian category on one object. The prop $\parcat{\bbF^\op}$, being the free DCR on one object, plays the corresponding role in the definition of partial Lawvere theories.

\begin{definition}\label{partial_law}
	A partial Lawvere theory $\clL$ is a DCR prop.
\end{definition}
Spelled out, a partial Lawvere theory is a DCR category $\clL$ for which there is an identity-on-objects CR functor $\parcat{\bbF^\op} \to \clL$. A morphism of partial Lawvere theories is a functor $h :  \clL\to \clM$ s.t.\ the following triangle commutes:
\[
	\scriptsize\begin{tikzcd}
		& \parcat{\bbF^\op} \ar[dr, "q"] \ar[dl, "p"'] & \\
		\clL \ar[rr, "h"'] && \clM.
	\end{tikzcd}
\]

This defines the category $\pLaw$ of partial Lawvere theories.

Mimicking also the definition of model of a Lawvere theory, we obtain at once the notion of model of a \emph{partial} Lawvere theory:
\begin{definition}[Model of a partial Lawvere theory]\label{models_of_parlaw}
	A \emph{model} for a partial Lawvere theory $\clL$ is a CR functor $L \from \clL \to \Par$.
	A homomorphism $L \to L'$ is a \emph{lax} natural transformation $\alpha \from L \Rightarrow L'$.
\end{definition}
\begin{definition}\label{cat_of_models_of_parlaw}
	The category of models and homomorphisms of a partial Lawvere theory $\clL$ is denoted
	$\ct{pMod}_\clL$. As explained in Remark~\ref{rem:laxtotal}, the homomorphisms are \emph{total} functions.
\end{definition}

\section{Partial equational theories}\label{sec:partialequationaltheories}

In order to consider interesting examples of partial Lawvere theories, we need to introduce the notion of \emph{partial equational theory}. For partial structures, these are the syntactic analogue of equational theories, and yield partial Lawvere theories in a similar way to how equational theories yield Lawvere theories.

%We have seen (Definition~\ref{monosigna}) that
Monoidal signatures (Definition~\ref{monosigna}) $\Gamma$ have unrestricted arities and coarities.
Instead, a signature $\Sigma$ of an equational theory (Definition~\ref{unialg}) has function symbols of arbitrary arities, but---considered as a monoidal signature---all coarities are 1. Partial signatures are an intermediate concept: as for equational theories, coarities $>1$ are redundant, but we need to admit symbols of coarity $0$.

\begin{definition}
	A partial signature $\Delta\Defeq \Delta_0+\Delta_1$, where $\Delta_0$ is the set of operations of coarity $0$, and $\Delta_1$ is the set of operations of coarity 1. Each $\delta : \Delta_i$ comes with an arity $\text{ar}(\delta) : \bbN$.
\end{definition}

Differently from ordinary equational theories, we cannot use classical terms, which---as discussed in Remark~\ref{rem:terms}---are tied to an underlying cartesian structure.
%\fos{(but also Ivan:) what is the obstruction to use classical terms?}
Instead, we adapt Recipe~\ref{recipe:cartesian} to DCR categories, obtaining \emph{partial} terms as particular string diagrams. Before we launch into formal definitions, and illustrate them with a variety of examples, let us establish some intuitions for how to read the string diagrams.
\begin{itemize}
\item string diagrams represent partial terms, obtained through composing partial operations,
\item equalities and inequalities between them are understood in the sense of Kleene,
\item the comonoid structure $\{\raisebox{.25em}{},\,\raisebox{.25em}{}\}$ plays a similar role to that described
\S\ref{sec:history}
\end{itemize}

\begin{recipe}\label{recipe:dcr}\rm
	Given a partial signature $\Delta$, the free DCR prop $\clL_\Delta$ on $\Delta$ is the prop induced from the monoidal theory with
	\begin{itemize}
		\item generators $(\Delta + \{\raisebox{.25em}{},\,\raisebox{.25em}{},\,\raisebox{.25em}{}\})$
		\item equations~\eqref{eq:multassociative}, \eqref{eq:comonoid} and~\eqref{eq:sfrob}, together with
		      \[ 
						\begin{tikzpicture}[xscale=.75,baseline=(current bounding box.center)]
				      \mor{\delta}
				      \step{\comult}
				      \ezs{\comult\step{\umor{\delta}\lmor{\delta}}\node[left] at (0,2*\len) {$m$};
				      }{xshift=3.5cm}
				      \node[left] at (0,2*\len) {$m$};
				      \node[above] at (2.5,\len) {$=$};
						\end{tikzpicture}
						\]
		      for each $\delta : \Delta_1$, where $\text{ar}(\delta)=m$.
	\end{itemize}
\end{recipe}

\begin{definition}[Partial Equational Theory]
	A \emph{partial equation} is a pair $(s,t)$ where $s,t : \clL_\Delta(m,n)$
	for some $m,n:\bbN$; we usually write `$s=t$'.
	A \emph{partial equational theory} is a pair $(\Delta,G)$ where $\Delta$ is a partial signature and $G$ is a set of partial equations.
\end{definition}

%We shall consider a selection of examples of partial equational theories in \S\ref{sec:examples}. First,
We first return to a familiar example.
\begin{example}[(Partial) Commutative Monoids]
	We start with the monoidal theory of commutative monoids (Example~\ref{ex:propmonoid}), where the multiplication and unit generators are re-coloured to red to avoid a clash. %The difference between ordinary commutative monoids and partial commutative monoids defined by this partial theory is that
	In models, the multiplication operation may be partially defined \emph{and} the unit may be undefined. To define the partial theory of \emph{total} commutative monoids, we'd need to add equations:
	\begin{equation}\label{eq:totality}
		\begin{tikzpicture}[xscale=.75,baseline=(current bounding box.center)]
			\emult\step{\counit}
			\ezs{
				\up\counit
				\down\counit
			}{xshift=4cm}
			\eql{2.75}
		\end{tikzpicture}
		\hspace{3em}
		\begin{tikzpicture}[xscale=.75,baseline=(current bounding box.center)]
			\eunit\step{\counit} \ezs{\node at (0,2*\len) {$\raisebox{.25em}{}$};}{xshift=3cm}
			\eql{2.25}
		\end{tikzpicture}
	\end{equation}
\end{example}

\begin{example}[Equational Theories]
	Any equational theory $(\Sigma,E)$ is an example.
	One follows Recipe~\ref{recipe:dcr}, adding equations analogous to~\eqref{eq:totality} to specify that every generator in $\Sigma$ is total. The category of models of this partial theory then agrees with that of the Lawvere theory $\clL_{(\Sigma,E)}$.
\end{example}

The following elementary examples illustrate the novel features of partial Lawvere theories, highlighting the way in which they differ from classical (i.e., total) Lawvere theories.
\begin{example}[Equivalence Relations]
	Consider the partial Lawvere theory consisting of a single binary operation $R$ with coarity $0$, together with equations expressing symmetry and reflexivity:
	\[
		\begin{tikzpicture}[xscale=.75,baseline=(current bounding box.center)]
			\theR
		\end{tikzpicture}:\hspace{1.5em}
		\begin{tikzpicture}[xscale=.75,baseline=(current bounding box.center)]
			\braid\step{\theR}
			\ezs{\theR}{xshift=3cm}
			\eql{2.5}
		\end{tikzpicture}\hspace{3em}
		\begin{tikzpicture}[xscale=.75,baseline=(current bounding box.center)]
			\comult\step\theR
			\ezs{\counit}{xshift=3cm}
			\eql{2.5}
		\end{tikzpicture}  \]
	Note that \emph{inequations} of terms, as in Remark \ref{rem:enriched}, do not add expressivity. As such, we may use them freely when specifying partial Lawvere theories. Transitivity is intuitively captured by the inequation on the left, which, unfolding the definition of $\leq$, is precisely the equation on the right:
	\[
		\begin{tikzpicture}[xscale=.75,baseline=(current bounding box.center)]
			\up[2]\uid
			\down[2]\lid
			\comult
			\step{\up[2]\theR
				\down[2]\theR}
			\up{\ezs{
					\uid
					\step\theR
					\down[2]\braid
					\step{\down[2]\lcounit}
				}{xshift=3cm}}
			\node at (2.5,2*\len) {$\le$};
		\end{tikzpicture}
		\hspace{2.5cm}
		\begin{tikzpicture}[xscale=.75,baseline=(current bounding box.center)]
			\up[2]\uid
			\down[2]\lid
			\comult
			\step{\up[2]\theR
				\down[2]\theR}
			\ezs{\up[3]\comult
				\down[3]\comult
				\comult
				\step{\up[3]\uid
					\up[2]\turn
					\down[3]\lid
					\ezs{\down[1.4]\coturn}{yscale=2.5}
					\turn
					\down{\turn}
				}
				\step[2]{\ezs{\theR}{yshift=1.125cm, yscale=.5}}
				\step[2]{\ezs{\theR}{yshift=.375cm, yscale=.5}}
				\step[2]{\ezs{\theR}{yshift=-.625cm, yscale=.5}}
			}{xshift=3cm}
			\eql{2.5}
		\end{tikzpicture}
	\]
	A model $\mathcal{A}$ of this theory consists of a set $A$ together with an equivalence relation $=_A \subseteq A \times A$ corresponding to the domain of definition of $\mathcal{A}(R)$. A morphism $F : \mathcal{A} \to \mathcal{B}$ is a function $F : A \to B$ with $a =_A b \Rightarrow Fa =_B Fb$, which arises from the requirement that $F$ is a lax transformation:
	\begin{equation*}
		\begin{tikzpicture}[baseline=(current bounding box.center)]
			\theR
			\up{\gau{A}}
			\down{\gau{A}}
			\ezs{
				\ezs{\umor{F}\lmor{F}}{xscale=.75}
				\up{\gau{A}}
				\down{\gau{A}}
				\step[.75]\theR
			}{xshift=2cm}
			\node at (1.25,2*\len) {$\le$};
		\end{tikzpicture}
	\end{equation*}
	Thus, the variety corresponding to this theory is the category of \emph{Bishop sets} (\emph{setoids})~\cite{Palmgren09constructivistand}.
\end{example}
\begin{example}[Partial Combinatory Algebras]
	A \emph{partial combinatory algebra} (PCA) is a set $A$ with a binary partial operation $\_\bullet\_ : A \times A \to A$, and elements $\mathsf{s},\mathsf{k} \in A$ s.t.\ for any $x,y,z \in A$:
	\begin{enumerate}[(i)]
		\item $(\mathsf{k} \bullet x) \bullet y = x$
		\item $((\mathsf{s} \bullet x) \bullet y) \bullet z = (x \bullet z) \bullet (y \bullet z)$
		\item $(\mathsf{s} \bullet x) \bullet y$ is defined
	\end{enumerate}
	where ``='' is Kleene equality. The partial Lawvere theory of PCAs has three generators:
	\[
		\begin{tikzpicture}
			\dcouni{k}
			\ezs{\dcouni{s}}{xshift=3cm}
			\ezs{\emult}{xshift=6cm}
		\end{tikzpicture}
	\]
	and equations that ensure the totality of $k,s$, i.e.\ they define elements of the carrier, and $(iii)$:
	\[
		\begin{tikzpicture}[baseline=(current bounding box.center)]
			\ezs{
				\dcouni{k}
				\ezs{\counit}{xshift=1cm}}{xscale=.75}
			\ezs{\akasa}{xshift=2.5cm}
			\node at (1.75,2*\len) {$=$};
		\end{tikzpicture}
		\qquad
		\begin{tikzpicture}[baseline=(current bounding box.center)]
			\ezs{
				\dcouni{s}
				\ezs{\counit}{xshift=1cm}}{xscale=.75}
			\ezs{\akasa}{xshift=2.5cm}
			\node at (1.75,2*\len) {$=$};
		\end{tikzpicture}
		\qquad
		\begin{tikzpicture}[baseline=(current bounding box.center),xscale=.75,yscale=-1]
			\ldcouni{s}\uid
			\ezs{\emult}{xshift=1cm}
			\ezs{\lid}{yshift=.75cm,xscale=2}
			\ezs{\emult}{xshift=2cm,yshift=.25cm}
			\ezs{\counit}{xshift=3cm,yshift=.25cm}
			\ezs{\counit\ezs{\counit}{yshift=.5cm}}{xshift=5cm}
			\node at (4.25,3*\len) {$=$};
		\end{tikzpicture}
	\]
	as well as equations for $(i)$ and $(ii)$:
	\[
		\begin{tikzpicture}[xscale=.75,yscale=-1,baseline=(current bounding box.center)]
			\ldcouni{k}\uid
			\ezs{\emult}{xshift=1cm}
			\ezs{\lid}{yshift=.75cm,xscale=2}
			\ezs{\emult}{xshift=2cm,yshift=.25cm}
			\ezs{\lid}{xshift=4.5cm}
			\ezs{\counit}{yshift=.25cm,xshift=4.5cm}
			\node at (3.75,2*\len) {$=$};
		\end{tikzpicture}
		\qquad\qquad\qquad
		\begin{tikzpicture}[xscale=.5,yscale=-1,baseline=(current bounding box.center)]
			\ldcouni{s}\uid
			\ezs{\emult}{xshift=1cm}
			\ezs{\emult}{xshift=2cm, yshift=.25cm}
			\ezs{\emult}{xshift=3cm, yshift=.5cm}
			\ezs{\id}{yshift=.5cm,xscale=2}
			\ezs{\id}{yshift=.75cm,xscale=3}
			\ezs{\twoid
				\ezs{\comult}{yshift=1cm}
				\ezs{\braid}{xshift=1cm,yshift=.5cm}
				\ezs{\lid}{xshift=1cm}
				\ezs{\uid}{xshift=1cm,yshift=1cm}
				\ezs{\emult}{xshift=2cm}
				\ezs{\emult}{xshift=2cm,yshift=1cm}
				\ezs{\emult}{xshift=3cm,yscale=2}
			}
			{xshift=6cm, xscale=1.25}
			\node at (5,1) {$=$};
		\end{tikzpicture}
	\]
	The variety here is the category of PCAs and homomorphisms preserving the applicative structure.
\end{example}

\begin{example}[Pairing Functions]
	Consider the partial Lawvere theory with two operations which we think of as \emph{pairing} and \emph{unpairing} respectively, subject to the equation on the right:
	% $
	% 	\raisebox{.25em}{\begin{tikzpicture}[scale=.5,baseline=(current bounding box.center)]
	% 		\ecomult
	% 		\ezs{\emult}{xshift=2cm}
	% 	\end{tikzpicture}}

	\[
		\begin{tikzpicture}[baseline=(current bounding box.center)]
			\emult
			\step[1.5]{\ecomult}
			\ezs{
				\emult
				\ezs{\ecomult}{xshift=1cm}
				\ezs{\twoid}{xshift=3cm}
				\node at (2.5,2*\len) {$=$};}{xshift=6cm}
		\end{tikzpicture}
	\]
	Models are sets equipped with a \emph{pairing function}, and model morphisms map pairs to pairs. For example, $\mathbb{N}$ and Cantor's pairing function, or $\Lambda$ -- the set of untyped $\lambda$-terms -- with the usual pairing and projection functions. Note that our equation makes pairing a section, and so it is total.
\end{example}

\section{Multi-Sorted Equational Theories}\label{sec:multi}
In this section we present a progression of multi-sorted partial Lawvere theories for categories with different kinds of structure. While our development of partial Lawvere theories has thus far focused on the single-sorted case, the move to multi-sorted theories contains no surprises, so we omit the details. The short version is that props are replaced with \emph{coloured} props, and the sorting discipline changes accordingly. The examples that follow are developed incrementally: Each step adds more categorical structure to the models by adding the appropriate operations and equations to the theory, culminating in the partial Lawvere theory of cartesian closed categories.

\begin{example}[Directed Graphs]\label{ex:dirgraph}
	We begin with the partial Lawvere theory of \emph{directed graphs}, which has a sort $O$ of vertices and a sort $A$ of edges, together with source and target operations:
	\[
		\begin{tikzpicture}[baseline=(current bounding box.center)]
			\mor{s}
			\node[left] at (0,2*\len) {$\scriptscriptstyle A$};
			\node[right] at (1,2*\len) {$\scriptscriptstyle O$};
			\ezs{\mor{t}
				\node[left] at (0,2*\len) {$\scriptscriptstyle A$};
				\node[right] at (1,2*\len) {$\scriptscriptstyle O$};
			}{xshift=3cm}
			\ezs{\ezs{\mor{s}\step{\counit}}{xscale=.5}\ezs{\counit}{xshift=1.625cm}\node at (1.25,2*\len) {$=$};
				\node[left] at (0,2*\len) {$\scriptscriptstyle A$};
				\node[left] at (1.75,2*\len) {$\scriptscriptstyle A$};}{xshift=6cm}
			\ezs{\ezs{\mor{t}\step{\counit}}{xscale=.5}\ezs{\counit}{xshift=1.625cm}\node at (1.25,2*\len) {$=$};
				\node[left] at (0,2*\len) {$\scriptscriptstyle A$};
				\node[left] at (1.75,2*\len) {$\scriptscriptstyle A$};}{xshift=10cm}
		\end{tikzpicture}
	\]
	The associated variety is the category of directed graphs, as model morphisms $F$ must satisfy:
	\[\begin{tikzpicture}[scale=.75,baseline=(current bounding box.center)]
			\mor{s}
			\ezs{\mor{F}}{xshift=1cm}
			\ezs{\mor{F}
				\ezs{\mor{s}}{xshift=1cm}
			}{xshift=3cm}
			\node at (2.5,2*\len) {$=$};
			\ezs{
				\mor{t}
				\ezs{\mor{F}}{xshift=1cm}
				\ezs{\mor{F}
					\ezs{\mor{t}}{xshift=1cm}
				}{xshift=3cm}
				\node at (2.5,2*\len) {$=$};
			}{xshift=6cm}
		\end{tikzpicture}\]

\end{example}

\begin{example}[Reflexive Graphs]\label{ex:reflgraphs}
	Extending Example~\ref{ex:dirgraph}, we ask that each vertex has a self-loop:
	\[
		\begin{tikzpicture}
			\mor{\eed}
			\node[left] at (0,2*\len) {$\scriptscriptstyle O$};
			\node[right] at (1,2*\len) {$\scriptscriptstyle A$};
			\ezs{
				\ezs{\mor{\eed}\step{\counit}}{xscale=.5}\ezs{\counit}{xshift=2cm}\node at (1.25,2*\len) {$=$};
				\node[left] at (0,2*\len) {$\scriptscriptstyle O$};
				\node[left] at (2,2*\len) {$\scriptscriptstyle O$};
			}{xshift=2.5cm}
			\ezs{
				\mor{\eed}
				\step{\mor{s}}
				\node[left] at (0,2*\len) {$\scriptscriptstyle O$};
				\node[right] at (1.75,2*\len) {$\scriptscriptstyle O$};
				\ezs{\id
					\node[left] at (0,2*\len) {$\scriptscriptstyle O$};
					\node[right] at (1,2*\len) {$\scriptscriptstyle O$};
				}{xshift=4.25cm}
				\node at (3,2*\len) {$=$};
				\node at (6.5,2*\len) {$=$};
				\ezs{\mor{\eed}
					\step{\mor{t}}
					\node[left] at (0,2*\len) {$\scriptscriptstyle O$};
					\node[right] at (2,2*\len) {$\scriptscriptstyle O$};
				}{xshift=8cm}
			}{xshift=7cm,xscale=.5}
		\end{tikzpicture}
	\]
	then morphisms of models are required to preserve the self-loop, so the associated variety is the category of \emph{reflexive graphs}. Notice that along with Example~\ref{ex:dirgraph}, this could also be presented as a (total) 2-sorted Lawvere theory, since all the operations are total.
\end{example}

\begin{example}[Categories]\label{ex:cats}
	To capture \emph{categories} we extend Example~\ref{ex:reflgraphs} with a composition operator, which is defined when the target of the first arrow matches the source of the second:
	\[
		\begin{tikzpicture}[xscale=.8,baseline=(current bounding box.center)]
			\emult
			\node[left] at (0,\len) {$\scriptscriptstyle A$};
			\node[left] at (0,3*\len) {$\scriptscriptstyle A$};
			\node[right] at (1,2*\len) {$\scriptscriptstyle A$};
			\ezs{
				\umor{t}
				\lmor{s}
				\step{\mult\step\counit}
			}{xshift=3cm,xscale=.75}
			\ezs{\emult\step\counit
			}{xshift=6cm}
			\eql{5.5}
		\end{tikzpicture}
	\]
	and equations insisting composition is associative and unital, with identities given by the self-loops:
	\begin{gather*}
		\begin{tikzpicture}[xscale=.75,baseline=(current bounding box.center)]
			\umult[red!40]
			\lid
			\step{\emult}
			\ezs{\lmult[red!40]
				\uid
				\step{\emult}
			}{xshift=3.5cm,yshift=\len cm}
			\node[left] at (0,2*\len) {$\scriptstyle A$};
			\node[left] at (0,4*\len) {$\scriptstyle A$};
			\node[left] at (0,\len) {$\scriptstyle A$};
			\node[right] at (2,2*\len) {$\scriptstyle A$};
			\node at (2.625,2*\len) {$=$};
			\node[left] at (3.5,\len) {$\scriptstyle A$};
			\node[left] at (3.5,3*\len) {$\scriptstyle A$};
			\node[left] at (3.5,4*\len) {$\scriptstyle A$};
			\node[right] at (5.5,3*\len) {$\scriptstyle A$};
		\end{tikzpicture}
		\hspace{1cm}
		\begin{tikzpicture}[xscale=.5,baseline=(current bounding box.center)]
			\comult
			\step{\umor{s}\lid\step{\umor{\eed}{\lid}\step{\emult}}}
			\ezs{\id}{xshift=6cm}
			\ezs{\comult
				\step{\lmor{t}\uid\step{\lmor{\eed}{\uid}\step{\emult}}}
			}{xshift=9cm}
			\node[left] at (0,2*\len) {$\scriptstyle A$};
			\node[right] at (4,2*\len) {$\scriptstyle A$};
			\node[left] at (6,2*\len) {$\scriptstyle A$};
			\node[right] at (7,2*\len) {$\scriptstyle A$};
			\node[left] at (9,2*\len) {$\scriptstyle A$};
			\node[right] at (13,2*\len) {$\scriptstyle A$};
			\node at (5,2*\len) {$=$};
			\node at (8,2*\len) {$=$};
		\end{tikzpicture}
	\end{gather*}
	Model morphisms are precisely functors. It is worth noting that this involves an inequality:
	\[
		\begin{tikzpicture}[scale=.75]
			\ezs{
				\emult
				\ezs{\mor{F}}{xshift=1cm}
				\ezs{\umor{F}\lmor{F}
					\ezs{\emult}{xshift=1cm}}{xshift=3cm}
				\node at (2.5,2*\len) {$\le$};
			}{xshift=12cm}
		\end{tikzpicture}
	\]
	This states that if $f$ and $g$ are composable then so are $Ff$ and $Fg$, and in particular $F (f \comp g) = Ff \comp Fg$. If this were an equality, it would insist also that if $Ff$ and $Fg$ are composable, then so are $f$ and $g$, which is not always the case. Of course, the associated variety is the category of small categories.
\end{example}
\begin{example}[\fix{Strict} Monoidal Categories]\label{ex:moncats}
	Next, we extend Example~\ref{ex:cats} by asking for a functorial binary operation $\otimes$ on $O$ and $A$ together with a unit constant $\top$ of $O$:
	\begin{gather*}
\begin{tikzpicture}[xscale=.5,baseline=(current bounding box.center)]
\dcouni{\top}
\node[right] at (1,2*\len) {\tiny $O$};
\end{tikzpicture}
:\hspace{1.5em}
\begin{tikzpicture}[xscale=.5,baseline=(current bounding box.center)]
\dcouni{\top}
\step\counit
\ezs{\akasa}{xshift=3cm,xscale=2}
\node at (2.25,2*\len) {$=$};
\end{tikzpicture}
\hspace{3em}
\begin{tikzpicture}[xscale=.5,baseline=(current bounding box.center)]
\tenmult
\node[left] at (0,3*\len) {\tiny $O$};
\node[left] at (0,\len) {\tiny $O$};
\node[right] at (1,2*\len) {\tiny $O$};
\end{tikzpicture}
:\hspace{1.5em}
\begin{tikzpicture}[xscale=.5,baseline=(current bounding box.center)]
\tenmult
\node[left] at (0,3*\len) {\tiny $O$};
\node[left] at (0,\len) {\tiny $O$};
\step{\counit}
\ezs{\ucounit\lcounit
\node[left] at (0,3*\len) {\tiny $O$};
\node[left] at (0,\len) {\tiny $O$};
}{xshift=4.5cm}
\node at (2.75,2*\len) {$=$};
\end{tikzpicture}\\
\begin{tikzpicture}[xscale=.5,baseline=(current bounding box.center)]
\tenmult
\node[left] at (0,3*\len) {\tiny $A$};
\node[left] at (0,\len) {\tiny $A$};
\step{\counit}
\ezs{\ucounit\lcounit
\node[left] at (0,3*\len) {\tiny $A$};
\node[left] at (0,\len) {\tiny $A$};
}{xshift=4.5cm}
\node at (2.75,2*\len) {$=$};
\end{tikzpicture}
\hspace{3em}
\begin{tikzpicture}[xscale=.5,baseline=(current bounding box.center)]
\tenmult
\node[left] at (0,3*\len) {\tiny $A$};
\node[left] at (0,\len) {\tiny $A$};
\node[right] at (2,2*\len) {\tiny $O$};
\step{\mor{s}}
\ezs{
\umor{s}\lmor{s}\step{\tenmult}
\node[left] at (0,3*\len) {\tiny $A$};
\node[left] at (0,\len) {\tiny $A$};
\node[right] at (2,2*\len) {\tiny $O$};
}{xshift=5cm}
\node at (3.5,2*\len) {$=$};
\end{tikzpicture}\\
\hspace{3em}
\begin{tikzpicture}[xscale=.5,baseline=(current bounding box.center)]
\tenmult
\node[left] at (0,3*\len) {\tiny $A$};
\node[left] at (0,\len) {\tiny $A$};
\node[right] at (2,2*\len) {\tiny $O$};
\step{\mor{t}}
\ezs{
\umor{t}\lmor{t}\step{\tenmult}
\node[left] at (0,3*\len) {\tiny $A$};
\node[left] at (0,\len) {\tiny $A$};
\node[right] at (2,2*\len) {\tiny $O$};
}{xshift=5cm}
\node at (3.5,2*\len) {$=$};
\end{tikzpicture}
\hspace{3em}
\begin{tikzpicture}[xscale=.5,baseline=(current bounding box.center)]
\up[2]\tenmult
\down[2]\tenmult
\step{\down[2]{\ezs{\emult}{yscale=2}}}
\ezs{\up[2]\uid\braid\down[2]\lid
\step{\up[2]\emult
\down[2]\emult}
\step[2]{\down[2]{\ezs{\tenmult}{yscale=2}}}
\node[left] at (0,3*\len) {\tiny $A$};
\node[left] at (0,5*\len) {\tiny $A$};
\node[left] at (0,\len) {\tiny $A$};
\node[left] at (0,-\len) {\tiny $A$};
\node[right] at (3,2*\len) {\tiny $A$};
}{xshift=5cm,xscale=1.25}
\node[left] at (0,3*\len) {\tiny $A$};
\node[left] at (0,5*\len) {\tiny $A$};
\node[left] at (0,\len) {\tiny $A$};
\node[left] at (0,-\len) {\tiny $A$};
\node[right] at (2,2*\len) {\tiny $A$};
\node at (3.5,2*\len) {$=$};
\end{tikzpicture}
\end{gather*}
	Additionally, we require equations to the effect that $\otimes$ is associative and unital:
	\begin{gather*}
\begin{tikzpicture}[xscale=.5,baseline=(current bounding box.center)]
\up\tenmult
\lid
\step{\tenmult}
\node[left] at (0,\len) {\tiny $O$};
\node[left] at (0,2*\len) {\tiny $O$};
\node[left] at (0,4*\len) {\tiny $O$};
\node[right] at (2,2*\len) {\tiny $O$};
\up{\ezs{
\up\id
\down\tenmult
\step{\tenmult}
\node[left] at (0,0) {\tiny $O$};
\node[left] at (0,2*\len) {\tiny $O$};
\node[left] at (0,3*\len) {\tiny $O$};
\node[right] at (2,2*\len) {\tiny $O$};}{xshift=5cm}}
\node at (3.5,2*\len) {$=$};
\end{tikzpicture}\hspace{3em}
\begin{tikzpicture}[xscale=.5,baseline=(current bounding box.center)]
\up\tenmult
\lid
\step{\tenmult}
\node[left] at (0,\len) {\tiny $A$};
\node[left] at (0,2*\len) {\tiny $A$};
\node[left] at (0,4*\len) {\tiny $A$};
\node[right] at (2,2*\len) {\tiny $A$};
\up{\ezs{
\up\id
\down\tenmult
\step{\tenmult}
\node[left] at (0,0) {\tiny $A$};
\node[left] at (0,2*\len) {\tiny $A$};
\node[left] at (0,3*\len) {\tiny $A$};
\node[right] at (2,2*\len) {\tiny $A$};}{xshift=5cm}}
\node at (3.5,2*\len) {$=$};
\end{tikzpicture}\\[3mm]
\begin{tikzpicture}[xscale=.5,baseline=(current bounding box.center)]
\up{\dcouni{\top}}
\down{\id}
\step{\up{\mor{\eed}}\down\id}
\step[2]\tenmult
\node[left] at (0,\len) {\tiny $A$};
\node[right] at (3,2*\len) {\tiny $A$};
\node at (4.5,2*\len) {$=$};
\ezs{\down{\dcouni{\top}}
\up{\id}
\step{\down{\mor{\eed}}\up\id}
\step[2]\tenmult
\node[left] at (0,3*\len) {\tiny $A$};
\node[right] at (3,2*\len) {\tiny $A$};
}{xshift=6cm}
\end{tikzpicture}\hspace{3em}
\begin{tikzpicture}[xscale=.5,baseline=(current bounding box.center)]
\up\id
\down{\dcouni{\top}}
\step\tenmult
\node[left] at (0,3*\len) {\tiny $O$};
\node[right] at (2,2*\len) {\tiny $O$};
\ezs{
\id
\node[left] at (0,2*\len) {\tiny $O$};
\node[right] at (1,2*\len) {\tiny $O$};
}{xshift=5cm}
\ezs{
\down\id
\up{\dcouni{\top}}
\step\tenmult
\node[left] at (0,\len) {\tiny $O$};
\node[right] at (2,2*\len) {\tiny $O$};
}{xshift=9cm}
\node at (3.5,2*\len) {$=$};
\node at (7.5,2*\len) {$=$};
\end{tikzpicture}
\end{gather*}
	Now the associated variety is the category of strict monoidal categories and strict monoidal functors.
\end{example}

\begin{example}[Symmetric \fix{Strict} Monoidal Categories]\label{ex:symmoncats}
	To capture \emph{symmetric monoidal categories}, we extend Example~\ref{ex:moncats} with a binary operation $\sigma : O \otimes O \to A$ for the braiding maps, subject to:
	\begin{gather*}
\begin{tikzpicture}[xscale=.75,baseline=(current bounding box.center)]
\step{\cothingy{\sigma}}
\up{\gau{O}}
\down{\gau{O}}
\dro{A}
\end{tikzpicture}
:\hspace{1.5em}
\begin{tikzpicture}[xscale=.75,baseline=(current bounding box.center)]
\cothingy{\sigma}\counit
\ezs{\ucounit\lcounit
\up{\gau{O}}
\down{\gau{O}}
}{xshift=3cm}
\eql{1.5}
\end{tikzpicture}
\hspace{3em}
\begin{tikzpicture}[xscale=.75,baseline=(current bounding box.center)]
\step{\cothingy{\sigma}\mor{s}}
\up{\gau{O}}
\down{\gau{O}}
\step{\dro{O}}
\ezs{\tenmult
\up{\gau{O}}
\down{\gau{O}}
\dro{O}
}{xshift=5cm}
\eql{3.5}
\end{tikzpicture}\\
\begin{tikzpicture}[xscale=.75,baseline=(current bounding box.center)]
\step{\cothingy{\sigma}\mor{t}}
\up{\gau{O}}
\down{\gau{O}}
\step{\dro{O}}
\ezs{\braid\step{\tenmult
\step[-1]{\up{\gau{O}}
\down{\gau{O}}}
\dro{O}}
}{xshift=5cm}
\eql{3.5}%
\end{tikzpicture}\hspace{1em}
\begin{tikzpicture}[xscale=.75,baseline=(current bounding box.center)]
\up[2]\comult
\down[2]\comult
\step{\up[2]\uid\braid\down[2]\lid}
\up[2]{\step[3]{\cothingy{\sigma}}}
\down[2]{\step[2]\braid}
\down[2]{\step[4]{\cothingy{\sigma}}}
\step[3]{\up[2]\id}
\down[2]{\step[4]{\ezs{\emult}{yscale=2}}}
\up[2]{\gau{O}}
\down[2]{\gau{O}}
\step[4]{\dro{A}}
\eql{6}
\ezs{\tenmult\step{\mor{\eed}}
\up[2]{\gau{O}}
\down[2]{\gau{O}}
\step{\dro{A}}
}{xshift=7cm}
\end{tikzpicture}\\
\begin{tikzpicture}[xscale=.75,baseline=(current bounding box.center)]
\up[2]{\comult\step\uid}
\down[2]{\comult\step\lid}
\step\braid
\step[2]{\up[2]\tenmult}
\step[2]{\down[2]{\umor{t}\lmor{t}}}
\step[3]{\up\uid}
\step[4]{\down[2]{\cothingy{\sigma}}}
\step[4]{\ezs{\down\emult}{yscale=2}}
%%%%%%%%%%%
\ezs{\up[2]{\comult\step\uid}
\down[2]{\comult\step\lid}
\step\braid
\step[3]{\up[2]\tenmult}
\step[2]{\down[2]{\umor{s}\lmor{s}}\up[2]\braid}
% \step[3]{\up\uid}
\step[4]{\down[2]{\cothingy{\sigma}}}
\step[4]{\ezs{\down\emult}{yscale=2}}}{yshift=1cm,xshift=7cm,yscale=-1}
\eql{6}
\up[2]{\gau{A}}
\down[2]{\gau{A}}
\step[4]{\dro{A}}
\step[7]{\up[2]{\gau{A}}
\down[2]{\gau{A}}\step[4]{\dro{A}}}
\end{tikzpicture}
\end{gather*}
	This gives the variety of strict monoidal categories and symmetric strict monoidal functors.
\end{example}
\begin{example}[Cartesian Restriction Categories]\label{ex:crcats}
	In light of \ref{thm:newfox}, we can capture \emph{CR categories} by extending Example~\ref{ex:symmoncats} with operations $\delta : O \to A$ and $\varepsilon : O \to A$ corresponding to the comultiplication and counit of the comonoid structure:
	\begin{gather*}
\begin{tikzpicture}[xscale=.75,baseline=(current bounding box.center)]
\mor{\delta} \node[left] at (0,2*\len) {$\scriptstyle O$}; \node[right] at (1,2*\len) {$\scriptstyle A$};
\end{tikzpicture}:
\hspace{1em}
\begin{tikzpicture}[xscale=.75,baseline=(current bounding box.center)]
\counit
\ezs{\mor{\delta}
\step{\counit}}{xshift=2cm}
\node at (1.25,2*\len) {$=$};
\end{tikzpicture}
\hspace{1em}
\begin{tikzpicture}[xscale=.75,baseline=(current bounding box.center)]
\mor{\delta}
\step{\mor{s}}
\ezs{\id}{xshift=3cm}
\node at (2.5,2*\len) {$=$};
\end{tikzpicture}
\hspace{1em}
\begin{tikzpicture}[xscale=.75,baseline=(current bounding box.center)]
\mor{\delta}
\step{\mor{t}}
\ezs{\comult
\step{\tenmult}}{xshift=3cm}
\node at (2.5,2*\len) {$=$};
\end{tikzpicture}\\
\begin{tikzpicture}[xscale=.75,baseline=(current bounding box.center)]
\mor{\varepsilon} \node[left] at (0,2*\len) {$\scriptstyle O$}; \node[right] at (1,2*\len) {$\scriptstyle A$};
\end{tikzpicture}:
\hspace{1em}
\begin{tikzpicture}[xscale=.75,baseline=(current bounding box.center)]
\mor{\varepsilon}
\step{\counit}
\ezs{\counit}{xshift=3cm}
\node at (2.25,2*\len) {$=$};
\end{tikzpicture}
\hspace{1em}
\begin{tikzpicture}[xscale=.75,baseline=(current bounding box.center)]
\mor{\varepsilon}
\step{\mor{s}}
\ezs{\id}{xshift=3cm}
\node at (2.5,2*\len) {$=$};
\end{tikzpicture}
\hspace{1em}
\begin{tikzpicture}[xscale=.75,baseline=(current bounding box.center)]
\mor{\varepsilon}
\step{\mor{t}}
\ezs{\counit
\step{\dcouni{\top}}}{xshift=3cm}
\node at (2.5,2*\len) {$=$};
\end{tikzpicture}
\end{gather*}
	along with equations insisting that $\delta$ and $\varepsilon$ are coherent with respect to the monoidal structure:
	\begin{gather*}
		\begin{tikzpicture}[xscale=.5]
			\tenmult\step{\mor{\delta}}
			\ezs{
				\down\lcomult\up\ucomult
				\step{\up[2]\uid\braid\down[2]\lid}
				\step[2]{\up[3.5]{\ezs{\ucomult}{yscale=.5}}\up[1.5]{\ezs{\ucomult}{yscale=.5}}\down[2]\twoid}
				\step[3]{\lmor{\delta}\down[2]{\lmor{\delta}}\up[.5]\id\up[3.5]{\mor{\eed}}}
				\step[4]{
					\up[3]{\ezs{\cothingy{\sigma}}{yscale=.5}}
				}
				\step[4]{\up[.5]{\mor{\eed}\up[2.75]{\ezs{\tenmult}{yscale=.75}}}\down[2]\tenmult}
				\step[5]{\up[1.75]{
						\draw[wire] (0,\len-.0625) -| (2*\len,3*\len) -- (0,3*\len);
						\draw[wire] (2*\len,2*\len) -- (1,2*\len);
						\node[fill=white,inner sep=-1.25pt,circle] at (2*\len,2*\len) { $\otimes$};}
					\down[2]\id
				}
				\step[6]{\down[1.5]{\ezs{
							\draw[wire] (0,\len-.035) -| (2*\len,3*\len) -- (0,3*\len);
							\draw[wire] (2*\len,2*\len) -- (1,2*\len);
							\node[dot] at (2*\len,2*\len) {};
						}{yscale=1.75}}}
			}{xshift=3cm}
			\node at (2.5,2*\len) {$=$};
		\end{tikzpicture}
		\hspace{2.5cm}
		\begin{tikzpicture}[xscale=.75]
			\tenmult\step{\mor{\varepsilon}}
			\ezs{\umor{\varepsilon}\lmor{\varepsilon}
				\step\tenmult
			}{xshift=3cm}
			\node at (2.5,2*\len) {$=$};
		\end{tikzpicture}
	\end{gather*}
	And finally equations for the commutative comonoid axioms, and naturality of $\delta$:
	\begin{gather*}
\begin{tikzpicture}[xscale=.5,baseline=(current bounding box.center)]
\ezs{\comult}{yscale=2}
\step{\up{\lcomult\up[2]{\umor{\delta}}}}
\step[2]{\umor{\delta}\lmor{\eed}\up[3]\uid}
\step[3]{\tenmult\up[3]\uid}
\step[4]{\ezs{\emult}{yscale=2}}
\ezs{
\ezs{\comult}{yscale=2}
\step{\comult\up[3]{\umor{\delta}}}
\step[2]{\up[3]\uid\umor{\eed}\lmor{\delta}}
\step[3]{\tenmult\up[3]\uid}
\step[4]{\ezs{\emult}{yscale=2}}
}{xshift=7cm}
\node at (6,4*\len) {$=$};
\end{tikzpicture}\hspace{2cm} %break one
\begin{tikzpicture}[xscale=.5,baseline=(current bounding box.center)]
\ezs{\comult}{yscale=2}
\step{\comult\up[3]{\umor{\delta}}}
\step[2]{\up[3]\uid}
\step[3]{\cothingy{\sigma}\ezs{\emult}{yscale=2}}
\ezs{\up[2]{\mor{\delta}}
}{xshift=6cm}
\node at (5,4*\len) {$=$};
\end{tikzpicture}\\ %break two
\begin{tikzpicture}[xscale=.5,baseline=(current bounding box.center)]
\ezs{\comult}{yscale=2}
\step{\up{\lcomult\up[2]{\umor{\delta}}}}
\step[2]{\umor{\varepsilon}\lmor{\eed}\up[3]\uid}
\step[3]{\tenmult\up[3]\uid}
\step[4]{\ezs{\emult}{yscale=2}}
\ezs{
\ezs{\comult}{yscale=2}
\step{\comult\up[3]{\umor{\delta}}}
\step[2]{\up[3]\uid\umor{\eed}\lmor{\varepsilon}}
\step[3]{\tenmult\up[3]\uid}
\step[4]{\ezs{\emult}{yscale=2}}
}{xshift=7cm}
\node at (6,4*\len) {$=$};
\end{tikzpicture}\hspace{1cm} % break 3
\begin{tikzpicture}[xscale=.5,baseline=(current bounding box.center)]
\ezs{\comult}{yscale=2}
\step{\up{\lcomult\up[2]{\umor{s}}}}
\step[2]{\tenmult\up[3]{\umor{\delta}}}
\step[3]{\id\up[3]\uid}
\step[4]{\ezs{\emult}{yscale=2}}
\ezs{
\up[2]{\comult
\step{\lmor{t}
\uid}
\step[2]{\lmor{\delta}\uid}
\step[3]{\emult}
}
}{xshift=7cm}
\node at (6,4*\len) {$=$};
\end{tikzpicture}
\end{gather*}
	The associated variety is the category of CR categories and CR functors.
\end{example}
\begin{example}[Discrete Cartesian Restriction Categories]\label{ex:dcrcats}
	\ref{thm:dcrc} makes it easy to capture \emph{DCR categories} by extending Example~\ref{ex:crcats} with $\mu : O \to A$ satisfying the Frobenius and special equations: there is a \begin{tikzpicture}[xscale=.5,baseline=(current bounding box.center)]
		\premor{\mu}{O}{A}
		\end{tikzpicture} such that
	\begin{gather*}
%\begin{tikzpicture}[xscale=.5,baseline=(current bounding box.center)]
%\tmor{\mu}{O}{A}
%\end{tikzpicture}
%:\hspace{1.5em}
\begin{tikzpicture}[xscale=.5,baseline=(current bounding box.center)]
\mor{\mu}\step{\counit}
\gau{O}
\ezs{\counit\gau{O}}{xshift=4cm}
\eql{2.5}
\end{tikzpicture}
\hspace{9em}
\begin{tikzpicture}[xscale=.5,baseline=(current bounding box.center)]
\mor{\mu}
\step{\mor{s}}
\gau{O}\step{\dro{O}}
\ezs{
\comult\step{\tenmult}
\gau{O}\step{\dro{O}}
}{xshift=5cm}
\eql{3.5}
\end{tikzpicture}\\
\begin{tikzpicture}[xscale=.5,baseline=(current bounding box.center)]
\mor{\mu}
\step{\mor{t}}
\gau{O}\step{\dro{O}}
\ezs{
\id\gau{O}\dro{O}
}{xshift=5cm}
\eql{3.5}
\end{tikzpicture}
\hspace{6.5em}
\begin{tikzpicture}[xscale=.5,baseline=(current bounding box.center)]
\comult
\step{\umor{\delta}\lmor{\mu}}
\step[2]{\emult}
\gau{O}\step[2]{\dro{A}}
\eql{4.5}
\ezs{\tmor{\eed}{O}{A}}{xshift=6cm}
\end{tikzpicture}\\
\begin{tikzpicture}[xscale=.5,baseline=(current bounding box.center)]
\ezs{\comult}{yscale=2}
\step{\up[4]\comult\comult}
\step[2]{\umor{\mu}\lmor{\eed}\up[4]{\umor{\eed}\lmor{\delta}}}
\step[3]{\tenmult\up[4]\tenmult}
\step[4]{\ezs{\emult}{yscale=2}}
\up[2]{\gau{O}\step[4]{\dro{A}}}
%%%%%%%%%%%%%%%%%
\ezs{
\comult
\step{\lmor{\delta}\umor{\mu}}
\step[2]{\emult}
\gau{O}\step[2]{\dro{A}}
}{yshift=2*\len cm, xshift=8cm}
%%%%%%%%%%%%%%%%%
\ezs{\ezs{\comult}{yscale=2}
\step{\up[4]\comult\comult}
\step[2]{\umor{\eed}\lmor{\mu}\up[4]{\umor{\delta}\lmor{\eed}}}
\step[3]{\tenmult\up[4]\tenmult}
\step[4]{\ezs{\emult}{yscale=2}}
\up[2]{\gau{O}\step[4]{\dro{A}}}}{xshift=14cm}
\up[2]{\eql{6.5}\eql{12.5}}
\end{tikzpicture}
\end{gather*}
	The variety is the category of strict DRC categories and strict CR functors (since they preserve $\mu$).
\end{example}
\begin{example}[Cartesian Categories]\label{ex:cartcats}
	To capture \emph{cartesian categories} instead, we can extend Example~\ref{ex:crcats} with one equation, ensuring that $\varepsilon$ is natural:
	\[
		\begin{tikzpicture}[xscale=.5,baseline=(current bounding box.center)]
			\comult
			\step{\uid\lmor{t}}
			\step[2]{\uid\lmor{\varepsilon}}
			\step[3]\emult
			\gau{A}
			\step[3]{\dro{A}}
			\ezs{\mor{s}\step{\mor{\varepsilon}}
				\gau{A}\step{\dro{A}}}{xshift=7cm}
			\eql{5.5}
		\end{tikzpicture}
	\]
	Then by Theorem~\ref{thm:fox}, this gives the variety of strict cartesian categories and strict cartesian functors.
\end{example}
\begin{example}[Cartesian Closed Categories]\label{carclocat}
	Finally, to capture \emph{cartesian closed categories} we extend Example~\ref{ex:cartcats} with an operator $\mathsf{exp} : O \otimes O \to O$, the idea being that $\mathsf{exp}(A,B)$ is the internal hom $[A,B]$, along with an operator $\mathsf{ev} : O \otimes O \to O$ that gives the corresponding evaluation map:
	\begin{gather*}
		\begin{tikzpicture}[xscale=.75,baseline=(current bounding box.center)]
			\cothingy{\exp}
			\step[-1]{\up{\gau{O}}}
			\step[-1]{\down{\gau{O}}}
			\step[-1]{\dro{O}}
		\end{tikzpicture}:\hspace{1.5em}
		\begin{tikzpicture}[xscale=.75,baseline=(current bounding box.center)]
			\cothingy{\exp}\counit
			\ezs{\ucounit\lcounit}{xshift=2cm}
			\node at (1.25,2*\len) {$=$};
		\end{tikzpicture}
		\hspace{1cm}
		\begin{tikzpicture}[xscale=.75,baseline=(current bounding box.center)]
			\step{\cothingy{\ev}}
			\up{\gau{O}}
			\down{\gau{O}}
			\dro{A}%here
		\end{tikzpicture}:\hspace{1.5em}
		\begin{tikzpicture}[xscale=.75,baseline=(current bounding box.center)]
			\cothingy{\ev}\counit
			\ezs{\ucounit\lcounit}{xshift=2cm}
			\node at (1.25,2*\len) {$=$};
		\end{tikzpicture}\\
		\begin{tikzpicture}[xscale=.75,baseline=(current bounding box.center)]
			\cothingy{\ev}\mor{s}
			\ezs{\ucomult\down\lid
				\step{\up\uid\down\braid}
				\step[2]{\down[2]\id}
				\step[3]{\up{\cothingy{\exp}}}
				\step[3]{\down[1.5]{\ezs{\tenmult}{yscale=1.5}}}
			}{xshift=2cm}
			\node at (1.5,2*\len) {$=$};
		\end{tikzpicture}
		\hspace{1.5cm}
		\begin{tikzpicture}[xscale=.75,baseline=(current bounding box.center)]
			\cothingy{\ev}\mor{t}
			\ezs{
				\ucounit
				\lid
			}{xshift=2cm}
			\node at (1.5,2*\len) {$=$};
		\end{tikzpicture}\hspace{3em}
	\end{gather*}
	along with an operation $\lambda$ and equations stating, intuitively, that $\lambda(X,A,B,f)$ is defined precisely in case $f : X \times A \to B$, and yields a map $\lambda (X,A,B,f) : X \to [A,B]$ as in:
	\begin{gather*}
\begin{tikzpicture}[yscale=.5,baseline=(current bounding box.center)]
\multi{\lambda}
\up[2]{\dro{A}}
\up{\gau{O}}
\up[3]{\gau{O}}
\up[5]{\gau{O}}
\down{\gau{A}}
\end{tikzpicture}
\hspace{1.5em}
\begin{tikzpicture}[xscale=.75,baseline=(current bounding box.center)]
\ezs{
\multi{\lambda}}{yscale=.75,xscale=1.25}
\step{\up\counit}
\ezs{\up[2]\tenmult\down[2]{\ezs{\down\coturn}{yscale=1}\turn}
\step{\up\uid\down{\comult}\down[4]{\ezs{\id}{xscale=2}}}
\step[2]{\up\uid\mor{s}\down[2]{\mor{t}}}
\step[3]{\up\mult\down[3]\mult}
\step[4]{\up\counit\down[3]\counit}
}{xshift=3cm,yshift=.5cm}
\up{\eql{2.25}}
\end{tikzpicture}
% \end{document}
\\
\begin{tikzpicture}[baseline=(current bounding box.center),yscale=.75]
\multi{\lambda}
\step{\up[2]{\mor{s}}}
\ezs{\up[5]\id
\up[3]\counit
\up\counit
\down\counit
}{xshift=3cm}
\up[2]{\eql{2.5}}
\end{tikzpicture}\hspace{4em}
\begin{tikzpicture}[baseline=(current bounding box.center),yscale=.75]
\multi{\lambda}
\step{\up[2]{\mor{t}}}
\ezs{\up[5]\counit
\up[2]{\step{\cothingy{\exp}}}
\down\counit
}{xshift=3cm}
\up[2]{\eql{2.5}}
\end{tikzpicture}
\end{gather*}
	also equations insisting that if $f : X \times A \to B$ then  $(\lambda(X,A,B,f) \times 1)\comp \mathsf{ev} = f$ holds:
	\[
		\begin{tikzpicture}[xscale=.75,baseline=(current bounding box.center)]
\up{\uid}
\lcomult
\down[6]{\ucomult\down{\lid}}
\step{\up{\twoid}\down[3]{\braid}\down[7]{\braid}\step{\ezs{\down[7]\lid}{xscale=2}\ezs{\up\uid\down\twoid}{xscale=2}\down[5]\braid\step{\down[6]\comult\down[4]\id\step{\down[5]{\mor{\eed}}}}}}
\down[3]{\ezs{\multi{\lambda}\ezs{\cothingy{\ev}}{yshift=-3.5*\len cm, xshift=1cm,yscale=.5}\step{\ezs{\down\tenmult}{yscale=2}}}{xshift=4cm}}
\ezs{\down[6.5]{
\draw[wire] (0,\len) -| ++(.625,2.5*\len) -- ++(2*\len,0);}
}{xshift=5cm}
\ezs{\down[4]\emult}{xshift=6cm}
\end{tikzpicture}\hspace{1.5em}
=
\hspace{1.5em}
\begin{tikzpicture}[xscale=.75,baseline=(current bounding box.center)]
\up{\tenmult}\down[5]\lid\ezs{\down[3]\id}{xscale=2}
\step{\ezs{\id}{xscale=4,yshift=\len cm}\down[4]{\down\lcomult}}
\step[2]{\down[4]\coturn\ezs{\down[6]\lid}{xscale=5}\ezs{\down[3]\turn}{yscale=1.5}}
\step[3]{\down[5]{\ezs{\id}{xscale=2}}\down[2]\comult}
\step[4]{\lmor{s}\down[2]{\lmor{t}}}
\step[5]{\ezs{\mult}{yshift=.5*\len}\down[4.5]{\ezs{\mult}{yshift=.5*\len cm}}}
\step[6]{\down[.5]{\up[.5]\counit\down[3.5]\counit}}
\end{tikzpicture}
	\]
	and that if $g : X \to [A,B]$ then $\lambda(X,A,B,(g \times 1)\comp \mathsf{ev}) = g$ holds:
	\[
		\begin{tikzpicture}[xscale=.75,baseline=(current bounding box.center)]
\ezs{\uid}{xscale=4}\down[6]\comult\step{\down[2]{\cothingy{\rotatebox[origin=c]{90}{$\scriptscriptstyle\exp$}}}}
\step{\down[4.5]{\ezs{\braid}{yscale=1.5}}\ezs{\down[6]\lid}{xscale=5}}
\step[2]{\down[2]\comult\ezs{\down[5]\id}{xscale=2}}
\step[3]{\lmor{s}\down[2]{\lmor{t}}}
\step[4]{\mult\down[4.5]{\ezs{\mult}{yshift=.5*\len cm}}}
\step[5]{\counit\down[4]\counit}
\end{tikzpicture}
\hspace{1.5em}=\hspace{1.5em}
%%%%%%%%%%%%%%%%%%%%%%%%%%%%%%%%%%%%%%%%%%%%%%
\begin{tikzpicture}[xscale=.75,baseline=(current bounding box.center)]
\up[4]{\ezs{\id}{xscale=6}}\ezs{\lcomult}{yscale=2}\down[5]\twoid
\step[1]{\up[2]\comult\down[3]{\braid}\down[5]\lid}
\step[2]{\down[2]\comult\down[5]\braid\up[2]\twoid}
\step[3]{\ezs{\up[3]\id}{xscale=3}\down[2]\lid\down[3]{\lid\down[2]{\lmor{\eed}}}}
\step[4]{\ezs{\down[1.675]\braid}{yscale=1.5}\cothingy{\ev}\down[5]\tenmult}
\step[5]{\down[4]\braid\down\uid}
\step[6]{\down[4]\emult\down\coturn\up[2]\coturn\up[3]\coturn}
\step[7]{\down\id}
\step[6.75]{
\ezs{\draw[fill=white, thin] (\len,\len-.15) rectangle (3*\len,3*\len+.15) node[pos=.5,] {$\scriptscriptstyle \lambda$};}{yshift=-.875cm,yscale=2.25}}
\end{tikzpicture}
	\]
	Now the associated variety is the category of strict cartesian closed categories and strict cartesian closed functors: these preserve hom-objects and, when $\lambda(X,A,B,f)$ is defined, satisfy $F\lambda(X,A,B,f) = \lambda(FX,FA,FB,Ff)$.
	This presentation of cartesian closed categories is essentially due to Freyd: a version of it is given immediately after the first appearance of the notion of essentially algebraic theory in \cite{freyd1972aspects}, albeit somewhat informally, and using very different syntax.
\end{example}

\section{The Variety Theorem for Partial Theories}\label{the_variety}
Here we classify the categories of models of partial Lawvere theories. These turn out to be exactly the locally finitely presentable (LFP) categories~\cite[1A]{adamek_rosicky_1994}.
LFP categories have an important position in categorical algebra, due to deep connections with model theory \cite[Ch. 5]{adamek_rosicky_1994} and \cite{makkai1989accessible}, homotopy theory \cite{Dugger2001}, and universal algebra \cite[Ch. 3]{adamek_rosicky_1994}.
%
% We recall some of the definitions in Appendix B; a more complete account is in the aforementioned references.
\begin{definition}
	In a category $\clC$, an object $C$ is \emph{finitely presentable} if the hom-functor $\clC(C,\firstblank)$ preserves directed colimits (see \cite[1.1]{adamek_rosicky_1994} for the definition). %A \emph{generator} is an object $C$ of $\bfC$ s.t.\ the hom-functor $\bfC(C,\firstblank)$ is faithful. This means, it ``distinguishes arrows'' in that if $\bfC(C,f) = \bfC(C,g)$ for two parallel arrows $f,g$, then $f=g$.
\end{definition}
This notion might appear obscure to the reader unfamiliar with categorical logic; \cite[1.2]{adamek_rosicky_1994} contains lots of examples to help the reader build their intuition: for instance, an object of the category of sets is finitely presentable if and only if it is finite, and a (commutative) monoid is finitely presentable if and only if it admits a presentation $\langle G\mid R\rangle$ where both $G$ (set of generators) and $R$ (set of relations) are finite sets: this happens for many other algebraic structures, and thus motivates the definition.
% A generator $C$ of $\bfC$ is \emph{strong} if in addition for each $m : A \hookrightarrow X$ that is not an isomorphism, there exists a monomorphism $C\to X$ that does not factor through $m$.

% More generally, a set of objects $\{C_i\mid i : I\}$ of a category is called a (strong) generator if the entire family distinguishes objects (and monics): if for all $i: I$ one has $\bfC(C_i,f) = \bfC(C_i,g)$ for two parallel arrows $f,g$, then $f=g$, and for each $m : A \hookrightarrow X$ that is not an isomorphism, there exist an $i :I$ and a monomorphism $C_i\to X$ that does not factor through $m$.
\begin{definition}[Locally finitely presentable category]\label{lfp_cat}
	\cite[Def. 1.9]{adamek_rosicky_1994} A locally finitely presentable (LFP) category $\clK$ is a cocomplete category s.t.\ there is a small full subcategory $\clA\subset\clK$ of finitely presentable objects, and such that every object of $\clK$ is a directed colimit of objects of $\clA$.% is a strong generator for $\clK$.
\end{definition}
%LFP categories abound in the mathematical practice;
% \begin{example}
% 	\begin{enumtag}{lp}
% 		\item $\clK=\ct{Set}$, the category of sets. There, the subcategory of finitely presentable objects is made by finite sets $\{[0],[1],[2],\dots\}$.
% 		\item $\ct{cMon}$, the category of commutative monoid. There, the subcategory of finitely presentable objects is made by finitely generated commutative monoids.\footnote{A minor technical point, here and everywhere else: strictly speaking the subcategories of finitely presentable objects are \emph{essentially} small, meaning that they have small skeleton; the difference between smallness and essential smallness can be safely ignored.}
% 		\item $\clK=\ct{Gra}$, the category of graphs. There, the subcategory of finitely presentable objects is made by finitely generated graphs (a graph is finitely generated if it has a finite number of edges and vertices?).
% 		\item $\clK=\ct{Pos}$, the category of posets. There, the set of finitely presentable objects is made by posets whose underlying set is finite.
% 	\end{enumtag}
% \end{example}
%the reader must however be warned that familiar examples of categories, like topological spaces, are not locally presentable.%: for example, the category of topological spaces is not an LFP category.

As in the classical case (Remark~\ref{rem:sorts}), the most crisp statement of the variety theorem is for the unsorted case. Just as an unsorted Lawvere theory is exactly a (small) cartesian category, we define an \emph{unsorted partial Lawvere theory} to be a (small) DCR category, and the corresponding notion of morphism to be a CR functor. Then:
\begin{center}
	\fbox{Categories of models of partial theories are exactly LFP categories.}
\end{center}
Indeed, we have a similar contravariant adjunction to that of Theorem~\ref{thm:lawvereadjunction}, if $\mathsf{LFP}$ is a 2-category having 1-cells $R:\clK \to \clK'$ the right adjoint functors $R$ preserving directed colimits, and 2-cells all natural transformations $\alpha : R \To R'$. A motivation for this apparently peculiar choice of 1-cells can be found in our Observation \ref{obs:modeladjunction}; it is exactly as our Definition \ref{morovar}, provided one replaces ``sifted'' colimit with ``directed''.
\begin{theorem}\label{thm:main}
	There is a 2-adjunction
	\begin{equation}\label{the_variety_theorem}
		\Th : \mathsf{LFP} \leftrightarrows (\mathsf{DCRC}^\le)^\op : \Mod ,\end{equation}
	where $\ct{DCRC}^\leq$ is the 2-category of DCR categories of our Definition \ref{the_def_of_dcrc} and $\mathsf{LFP}$ is the 2-category of LFP categories. Moreover, the unit
	of this adjunction is an equivalence, namely there is a natural equivalence of categories between $\clK \in \mathsf{LFP}$ and $\Mod(\Th(\clK))$,
	i.e. each LFP category is equivalent to the category of models of its induced theory.
\end{theorem}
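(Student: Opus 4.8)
The plan is to reduce the statement to ordinary Gabriel--Ulmer duality by exploiting the passage between DCR categories and small categories with finite limits, in three ingredients: (i) categories of models are insensitive to splitting idempotents; (ii) Cauchy-complete DCR categories are the same thing as small finitely complete categories, compatibly with models valued in $\Par=\parcat{\Set}$; (iii) Gabriel--Ulmer duality then does the rest. For (i): since $\Par$ has split idempotents, every CR functor $\clL\to\Par$ (and every lax transformation between such) factors uniquely up to canonical isomorphism through the idempotent-splitting completion $K(\clL)$ of $\clL$. As in Cockett's restriction-category theory~\cite{Coc07,Coc12}, $K(\clL)$ is again a DCR category and $\clL\mapsto K(\clL)$ is the unit of a $2$-reflection of $\mathsf{DCRC}^\le$ onto its full sub-$2$-category $\mathsf{DCRC}^\le_{\mathrm c}$ of Cauchy-complete DCR categories; hence $\Mod(K(\clL))\simeq\Mod(\clL)$, $2$-naturally. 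It thus suffices to prove the theorem with $(\mathsf{DCRC}^\le)^\op$ replaced by $(\mathsf{DCRC}^\le_{\mathrm c})^\op$ and then pull back along the reflection; the counit of $\Th\dashv\Mod$ will then be exactly the completion map $\clL\to K(\clL)$, in particular not an equivalence in general, consistently with only the unit being claimed to be one.

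For (ii), the object-level statement is that $\parcat{-}$ and $\Tot(-)$, the category of total maps, are mutually inverse: by the Proposition asserting that $\parcat{\bbC}$ is a DCR category when $\bbC$ is finitely complete, $\parcat{\bbC}$ is moreover Cauchy-complete with $\Tot(\parcat{\bbC})\cong\bbC$, while for a Cauchy-complete DCR category $\mathbb X$ the restriction-product and discreteness data make $\Tot(\mathbb X)$ finitely complete with $\parcat{\Tot(\mathbb X)}\simeq\mathbb X$. Upgrading this to $1$- and $2$-cells via the universal property of $\parcat{-}$ (partial-map classification, in the $2$-categorical form of Cockett--Lack~\cite{Coc07}) yields a $2$-equivalence $\Lex\simeq\mathsf{DCRC}^\le_{\mathrm c}$: a CR functor out of $\parcat{\bbC}$ is the same as a lex functor out of $\bbC$, and a lax transformation between two such is the same as a natural transformation between the corresponding lex functors. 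Applying this with target $\Par$ gives the key identification $\Mod(\clL)\simeq\Mod(K(\clL))\simeq\Lex(\Tot(K(\clL)),\Set)$, $2$-natural in $\clL$; in particular $\Mod(\clL)$ is always LFP, since categories of lex functors into $\Set$ are.

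For (iii), I would invoke Gabriel--Ulmer duality~\cite{centazzo2004generalised}: $\bbC\mapsto\Lex(\bbC,\Set)$ is part of a $2$-adjunction $\Lex^\op\rightleftarrows\mathsf{LFP}$ whose unit and counit are equivalences, pseudo-inverse to $\clK\mapsto(\clK_{\mathrm{fp}})^\op$ (finitely presentable objects). Composing with (i)--(ii) and setting $\Th(\clK):=\parcat{(\clK_{\mathrm{fp}})^\op}$ (a DCR category by the cited Proposition), $\Th$ becomes the composite of the Gabriel--Ulmer left adjoint with the (opposite of the) fully faithful inclusion $\mathsf{DCRC}^\le_{\mathrm c}\hookrightarrow\mathsf{DCRC}^\le$, hence a left $2$-adjoint to $\Mod$. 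That $\Mod$ is itself a $2$-functor into $\mathsf{LFP}$ — i.e.\ a partial theory morphism $h:\clL\to\clM$ induces a right adjoint $\Mod_h:\Mod(\clM)\to\Mod(\clL)$ preserving directed colimits — is proved exactly as Observation~\ref{obs:modeladjunction}, with ``directed'' in place of ``sifted'', since limits and directed colimits of models are computed pointwise in $[\clL,\Par]$ and the special adjoint functor theorem supplies the left adjoint. The unit of the composite at $\clK$ is the Gabriel--Ulmer unit (an equivalence) composed with the unit of the $2$-reflection (an equivalence, the inclusion being fully faithful), hence an equivalence; unwinding the identifications, it is precisely the comparison $\clK\to\Mod(\Th(\clK))$ sending an object to ``partial hom out of the finitely presentable objects''. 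The $S$-sorted version follows by slicing both sides over the free DCR category on $S$, as in the classical case.

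I expect the main obstacle to be the $2$-cell matching inside ingredient (ii): that lax natural transformations between $\Par$-valued CR functors correspond bijectively and naturally to ordinary natural transformations between the associated $\Set$-valued lex functors. By Remark~\ref{rem:laxtotal} the components of such a lax transformation are forced total, so restricting to total maps produces a strictly natural transformation of lex functors; the content is the converse, namely that the lax inequality on a \emph{general} partial map is automatic once strict naturality holds on total maps — using that every partial map factors as the partial inverse of a total mono followed by a total map, and that domain idempotents lie below identities in the restriction order. Everything else is bookkeeping layered on top of Gabriel--Ulmer duality and Cockett's structure theory for restriction categories.
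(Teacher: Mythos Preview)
Your proposal is correct and follows essentially the paper's route: both factor the 2-adjunction as the reflection of $\Lex$ in $\mathsf{DCRC}^\leq$ (via the domain-idempotent splitting $K_t$ and its right adjoint $\Par$) composed with Gabriel--Ulmer duality. Your decomposition of the first step into a reflection onto split DCR categories followed by the equivalence $\mathsf{DCRC}^\leq_{\mathrm c}\simeq\Lex$ merely rearranges the paper's direct construction of $K_t\dashv\Par$; the 2-cell verification you flag as the main obstacle is exactly what the paper isolates in its lemmas on $K_t$ and $\Par$ at the level of lax transformations.
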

The proof of Theorem~\ref{thm:main} is split into two parts, as illustrated below:
\begin{equation}\label{eq:proofroadmap}
	\begin{tikzcd}
		(\mathsf{DCRC}^\leq)^\text{op} \arrow[rr, shift right] \arrow[rd, dashed, shift right=2]
		& & (\Lex)^\text{op} \arrow[ld, shift left] \arrow[ll, "\fbox{1}" description, shift right] \\
		& \mathsf{LFP} \arrow[ru, "\fbox{2}" description, shift left] \arrow[lu, dashed] &
	\end{tikzcd}
\end{equation}
\begin{enumerate}
	\item we show that $\mathsf{Lex}$ --the 2-category of categories $\clA$ with finite limits, functors $\clA \to \clA'$ preserving finite limits, and natural transformations-- is reflective in the 2-category $\mathsf{DCRC}^\leq$.
	      This is the original, technical core of Theorem \ref{thm:main}.
	\item we connect $\mathsf{Lex}^\op$ and $\mathsf{LFP}$ with a contravariant biequivalence of 2-categories. This is a classical result called \emph{Gabriel-Ulmer duality}.
\end{enumerate}

Composing the two, we obtain Theorem \ref{thm:main}.

We will start from the first of the two tasks, providing an adjunction of 2-categories as follows.
\[  K_t : \mathsf{DCRC}^\leq \leftrightarrows \Lex : \Par \]

We first describe the left adjoint $K_t$, %in \S\ref{sec:splitting},
then the right adjoint $\Par$,  %in \S\ref{sec:par}
and conclude by showing that they define an adjunction.
%\S\ref{sec:mainadjunction}.

\paragraph{Splitting Domain Idempotents}\label{sec:splitting}
The functor $K_t$ arises via a modified Karoubi envelope, also called \emph{Cauchy completion} in \cite{borcauchy}.
Recall that an idempotent $a : A \to A$ in a category \emph{splits} if there is a commutative diagram
\[
	\begin{tikzcd}
		& X \ar[r,"s"] & A \\
		X \ar[r,"s"'] \ar[ur,"1_X"] & A \ar[u,"r"] \ar[ur,"a"']
	\end{tikzcd}
\]
Restriction categories in which all of the domain idempotents split are called \emph{split} restriction categories. An example is $\mathsf{Par}(\bbC)$: for any arrow $(m,f) : A \to B$, $\rest{(m,f)} = (m,m) : A \to A$ splits with $s = (1,m)$ and $r = (m,1)$. Notice that this means the domain of definition of $(m,f)$ is a subobject of $A$. This is a good way to think of split domain idempotents in general: for $\rest{f}$ to be split in a restriction category is for the domain of definition of $f : A \to B$ to be a subobject of $A$.

For any restriction category $\bbX$ we can construct a split restriction category $K(\bbX)$ that contains $\bbX$ as a subcategory.
Its subcategory of \emph{total} maps $K_t(\bbX)$ is of particular interest.
\begin{definition}
	Let $\bbX$ be a DCRC. Then $K_t(\bbX)$ is the category where
	\begin{enumerate}[]
		\item objects are pairs $(A,a)$ with $A$ an object of $\bbX$ and $a : A \to A$ a domain idempotent in $\bbX$;
		\item arrows $f : (A,a) \to (B,b)$ are arrows $f : A \to B$ of $\bbX$ such that $\rest{f} = a$ and $f\comp  b = f$;
		\item composition is given by composition in $\bbX$;
		\item The identity on $(A,a)$ is given by $a$.
	\end{enumerate}
\end{definition}

\noindent It is routine to verify that this forms a category.
Crucially, if $\bbX$ is a DCRC, then the subcategory $K_t(\bbX)$ of \emph{total} maps of $K(\bbX)$ has finite
limits~\cite{Coc12}:

\begin{lemma}
	For any DCRC $\bbX$, $K_t(\bbX)$ has finite limits.
\end{lemma}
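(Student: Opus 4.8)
The plan is to check that $K_t(\bbX)$ has a terminal object, binary products and equalizers, which together give all finite limits. The terminal object and products need only the cartesian restriction structure: splitting the domain idempotents of a DCR category yields again a (split) cartesian restriction category $K(\bbX)$, with $(A,a)\otimes(B,b)\Defeq(A\otimes B,\,a\otimes b)$ and unit $(I,1_I)$ (see~\cite{Coc07,Coc12}), and by the observation following Theorem~\ref{thm:newfox} the total maps of any CR category form a cartesian category. Hence $(I,1_I)$ is terminal in $K_t(\bbX)$, and $(A\otimes B,\,a\otimes b)$ is the product of $(A,a)$ and $(B,b)$, with projections and pairing assembled from $\delta$ and $\varepsilon$. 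The only genuinely new work --- and the only place the ``discrete'' hypothesis is used --- is constructing equalizers, which is where $\mu$, equivalently the partial Frobenius structure, enters.

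To equalize a parallel pair $f,g\from(A,a)\to(B,b)$ of $K_t(\bbX)$ --- concretely, arrows $f,g\from A\to B$ of $\bbX$ with $\rest f=\rest g=a$ and $f\comp b=f$, $g\comp b=g$ --- I would form the ``locus where $f$ and $g$ agree'' using the Frobenius structure on $B$. Put $\langle f,g\rangle\Defeq\delta_A\comp(f\otimes g)\from A\to B\otimes B$, let $d\Defeq\overline{\langle f,g\rangle\comp\mu_B}\from A\to A$ (a domain idempotent), and set $e\Defeq a\comp d$, again a domain idempotent, with $e\le a$. The claim is that the inclusion $e\from(A,e)\to(A,a)$ is the equalizer of $f$ and $g$. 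Two verifications are needed. First, $e$ equalizes $f$ and $g$: since domain idempotents commute and $a\comp f=f$, $a\comp g=g$, this reduces to $d\comp f=d\comp g$ --- the statement that cutting $f$ and $g$ down to the locus where they are jointly defined and equal makes them literally coincide. Second, the universal property: for $h\from(C,c)\to(A,a)$ with $h\comp f=h\comp g$, naturality of $\delta$ and the special law $\delta_B\comp\mu_B=1_B$ give
\[ h\comp\langle f,g\rangle\comp\mu_B=\delta_C\comp\big((h\comp f)\otimes(h\comp g)\big)\comp\mu_B=\delta_C\comp\big((h\comp f)\otimes(h\comp f)\big)\comp\mu_B=(h\comp f)\comp\delta_B\comp\mu_B=h\comp f; \]
hence $\overline{h\comp\langle f,g\rangle\comp\mu_B}=\overline{h\comp f}=\rest h$ (the last step because $h\comp a=h$ and $a=\rest f$), and the restriction identity $h\comp\overline k=\overline{h\comp k}\comp h$ then gives $h\comp d=h$, so $h\comp e=h$. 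Thus $h$ factors through the inclusion as $h$ itself, viewed now as a map $(C,c)\to(A,e)$, and this factorisation is the only one since any morphism $u\from(C,c)\to(A,e)$ satisfies $u\comp e=u$. Collecting the two points, $(A,e)$ with the map $e$ is the equalizer, and a direct check (or~\cite{Coc12}) confirms that all of this data lies in $K_t(\bbX)$ and that $K(\bbX)$ is a split DCR category.

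The step I expect to be the real obstacle is the identity $d\comp f=d\comp g$, equivalently $\overline{\langle f,g\rangle\comp\mu_B}\comp f=\langle f,g\rangle\comp\mu_B=\overline{\langle f,g\rangle\comp\mu_B}\comp g$. In contrast to the universal property, this is not formal: it uses the Frobenius law of~\eqref{eq:sfrob} binding $\mu$ to $\delta$, alongside the poset-enrichment and the standard axioms for $\overline{(\,\cdot\,)}$ (cf.\ Remark~\ref{rem:enriched}). This is exactly where discreteness does its job, supplying each object with a usable equality predicate; every other part of the argument is routine diagram chasing or can be quoted from~\cite{Coc07,Coc12}.
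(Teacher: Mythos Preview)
The paper does not actually prove this lemma: it simply attributes the result to~\cite{Coc12} and moves on. Your sketch is essentially the standard argument from that reference --- products from the CR structure on $K(\bbX)$ via Theorem~\ref{thm:newfox} and Fox, equalizers by forming the domain idempotent $d=\overline{\langle f,g\rangle\comp\mu_B}$ and taking the inclusion $(A,e)\hookrightarrow(A,a)$ --- and it is correct. A small simplification: since $\rest f=\rest g=a$ one has $d\le a$ already, so your $e=a\comp d$ is just $d$. The only step you leave open, $d\comp f=d\comp g$, is exactly the place where the Frobenius law is needed, as you note; in~\cite{Coc12} this appears as the fact that the partial diagonal $\mu$ makes each object have ``equality'' as a restriction subobject of the product, and the verification is a short string-diagram computation using~\eqref{eq:sfrob} together with naturality of $\delta$. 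So your plan matches the cited source; there is nothing further in the paper to compare against.
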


We now show that this extends to a 2-functor $K_t : \mathsf{DCRC}^\leq \to \mathsf{Lex}$. If $\bbX$ and $\bbY$ are DCRCs and $F : \bbX \to \bbY$ is a CR functor, then there is a functor $K_t(F) : K_t(\bbX) \to K_t(\bbY)$ defined by $K_t(F)(A,a) = (FA,Fa)$ on objects and $K_t(F)(f) = f$ on arrows. It follows from our characterization of CR functors in terms of the partial Frobenius algebra structure that $K_t(F)$ preserves finite limits, giving the action of our 2-functor $K_t$ on 1-cells. The action of $K_t$ on 2-cells is given as follows:
\begin{lemma}
	If $F,G : \bbX \to \bbY$ are CR functors between DCR categories and $\alpha : F \to G$ is a lax transformation, define $K_t(\alpha) : K_t(F) \to K_t(G)$ by letting $K_t(\alpha)$ at $(A,a)$ in $K_t(\bbX)$ be:
	\[ K_t(\alpha)_{(A,a)} = Fa\comp \alpha_A : (FA,Fa) \to (GA,Ga) \]
	Then $K_t(\alpha)$ is a natural transformation.
\end{lemma}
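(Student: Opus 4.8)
The plan is to proceed in two stages: first verify that the formula $K_t(\alpha)_{(A,a)} = Fa\comp\alpha_A$ really defines a morphism $(FA,Fa)\to(GA,Ga)$ of $K_t(\bbY)$, and then check the naturality squares. Throughout I will lean on the reformulation of lax transformations from Remark~\ref{rem:laxtotal}: each component $\alpha_A$ is total, so $\rest{\alpha_A}=1_{FA}$, and for every $g:A\to B$ in $\bbX$ one has $Fg\comp\alpha_B \leq \alpha_A\comp Gg$, i.e.\ $\rest{Fg\comp\alpha_B}\comp\alpha_A\comp Gg = Fg\comp\alpha_B$. I will also use that a CR functor is in particular a restriction functor, so $F\rest{g}=\rest{Fg}$ (immediate from the recovery of $\rest{g}$ out of $\delta,\varepsilon$, which $F$ preserves), and the standard restriction identities: a domain idempotent $e$ satisfies $\rest e=e$, one has $\rest{e\comp h}=e\comp\rest h$ for such $e$, and $\rest{h\comp\rest k}=\rest{h\comp k}$ in general.

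For well-definedness I must check that $\rest{Fa\comp\alpha_A}=Fa$ and $(Fa\comp\alpha_A)\comp Ga = Fa\comp\alpha_A$. Since $a$ is a domain idempotent, so is $Fa$; hence $\rest{Fa\comp\alpha_A}=Fa\comp\rest{\alpha_A}=Fa\comp 1_{FA}=Fa$, which settles the first condition and identifies the restriction of $K_t(\alpha)_{(A,a)}$. For the second condition, instantiate the laxity inequality at the arrow $a:A\to A$ of $\bbX$: it reads $\rest{Fa\comp\alpha_A}\comp\alpha_A\comp Ga = Fa\comp\alpha_A$, and since we have just computed $\rest{Fa\comp\alpha_A}=Fa$, this is exactly $Fa\comp\alpha_A\comp Ga = Fa\comp\alpha_A$. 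So well-definedness is powered entirely by laxity applied to $a$.

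For naturality, fix $f:(A,a)\to(B,b)$ in $K_t(\bbX)$, which by definition means $f:A\to B$ in $\bbX$ with $\rest f = a$ and $f\comp b = f$. Since composition in $K_t$ is composition in $\bbX$ and $K_t(F)(f)=Ff$, $K_t(G)(f)=Gf$, the square to verify is
\[ Ff\comp(Fb\comp\alpha_B) \;=\; (Fa\comp\alpha_A)\comp Gf \]
in $\bbY$. The left-hand side simplifies using $Ff\comp Fb = F(f\comp b) = Ff$ to $Ff\comp\alpha_B$; the right-hand side simplifies using $a=\rest f$ to $F\rest f\comp\alpha_A\comp Gf = \rest{Ff}\comp\alpha_A\comp Gf$. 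Thus naturality reduces to the equation $\rest{Ff}\comp\alpha_A\comp Gf = Ff\comp\alpha_B$. Now, because $\alpha_B$ is total, $\rest{Ff\comp\alpha_B}=\rest{Ff\comp\rest{\alpha_B}}=\rest{Ff}$; hence the laxity inequality at $f$, namely $\rest{Ff\comp\alpha_B}\comp\alpha_A\comp Gf = Ff\comp\alpha_B$, is precisely the required equation, and we are done.

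The only genuinely non-routine point — the step I would flag as the crux rather than bookkeeping — is this last observation: the lax-transformation axiom gives merely an \emph{inequality} $Ff\comp\alpha_B\le\alpha_A\comp Gf$, and one must recognise that pre-composing the right-hand side by the domain idempotent $\rest{Ff}=Fa$ upgrades it to the honest equality we need, the reason being that the ``missing'' domain of definition is exactly $\rest{Ff}$ since $\alpha_B$ is total. Everything else is manipulation of the restriction axioms together with the fact that $F$, being a CR and hence a restriction functor, preserves domain idempotents. It is worth noting that the proof never touches the discreteness datum $\mu$: the statement is really a fact about restriction functors and their lax transformations, which is why it is so light on content.
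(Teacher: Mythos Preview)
Your proof is correct: the well-definedness check and the naturality argument both go through, and your key observation---that totality of $\alpha_B$ gives $\rest{Ff\comp\alpha_B}=\rest{Ff}$, which turns the lax inequality into the required equality after precomposing with $Fa=\rest{Ff}$---is exactly the right idea. The paper states this lemma without proof, so there is no alternative argument to compare against; your write-up supplies precisely the verification the paper omits.
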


At this point we need only show that $K_t$ preserves composition and identities for 1-cells and 2-cells, which in both cases is straightforward.% We have now shown:
\begin{lemma}
	$K_t : \mathsf{DCRC}^\leq \to \Lex$ is a 2-functor.
\end{lemma}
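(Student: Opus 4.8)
The plan is to check directly the three conditions that make a (strict) $2$-functor, given that the object, $1$-cell and $2$-cell assignments have already been produced and shown to land respectively in $\Lex$, in finite-limit-preserving functors, and in natural transformations. Namely: (a) $K_t$ preserves identity $1$-cells and composites of $1$-cells; (b) for each pair $\bbX,\bbY$ the assignment on $2$-cells is a functor $\mathsf{DCRC}^\leq(\bbX,\bbY)\to\Lex(K_t\bbX,K_t\bbY)$, i.e.\ it preserves identity $2$-cells and vertical composition; and (c) $K_t$ preserves whiskering on both sides, which together with (b) yields preservation of arbitrary horizontal composites. Each of these is a short computation with the definitions.

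For (a): $K_t(\eed_\bbX)$ acts as $(A,a)\mapsto(A,a)$ on objects and as the identity on underlying arrows, and since the identity of $(A,a)$ in $K_t(\bbX)$ is by definition the domain idempotent $a$, this is $\eed_{K_t(\bbX)}$. For composable CR functors $F\from\bbX\to\bbY$ and $G\from\bbY\to\bbZ$ we have $K_t(G\comp F)(A,a)=(GFA,GFa)=K_t(G)\bigl(FA,Fa\bigr)=\bigl(K_t(G)\comp K_t(F)\bigr)(A,a)$, and both functors act as $f\mapsto GFf$ on arrows, so $K_t(G\comp F)=K_t(G)\comp K_t(F)$.

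For (b): $K_t(\eed_F)_{(A,a)}=Fa\comp(\eed_F)_A=Fa\comp\eed_{FA}=Fa$, which is the identity of $(FA,Fa)$ in $K_t(\bbY)$, so $K_t(\eed_F)=\eed_{K_t(F)}$. For vertical composition, take $\alpha\from F\To G$ and $\beta\from G\To H$ and write $\alpha\comp\beta\from F\To H$ for the vertical composite, with component $\alpha_A\comp\beta_A$ at $A$. The one fact that does real work here is that $K_t(\alpha)_{(A,a)}=Fa\comp\alpha_A$ is, by the preceding lemma, an arrow $(FA,Fa)\to(GA,Ga)$ of $K_t(\bbY)$ and hence satisfies the defining constraint $K_t(\alpha)_{(A,a)}\comp Ga=K_t(\alpha)_{(A,a)}$. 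Using this absorption identity,
\[
\bigl(K_t(\alpha)\comp K_t(\beta)\bigr)_{(A,a)}=(Fa\comp\alpha_A)\comp(Ga\comp\beta_A)=(Fa\comp\alpha_A)\comp\beta_A=Fa\comp(\alpha_A\comp\beta_A)=K_t(\alpha\comp\beta)_{(A,a)},
\]
so the two natural transformations agree.

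For (c): given $H\from\bbW\to\bbX$, at $(W,w)$ the right whiskering satisfies $K_t(\alpha H)_{(W,w)}=F(Hw)\comp\alpha_{HW}=K_t(\alpha)_{(HW,Hw)}=\bigl(K_t(\alpha)K_t(H)\bigr)_{(W,w)}$; given $H\from\bbY\to\bbZ$, since $H$ is a functor $K_t(H\alpha)_{(A,a)}=H(Fa)\comp H(\alpha_A)=H(Fa\comp\alpha_A)=K_t(H)\bigl(K_t(\alpha)_{(A,a)}\bigr)=\bigl(K_t(H)K_t(\alpha)\bigr)_{(A,a)}$, using (a) to identify the relevant source and target functors. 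Combined with (b), this gives preservation of horizontal composition. I do not expect a genuine obstacle here: the only step that is more than unwinding definitions is the vertical-composition identity in (b), where one must notice and invoke the morphism constraint $f\comp b=f$ of $K_t(-)$; everything else is bookkeeping made automatic by the functoriality of the underlying DCR functors and the already-established functoriality of $K_t$ on $1$-cells.
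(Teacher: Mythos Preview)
Your proposal is correct and follows exactly the approach the paper indicates: the paper merely asserts that preservation of composition and identities for 1-cells and 2-cells is straightforward, and you have supplied the routine verifications. The one nontrivial step you flag---using the morphism constraint $f\comp b=f$ in $K_t(\bbY)$ to absorb the extra $Ga$ in the vertical-composition check---is indeed the only place where something beyond pure bookkeeping happens, and your handling of it is right.
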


%\subsection{Partial Functions Revisited}\label{sec:par}
\paragraph{Partial Functions Revisited}%\label{sec:par}

Here we show that the $\mathsf{Par}$ construction (\S\ref{sec:pardef}) also extends to a 2-functor $\mathsf{Par} : \mathsf{Lex} \to \mathsf{DCRC}^\leq$. If $\mybb{C}$ and $\mybb{D}$ are categories with finite limits and $F : \mybb{C} \to \mybb{D}$ is a finite-limit preserving functor, then we obtain a CR functor $\Par(F) : \Par(\mybb{C}) \to \Par(\mybb{D})$, defined on objects by $\Par(F)(A) = F(A)$, and on arrows by
\[
	\begin{tikzcd}
		& X \ar[ld,"m"'] \ar[rd,"f"]\\
		A && B
	\end{tikzcd}
	\stackrel{\Par(F)}{\longmapsto}
	\begin{tikzcd}
		& FX \ar[ld,"Fm"'] \ar[rd,"Ff"]\\
		FA && FB
	\end{tikzcd}
\]
Since $F$ preserves finite limits, we have that $\Par(F)(\delta_A) = (F1_A,F\Delta_A) = (1_{FA},\Delta_{FA}) = \delta_{FA} = \delta_{\Par(F)(A)}$ and $\Par(F)(\varepsilon_A) = (F1_A, F!_A) = (1_{FA},!_{FA}) = \varepsilon_{\Par(F)(A)}$, so $\Par(F)$ preserves the CR structure. This defines the action of $\mathsf{Par}$ on 1-cells. We present the action of $\mathsf{Par}$ on 2-cells as a lemma:

\begin{lemma}
	If $F,G : \mybb{C} \to \mybb{D}$ are finite limit preserving functors between categories with finite limits and $\alpha : F \to G$ is a natural transformation, define $\Par(\alpha) : \Par(F) \to \Par(G)$ by defining the component of $\Par(\alpha)$ at $A$ in $\mybb{C}$ to be:
	\[
		\scriptsize\begin{tikzcd}
			& FA \ar[d,phantom,"\tiny \Par(\alpha)_A"] \ar[ld,"1_{FA}"'] \ar[dr,"\alpha_A"] \\
			FA &\text{}& GA
		\end{tikzcd}
	\]
	Then $\Par(\alpha) : \Par(F) \to \Par(G)$ is a lax transformation.
\end{lemma}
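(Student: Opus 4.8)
The plan is to verify the two conditions defining a lax transformation of restriction functors, in the form made available by Remark~\ref{rem:laxtotal}. Since $\Par(\mybb{C})$ and $\Par(\mybb{D})$ are DCR categories and $\Par(F),\Par(G)$ are CR functors, totality of each component $\Par(\alpha)_A$ is automatic, so it suffices to establish, for every $\phi : A \to B$ in $\Par(\mybb{C})$, the lax-naturality inequality $\Par(F)(\phi)\comp \Par(\alpha)_B \le \Par(\alpha)_A \comp \Par(G)(\phi)$. (Directly: $\rest{[1_{FA},\alpha_A]} = [1_{FA},1_{FA}] = 1_{FA}$, so each $\Par(\alpha)_A$ is total anyway.)

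First I would present both composites as spans. Write $\phi = [m,f]$, represented by $A \xleftarrow{m} X \xrightarrow{f} B$ with $m$ monic; then $Fm$ and $Gm$ are monic because $F$ and $G$ preserve finite limits — this is exactly what made $\Par(F)$ and $\Par(G)$ well defined above. Composing on the right with the total map $[1_{FB},\alpha_B]$ merely postcomposes the right leg (the pullback involved is along an identity), so $\Par(F)(\phi)\comp \Par(\alpha)_B$ is represented by $FA \xleftarrow{Fm} FX \xrightarrow{Ff\comp\alpha_B} GB$. For the other side, form the pullback $P$ in $\mybb{D}$ of $FA \xrightarrow{\alpha_A} GA \xleftarrow{Gm} GX$, with projections $p_1 : P \to FA$ and $p_2 : P \to GX$; then $\Par(\alpha)_A\comp \Par(G)(\phi)$ is represented by $FA \xleftarrow{p_1} P \xrightarrow{p_2\comp Gf} GB$ (and $p_1$, being a pullback of $Gm$, is monic).

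Second, I would exhibit the comparison map. By the description of the $2$-cells in Definition~\ref{defn:parcat} it is enough to give $\beta : FX \to P$ with $\beta\comp p_1 = Fm$ and $\beta\comp p_2\comp Gf = Ff\comp\alpha_B$. Naturality of $\alpha$ at the morphism $m$ gives $Fm\comp\alpha_A = \alpha_X\comp Gm$, so the pair of maps $Fm : FX \to FA$ and $\alpha_X : FX \to GX$ factors uniquely through the pullback, defining $\beta$ with $\beta\comp p_1 = Fm$ and $\beta\comp p_2 = \alpha_X$; hence $\beta\comp p_2\comp Gf = \alpha_X\comp Gf$, which equals $Ff\comp\alpha_B$ by naturality of $\alpha$ at $f$. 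So $\beta$ witnesses the desired inequality, and one checks easily that it depends neither on the chosen representative $(m,f)$ of $\phi$ (a different representative alters $\beta$ by precomposition with an isomorphism) nor on the chosen pullback $P$ (any two are canonically isomorphic).

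The computation is largely routine bookkeeping, and I do not expect a genuine obstacle; the one thing to keep straight is the division of labour between the two naturality squares of $\alpha$ — the square at the monic leg $m$ is what allows $\beta$ to be defined via the universal property of $P$, while the square at $f$ is what makes $\beta$ respect the right-hand legs — together with the observation that the result is honestly an inequality and not an equality (for example, if $G$ collapses some object to the terminal object, a nowhere-defined $\phi$ acquires a totally-defined image under $\Par(G)$, so equality fails in general).
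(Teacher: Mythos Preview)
Your argument is correct and is the standard verification: compute both composites as spans, then use naturality of $\alpha$ at $m$ to obtain the mediating map into the pullback and naturality at $f$ to check compatibility with the right legs. The paper does not actually spell out a proof of this lemma, treating it as routine; your write-up fills that gap accurately, including the observation that totality of the components is immediate and that only a genuine inequality (not equality) results.
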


It remains only show that $\Par$ preserves composition and identities at the level of 1-cells and 2-cells, which is immediate in both cases. We therefore have:

\begin{lemma}
	$\Par : \Lex \to \mathsf{DCRC}^\leq$ is a 2-functor.
\end{lemma}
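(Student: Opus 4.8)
The plan is to verify the strict $2$-functor axioms for $\Par$ one at a time, directly from the explicit formulas already given for $\Par$ on objects, $1$-cells and $2$-cells. In every case the check reduces to the facts that a finite-limit-preserving functor preserves identities, composites and pullbacks, and that a pullback along an identity arrow is trivial; no new idea is needed beyond those already used in showing that $\Par(F)$ is a CR functor and that $\Par(\alpha)$ is a lax transformation.

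For the $1$-cells: $\Par(\mathrm{id}_{\mybb{C}})$ is the identity on objects and sends a span $(m,f)$ to $(\mathrm{id}\,m,\mathrm{id}\,f)=(m,f)$, hence it equals $\mathrm{id}_{\Par(\mybb{C})}$; and for $F\from\mybb{C}\to\mybb{D}$ and $G\from\mybb{D}\to\mybb{E}$, both $\Par(GF)$ and $\Par(G)\Par(F)$ act by $A\mapsto GFA$ on objects and by $(m,f)\mapsto(GFm,GFf)$ on spans, so they coincide. (Well-definedness of these assignments on isomorphism classes of spans, and compatibility with the pullback-composition of $\Par$, is exactly what was established when $\Par$ was defined on $1$-cells, using that $F$ and $G$ preserve pullbacks.)

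For the $2$-cells: the identity transformation $\mathrm{id}_F$ has components $1_{FA}$, so $\Par(\mathrm{id}_F)$ has component at $A$ the span $FA\xleftarrow{1_{FA}}FA\xrightarrow{1_{FA}}FA$, which is the identity morphism of $\Par(\mybb{D})$ at $FA$; thus $\Par(\mathrm{id}_F)=\mathrm{id}_{\Par(F)}$. For the vertical composite of $\alpha\from F\to G$ and $\beta\from G\to H$, its $\Par$-image at $A$ is the composite in $\Par(\mybb{D})$ of the spans $(1_{FA},\alpha_A)$ and $(1_{GA},\beta_A)$; since the left leg $1_{GA}$ of the second span is an identity, the pullback of Definition~\ref{defn:parcat} is $FA$ with projections $1_{FA}$ and $\alpha_A$, so the composite span has left leg $1_{FA}$ and right leg ``$\alpha_A$ followed by $\beta_A$'', which is precisely the component of $\Par$ applied to the vertical composite. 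For the two whiskerings: $\Par(G)$ carries the component span $(1_{FA},\alpha_A)$ of $\Par(\alpha)$ to $(1_{GFA},G\alpha_A)$, the component of $\Par(G\alpha)$; and $\Par(\beta)$ evaluated at $\Par(F)(A)=FA$ has component $(1_{GFA},\beta_{FA})$, the component of $\Par(\beta F)$. Together with vertical composition this yields the interchange law, and the list of $2$-functor axioms is complete.

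The only step that is not a bare unwinding of definitions is the vertical-composition identity, and even there the entire content is the remark that a pullback along an identity returns an identity span; accordingly I would record the proof as a short chain of these identifications, as the surrounding text already signals by calling it ``immediate''.
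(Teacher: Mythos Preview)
Your proposal is correct and follows exactly the approach the paper indicates: the paper simply asserts that preservation of composition and identities for $1$-cells and $2$-cells is ``immediate in both cases'', and you have carefully unpacked precisely those immediate verifications. The only content beyond definition-unwinding is, as you note, the trivial pullback along an identity in the vertical-composition check, which is indeed all that is required.
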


\paragraph{Adjointness}%\label{sec:mainadjunction}
The following result is original, and builds on~\cite[Corollary 3.5]{Coc02}; however, there the 2-cells of the categories involved are different.
\begin{theorem} \label{LexDCRC}
	There is a 2-adjunction $K_t : \mathsf{DCRC}^\leq \leftrightarrows \Lex : \Par$.
\end{theorem}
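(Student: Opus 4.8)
The plan is to produce the unit $\eta:\mathrm{id}_{\mathsf{DCRC}^\leq}\To\Par\circ K_t$ and counit $\epsilon:K_t\circ\Par\To\mathrm{id}_{\Lex}$ explicitly and then verify the two triangle identities; equivalently one may repackage the same data as a $2$-natural isomorphism of hom-categories
\[\Lex(K_t\bbX,\bbC)\;\cong\;\mathsf{DCRC}^\leq(\bbX,\Par\bbC),\qquad F\mapsto\Par(F)\circ\eta_\bbX,\quad G\mapsto\epsilon_\bbC\circ K_t(G).\]
Working with the partial Frobenius presentation of DCR categories (Theorem~\ref{thm:dcrc}) is what makes the well-definedness of $\eta$ and $\epsilon$, and the compatibility of the two flavours of $2$-cell, reduce to short diagrammatic checks rather than manipulations of spans.

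For the counit, recall that $\Par(\bbC)$ is a \emph{split} restriction category: the domain idempotent of $(m,f)$ is $(m,m)$, which splits through the subobject $m$. Hence an object $(A,a)$ of $K_t(\Par(\bbC))$ is, up to canonical isomorphism in $K_t(\Par(\bbC))$, nothing but a subobject $\dom a\rightarrowtail A$ of $\bbC$, and the inclusion $\bbC\hookrightarrow K_t(\Par(\bbC))$, $A\mapsto(A,1_A)$, is fully faithful and essentially surjective. I would take $\epsilon_\bbC$ to be a chosen inverse equivalence, $(A,a)\mapsto\dom a$; it preserves finite limits because those of $K_t(\Par(\bbC))$ exist (by the lemma that $K_t$ of a DCRC is finitely complete) and are transported along this equivalence from $\bbC$. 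Thus each $\epsilon_\bbC$ is an equivalence of categories — this is exactly the assertion, anticipated in the roadmap~\eqref{eq:proofroadmap}, that $K_t$ exhibits $\Lex$ as a $2$-reflective sub-$2$-category of $\mathsf{DCRC}^\leq$ — and $2$-naturality in $\bbC$ is routine.

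For the unit, send an object $A$ of a DCRC $\bbX$ to $(A,1_A)$ and an arrow $f:A\to B$ to the span $(A,1_A)\xot{\ \rest f\}(A,\rest f)\xto{\ f\}(B,1_B)$ in $K_t(\bbX)$, whose left leg is a monomorphism since it is the splitting of the domain idempotent $\rest f$. Functoriality of $\eta_\bbX$ unwinds to the restriction identity $\rest{f\comp g}=\overline{f\comp\rest g}$ via the pullback computing span composition in $\Par(K_t(\bbX))$; preservation of $\otimes$, of $1$, and of $\delta,\varepsilon,\mu$ is immediate from the definitions together with the description of finite limits in $K_t(\bbX)$, so $\eta_\bbX$ is a CR functor. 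Its image is the canonical copy of $\bbX$ inside the split restriction category $\Par(K_t(\bbX))$ (its split completion $K(\bbX)$), so $\eta_\bbX$ is fully faithful; and on a lax transformation $\alpha:G\To G'$ the component of $\Par(K_t(\alpha))$ matches the lax square of $\alpha$ on the nose, giving $2$-naturality of $\eta$.

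It then remains to check the triangle identities $\epsilon K_t\circ K_t\eta\cong\mathrm{id}_{K_t}$ and $\Par\epsilon\circ\eta\Par\cong\mathrm{id}_{\Par}$, which reduce to the observation that $K_t\eta_\bbX$ names precisely the split idempotents of $\Par(K_t(\bbX))$ already visible in $\bbX$, together with the explicit form of $\epsilon$. The genuinely original content — and the step I expect to be the main obstacle — is the two-dimensional layer: the underlying $1$-categorical equivalence is essentially \cite[Corollary~3.5]{Coc02}, but there the ambient $2$-cells are different, so one must show that lax transformations between CR functors $\bbX\to\Par\bbC$ correspond bijectively and naturally to ordinary natural transformations between the induced finite-limit-preserving functors $K_t\bbX\to\bbC$. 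Here one invokes Remark~\ref{rem:laxtotal} to see that the components of such a lax transformation are automatically total, and that after splitting domain idempotents the defining inequality collapses to an honest commuting naturality square; verifying that this passage is inverse to the action of $K_t$ on $2$-cells, and is itself $2$-natural, is where the real work lies.
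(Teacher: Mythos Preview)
Your proposal is correct and follows essentially the same approach as the paper: the same unit $\eta_\bbX(f)=[\rest f,f]:(A,1_A)\to(B,1_B)$, the same counit obtained by restricting the equivalence $K(\Par(\bbC))\simeq\Par(\bbC)$ to total maps, and the same identification of the genuinely new content as the $2$-dimensional layer beyond \cite[Corollary~3.5]{Coc02}. If anything, you spell out more of the verifications (functoriality of $\eta$ via the restriction axiom, the use of Remark~\ref{rem:laxtotal} to collapse lax squares to strict ones after splitting) than the paper does.
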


It is worth describing the unit and counit of our adjunction. The unit $\eta : 1 \to \Par\circ K_t $ is given by the canonical inclusion $\eta_\bbX : \bbX \to \Par(K_t(\bbX))$ defined by
\[
	A \stackrel{f}{\to} B \hspace{1cm} \stackrel{\eta_\bbX}{\longmapsto}
	\begin{tikzcd}
		& (A,\rest{f}) \ar[ld,"\rest{f}"'] \ar[rd,"f"] \\
		(A,1_A) && (B,1_B)
	\end{tikzcd}
\]
The counit $\varepsilon : K_t\circ \Par  \to 1$ is defined in terms of the equivalence of categories $K(\bbX) \simeq \bbX$ between any split restriction category $\bbX$ and the result of formally splitting its domain idempotents. In particular, since $\Par(\bbC)$ is always split, we obtain an equivalence of categories $K(\Par(\bbC)) \simeq \Par(\bbC)$. Restricting this to the subcategories of total maps gives defines our counit $\varepsilon_\bbC : K_t(\Par(\bbC)) \simeq \bbC$. In particular, the fact that the counit is a natural equivalence gives:
\begin{lemma}
	$\mathsf{Lex}$ is a reflective (2-)subcategory of $\mathsf{DCRC}^\leq$.
\end{lemma}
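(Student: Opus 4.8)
The statement is a formal consequence of the 2-adjunction $K_t \dashv \Par$ established in Theorem~\ref{LexDCRC}, together with the observation made in the discussion after that theorem --- that the counit $\varepsilon\from K_t\circ\Par \To 1_{\Lex}$ has each component $\varepsilon_\bbC\from K_t(\Par(\bbC))\to \bbC$ an equivalence in $\Lex$. The principle at work is the standard one: a reflective subcategory is nothing but (the essential image of) a fully faithful right adjoint, and a right adjoint is fully faithful precisely when its counit is invertible --- one simply upgrades ``isomorphism'' to ``equivalence'' and ``fully faithful'' to ``locally an equivalence on hom-categories''. So the task reduces to checking that $\Par\from\Lex\to\mathsf{DCRC}^\leq$ is locally an equivalence, i.e.\ that for all $\clA,\clA'\in\Lex$ the functor $\Lex(\clA',\clA)\to\mathsf{DCRC}^\leq(\Par\clA',\Par\clA)$ induced by $\Par$ is an equivalence of categories.

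\textbf{Key steps.} Unwinding the adjunction $K_t\dashv\Par$, for $\clA,\clA'\in\Lex$ there is an equivalence of hom-categories $\mathsf{DCRC}^\leq(\Par\clA',\Par\clA)\simeq\Lex(K_t\Par\clA',\clA)$ whose inverse sends an arrow $g$ to $\varepsilon_\clA\circ K_t(g)$. Composing the functor induced by $\Par$ with this inverse and invoking naturality of $\varepsilon$ yields the functor $\Lex(\clA',\clA)\to\Lex(K_t\Par\clA',\clA)$ given by $h\mapsto h\circ\varepsilon_{\clA'}$, i.e.\ precomposition with $\varepsilon_{\clA'}$. Since $\varepsilon_{\clA'}$ is an equivalence in $\Lex$, this precomposition functor is an equivalence of categories; as its other factor is already an equivalence, $\Par$ is an equivalence on each hom-category, hence locally an equivalence. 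Combining this with the existence of the left 2-adjoint $K_t$ (Theorem~\ref{LexDCRC}) shows that $\Par$ exhibits $\Lex$ as (biequivalent to) a reflective full sub-2-category of $\mathsf{DCRC}^\leq$ --- namely the replete full sub-2-category on the image of $\Par$, which consists of the split restriction categories of the form $\Par(\bbC)$ --- with reflector $\bbX\mapsto\Par(K_t(\bbX))$ and reflection unit the family $\eta_\bbX\from\bbX\to\Par(K_t(\bbX))$ described above. This is exactly the content of the lemma.

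\textbf{Anticipated difficulty.} There is essentially no mathematical obstacle left at this point: all the substance lives in Theorem~\ref{LexDCRC} and in the earlier verification that each $\varepsilon_\bbC$ is an equivalence. The only thing requiring care is 2-categorical bookkeeping: $\Par$ is not a literal inclusion of a subcategory but a locally-fully-faithful 2-functor, so ``reflective sub-2-category'' is to be understood up to biequivalence (equivalently, one may replace $\Lex$ by the replete image of $\Par$ inside $\mathsf{DCRC}^\leq$); and the counit is only pointwise an equivalence, not an isomorphism, which is why the resulting reflection is a bireflection rather than a strict one. Neither point affects the argument, which is otherwise a routine application of the characterisation of reflections via invertibility of the counit.
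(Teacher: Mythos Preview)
Your proposal is correct and takes essentially the same approach as the paper: the paper simply records the lemma as an immediate consequence of Theorem~\ref{LexDCRC} together with the fact that the counit $\varepsilon$ is a componentwise equivalence, and you have unpacked precisely this standard ``counit-invertible $\Rightarrow$ right-adjoint-fully-faithful'' argument in the 2-categorical setting. Your additional remarks on the bireflective (as opposed to strictly reflective) nature of the situation are accurate and make explicit what the paper leaves tacit.
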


\paragraph{Gabriel-Ulmer duality}
To complete the triangle~\eqref{eq:proofroadmap}, we recall a theorem first shown by P. Gabriel and F. Ulmer~\cite{gabriel2006lokal}, establishing a contravariant equivalence between the 2-category $\ssLFP$ of locally finitely presentable categories and the 2-category $\mathsf{Lex}$ of categories with finite limits.
%The choice of 1- and 2-cells in $\ssLFP$ and $\mathsf{Lex}$ is made precise right after the statement of the theorem in \ref{GU} below.

The duality asserts that a locally finitely presentable category $\clK$ can be reconstructed from its subcategory $\clK_{\omega}$ of finitely presentable objects. %(see \ref{lfp_cat}). %The proof relies on basic notions and techniques typical of the theory of LFP categories; all that follows can be considered standard; however, we invite the reader to consult our appendix for a terse account of the proof.
A good reference for the proof is \cite[Th. 3.1]{GUcentazzo}.

\begin{theorem}[Gabriel-Ulmer duality] \label{GU}
	There is a biequivalence of 2-categories \[\Lex^\op \leftrightarrows \ssLFP\]
	between $\Lex$, the 2-category of small categories with finite limits, where 1-cells are functors preserving finite limits and 2-cells are the natural transformations, and $\ssLFP$, the 2-category of locally finitely presentable categories, where 1-cells are right adjoints preserving directed colimits.
\end{theorem}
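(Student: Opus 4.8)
The plan is to exhibit the two $2$-functors realizing the duality and to show that the resulting unit and counit are $2$-natural equivalences, deferring the routine coherence checks to the classical sources. In one direction, define $\Th_0\from\Lex^\op\to\ssLFP$ by sending a small finitely complete category $\clC$ to the category $\Lex(\clC,\Set)$ of finite-limit-preserving functors $\clC\to\Set$ and their natural transformations, and sending a $\Lex$-morphism $F\from\clC\to\clD$ to the precomposition functor $(-)\circ F\from\Lex(\clD,\Set)\to\Lex(\clC,\Set)$; the latter preserves all limits and colimits (computed pointwise), in particular directed colimits, and has a left adjoint given by left Kan extension along $F$ (which lands in $\Lex(\clD,\Set)$ since $F$ is lex), so it is indeed a $1$-cell of $\ssLFP$. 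In the other direction, define $\Mod_0\from\ssLFP\to\Lex^\op$ by $\clK\mapsto(\clK_\omega)^\op$, where $\clK_\omega\subseteq\clK$ is the essentially small full subcategory of finitely presentable objects; since $\clK_\omega$ is closed under finite colimits in $\clK$, its opposite has finite limits, hence is an object of $\Lex$. A $1$-cell $R\from\clK\to\clK'$ of $\ssLFP$ is a directed-colimit-preserving right adjoint, so its left adjoint $L$ preserves finitely presentable objects and all colimits; thus $L$ restricts to a finite-colimit-preserving functor $\clK'_\omega\to\clK_\omega$ whose opposite is the required $1$-cell of $\Lex$.

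First I would check that $\Lex(\clC,\Set)$ is genuinely LFP. This is standard: it is a full reflective subcategory of the presheaf category $[\clC,\Set]$ closed under filtered colimits; the embedding $\clC^\op\to[\clC,\Set]$, $c\mapsto\clC(c,-)$, factors through it because representables preserve all limits; each representable $\clC(c,-)$ is a finitely presentable object of $\Lex(\clC,\Set)$ (evaluation at $c$ preserves filtered colimits, which here are pointwise); and every finite-limit-preserving $F\from\clC\to\Set$ is a filtered colimit of representables by the canonical-diagram argument — equivalently, $\Lex(\clC,\Set)\simeq\mathrm{Ind}(\clC^\op)$ is the free filtered-colimit completion of $\clC^\op$. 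A refinement of the same analysis identifies the finitely presentable objects of $\Lex(\clC,\Set)$ with the representables, yielding a natural equivalence $\clC^\op\simeq\big(\Lex(\clC,\Set)\big)_\omega$; this is one of the two comparison equivalences.

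The other, and the genuinely substantive step, is the representation theorem: every LFP category $\clK$ is recovered as $\Lex\big((\clK_\omega)^\op,\Set\big)$. The comparison sends $k\in\clK$ to the restriction of $\clK(-,k)$ along $\clK_\omega\hookrightarrow\clK$; this restriction preserves finite limits on $(\clK_\omega)^\op$ because $\clK(-,k)$ sends colimits to limits, it is fully faithful because $\clK_\omega$ is dense in $\clK$, and it is essentially surjective because, by definition of LFP, every object of $\clK$ is a filtered colimit of finitely presentable objects and every finite-limit-preserving functor on $(\clK_\omega)^\op$ is the corresponding filtered colimit of representables — i.e.\ $\clK\simeq\mathrm{Ind}(\clK_\omega)\simeq\Lex\big((\clK_\omega)^\op,\Set\big)$. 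The remaining work is bookkeeping: that $\Th_0$ and $\Mod_0$ are $2$-functors (their actions on $2$-cells being precomposition of natural transformations and restriction of natural transformations between left adjoints), and that the two comparison equivalences are $2$-natural; from this one reads off the claimed biequivalence. The crux is the representation theorem above, which is the heart of Gabriel–Ulmer duality and is treated in detail in \cite{gabriel2006lokal}, \cite[Ch.~1]{adamek_rosicky_1994}, and \cite[Th.~3.1]{GUcentazzo}.
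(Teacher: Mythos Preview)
Your proposal is a correct outline of the standard proof of Gabriel--Ulmer duality: the two 2-functors $\clC\mapsto\Lex(\clC,\Set)$ and $\clK\mapsto(\clK_\omega)^\op$, the identification $\Lex(\clC,\Set)\simeq\mathrm{Ind}(\clC^\op)$, and the representation theorem $\clK\simeq\mathrm{Ind}(\clK_\omega)$ are exactly the ingredients one finds in the references you cite. Note, however, that the paper does not give its own proof of this theorem at all: it is invoked as a classical result, with the remark that ``a locally finitely presentable category $\clK$ can be reconstructed from its subcategory $\clK_\omega$ of finitely presentable objects'' and a pointer to \cite[Th.~3.1]{GUcentazzo} for the details. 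So there is nothing to compare your argument against --- you have supplied a proof sketch where the paper deliberately defers to the literature.
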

%\begin{proof}
%	We sketch the proof in \ref{}
%\end{proof}
%, a more recent one that sets the result in the enriched context is \cite{GUlack}; one that studies Gabriel-Ulmer duality from the point of view of formal category theory is \cite[Th. 4.12]{GUdll}. We address the reader to \ref{gu_proof} for an outline of the argument.

\subsection{Sorted Gabriel-Ulmer duality}\label{gusection}
%\subsubsection{Sorted Gabriel Ulmer duality}
A similar version of the above theorem holds if, instead of considering theories of all possible sorts, we fix once and for all a single cardinality for the sorts $\sSigma$. Such ``relative'' version of Gabriel-Ulmer duality is useful to recover the classical Lawvere-style approach of single- and many-sorted theories.
\begin{definition}\label{sortedgu1}
	We call $\L\sSigma$ the free category with finite limits over the discrete set $\sSigma$. When $\sSigma$ is the singleton we will use the shortened notation $\L 1$.
\end{definition}

\begin{definition}\label{sortedgu2}
	A $\sSigma$-sorted category with finite limits $(\clA, p)$ is an object in $(\Lex)^\op{/\L\sSigma}$ whose specifying functor $p: \L\sSigma \to \clA$ is bijective on objects. $(\sSigma\text{-}\Lex)^\op$ is the full 2-subcategory of $\sSigma$-sorted categories with finite limits.
\end{definition}

\begin{definition}\label{sortedgu3}
	A $\sSigma$-sorted locally finitely presentable category $(\clK, U)$ is an object in $\ssLFP{/[\sSigma,\Set]}$ whose specifying functor $U: \clK \to [\sSigma,\Set]$ is conservative. $(\sSigma$-$\ssLFP)^\op$ is the full 2-subcategory of $\sSigma$-sorted locally finitely presentable categories.
\end{definition}

\begin{theorem}[Sorted Gabriel-Ulmer duality]\label{sortedGU}
	There is a biequivalence of 2-categories
	\[
		\ct{Mod}_{\sSigma}: (\sSigma\text{-}\Lex)^\op \leftrightarrows \sSigma\text{-}\ssLFP :\Th_\sSigma.
	\]
\end{theorem}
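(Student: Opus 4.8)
The strategy is to deduce Theorem~\ref{sortedGU} from the unsorted Gabriel--Ulmer biequivalence (Theorem~\ref{GU}) by \emph{slicing} both of its sides over $\L\sSigma$ and over the Gabriel--Ulmer dual of $\L\sSigma$. The general principle is that a biequivalence of $2$-categories $F\colon \clC \rightleftarrows \clD\colon G$ restricts, for each object $c$ of $\clC$, to a biequivalence of pseudo-slices $\clC/c \simeq \clD/Fc$, since the pseudo-slice is a flexible weighted bilimit and is therefore preserved by biequivalences (and one checks the induced pseudofunctors are again pseudo-inverse). Applying this to $\Mod\colon \Lex^\op \rightleftarrows \ssLFP\colon \Th$ and to the object $\L\sSigma$ of Definition~\ref{sortedgu1} produces a biequivalence
\[ (\Lex^\op)/\L\sSigma \;\simeq\; \ssLFP/\Mod(\L\sSigma) , \]
implemented by the evident restrictions of $\Mod$ and $\Th$.

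The next step is to compute the base of the slice on the right. As $\sSigma$ is discrete it carries no non-trivial finite-limit cones, so $\L\sSigma$ is the free finite-limit completion of $\sSigma$ and hence $\Mod(\L\sSigma) = \Lex(\L\sSigma,\Set) \simeq [\sSigma,\Set]$; moreover $[\sSigma,\Set]$ is locally finitely presentable, so this equivalence takes place inside $\ssLFP$. Transporting the slice along it, we obtain a biequivalence $(\Lex^\op)/\L\sSigma \simeq \ssLFP/[\sSigma,\Set]$ under which an object $(\clA, p\colon \L\sSigma \to \clA)$ on the left goes to $(\Mod(\clA),\ \Mod(p)\colon \Mod(\clA)\to[\sSigma,\Set])$, with $\Mod(p)$ the functor ``evaluate a model at the sorts'', while an object $(\clK, U\colon \clK\to[\sSigma,\Set])$ on the right goes to $(\Th(\clK),\ \Th(U)\colon \L\sSigma\to\Th(\clK))$, with $\Th(\clK)\simeq(\clK_\omega)^\op$ and $\Th(U)$ picking out the sort objects.

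It remains to cut both slices down to the full sub-$2$-categories of Definitions~\ref{sortedgu2} and~\ref{sortedgu3}. Since a biequivalence which carries one full sub-$2$-category into another and whose pseudo-inverse carries the second back into the first restricts to a biequivalence between them, it suffices to match the two defining conditions. For one direction: if $p\colon \L\sSigma\to\clA$ is bijective on objects then, $p$ being finite-limit preserving and every object of $\L\sSigma$ a finite product of sorts, every object of $\clA$ is a finite limit of sorts; a model $M\colon\clA\to\Set$ preserves finite limits, hence is determined on all objects by its restriction to $\sSigma$, so a morphism of models invertible at every sort is invertible at every object, i.e.\ $\Mod(p)$ is conservative. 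For the other direction one argues dually: conservativity of $U$ makes the images of the sorts under the left adjoint of $U$ a strong generator of $\clK$ consisting of finitely presentable objects, and via Gabriel--Ulmer duality this is exactly what is needed for $\Th(U)\colon \L\sSigma\to\Th(\clK)$ to be taken bijective on objects -- after, if necessary, replacing $\Th(\clK)$ by an equivalent category with finite limits, which is harmless since the conclusion is stated up to equivalence. Assembling these facts, the slice biequivalence restricts to the desired
\[ \ct{Mod}_\sSigma\colon (\sSigma\text{-}\Lex)^\op \;\rightleftarrows\; \sSigma\text{-}\ssLFP \;\colon\; \Th_\sSigma . \]

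The step I expect to be the main obstacle is precisely this last one: verifying that the defining conditions ``$p$ bijective on objects'' and ``$U$ conservative'' correspond under the slice biequivalence. This is where the particular shapes of $(\sSigma\text{-}\Lex)^\op$ and $\sSigma\text{-}\ssLFP$ genuinely enter, and making it airtight requires the structural lemma that in a locally finitely presentable category the finitely presentable objects are generated under finite colimits by any strong generator of finitely presentable objects, together with some care over the fact that bijectivity on objects is not stable under equivalence of categories (so one may need to pass silently to an equivalent presentation of the sorted theory). By contrast, the remaining ingredients -- that biequivalences pass to pseudo-slices, and the identification $\Mod(\L\sSigma)\simeq[\sSigma,\Set]$ -- are routine.
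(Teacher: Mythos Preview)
The paper does not give an explicit proof of Theorem~\ref{sortedGU}: it is stated immediately after Definitions~\ref{sortedgu1}--\ref{sortedgu3} and then used as an ingredient in the proof of the sorted partial variety theorem. The surrounding text and the very shape of those definitions (both sides are carved out as full sub-2-categories of slices over $\L\sSigma$ and over $[\sSigma,\Set]$) make clear that the intended argument is precisely the one you outline: slice the unsorted Gabriel--Ulmer biequivalence of Theorem~\ref{GU} over $\L\sSigma$, identify $\Mod(\L\sSigma)\simeq[\sSigma,\Set]$, and then restrict to the two full sub-2-categories. So your overall strategy coincides with what the paper has in mind.

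Your identification of the only genuinely non-formal step is also accurate. The correspondence ``$p$ bijective on objects'' $\leftrightarrow$ ``$U$ conservative'' is where the content lies, and your sketch of each implication is along the right lines (sorts form a dense/strong generator on one side, finite-limit closure of the sorts exhausts the objects on the other). The caveat you flag --- that bijectivity on objects is a strict condition, hence not invariant under equivalence, so one may need to replace $\Th(\clK)$ by an equivalent category to land inside $\sSigma\text{-}\Lex$ --- is real and is exactly the kind of hygiene the paper glosses over by working up to biequivalence throughout. Nothing in your proposal is wrong; it simply supplies the argument the paper omits.
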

\subsubsection{Sorted partial variety theorem}
We can use the sorted version of Gabriel-Ulmer duality to infer the sorted version of the syntax-semantics duality for multi-sorted partial Lawvere theories.
\begin{theorem}
	There is an 2-adjunction, whose unit is an equivalence, %between 
	\[ \sSigma \text{-}\mathsf{LFP} \leftrightarrows (\sSigma \text{-}\mathsf{pLaw})^\op ,\]
	where $\sSigma \text{-}\mathsf{pLaw}$ is the 2-category of ``$\sSigma$-sorted partial Lawvere theories'', understood as the analogue of Remark \ref{rem:sorts} for partial theories (see Definition \ref{partial_law}), and $\sSigma \text{-}\mathsf{LFP}$ is the 2-category of $\sSigma$-sorted locally finitely presentable categories.
\end{theorem}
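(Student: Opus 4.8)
The plan is to imitate the proof of Theorem~\ref{thm:main} essentially verbatim, but with its two ingredients upgraded to their $\sSigma$-sorted counterparts: I would replace Gabriel--Ulmer duality (Theorem~\ref{GU}) by sorted Gabriel--Ulmer duality (Theorem~\ref{sortedGU}), and replace Theorem~\ref{LexDCRC} by a \emph{sliced} copy of itself, obtained by coslicing the 2-adjunction $K_t\dashv\Par$ under a suitable ``free on $\sSigma$'' object --- this is the ``slice on both sides'' move already indicated for the classical case in the remark following Theorem~\ref{thm:lawvereadjunction}. Concretely, set $\bbG_\sSigma:=\Par(\L\sSigma)$, the image under the $\Par$ construction of the free category with finite limits on $\sSigma$. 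Since the counit of Theorem~\ref{LexDCRC} is an equivalence we have $K_t(\bbG_\sSigma)\simeq\L\sSigma$, so $\bbG_\sSigma$ is, canonically, the free DCR category on $\sSigma$; when $\sSigma$ is a singleton, Gabriel--Ulmer gives $\L 1\simeq\bbF^\op$, hence $\bbG_1\simeq\parcat{\bbF^\op}=\bbP\bbF$, recovering the base object of Definition~\ref{partial_law} (cf.\ Proposition~\ref{as_props_ultimo}). Mirroring Definition~\ref{sortedgu2}, I would take $\sSigma\text{-}\mathsf{pLaw}$ to be the 2-category whose objects are DCR categories $\clL$ equipped with a bijective-on-objects CR functor $p\colon\bbG_\sSigma\to\clL$, whose 1-cells are CR functors under $\bbG_\sSigma$, and whose 2-cells are the lax transformations compatible with the structure maps --- i.e.\ the bijective-on-objects sub-2-category of the coslice $\bbG_\sSigma/\mathsf{DCRC}^\leq$.

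First I would slice Theorem~\ref{LexDCRC}. Because its counit is an equivalence, $\Par$ is 2-fully faithful, so coslicing it under $\L\sSigma$ gives a 2-fully faithful 2-functor $\L\sSigma/\Lex\to\bbG_\sSigma/\mathsf{DCRC}^\leq$ (using $\Par(\L\sSigma)=\bbG_\sSigma$), whose left adjoint is ``apply $K_t$, then compose with the counit equivalence $K_t(\bbG_\sSigma)\xrightarrow{\sim}\L\sSigma$''. This exhibits $\L\sSigma/\Lex$ as a reflective sub-2-category of $\bbG_\sSigma/\mathsf{DCRC}^\leq$, with sliced counit again an equivalence. Next I would restrict this reflective 2-adjunction to the ``sorted'' (bijective-on-objects) sub-2-categories: since $\Par(F)$ acts as $F$ on objects, $\Par$ sends bijective-on-objects functors to bijective-on-objects CR functors, so the right adjoint restricts to a 2-functor $\sSigma\text{-}\Lex\to\sSigma\text{-}\mathsf{pLaw}$; and one must check the reflector sends a $\sSigma$-sorted partial Lawvere theory to a $\sSigma$-sorted category with finite limits, which amounts to tracking the generating object-set across the equivalence $K_t\circ\Par\simeq1$ (equivalently, and more cleanly, one can set up $\sSigma\text{-}\mathsf{pLaw}$ via $\sSigma$-coloured DCR props as in Remark~\ref{rem:sorts}, where the prop structure makes the structure functor identity-on-objects and this bookkeeping is automatic). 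The outcome is a reflective 2-adjunction between $\sSigma\text{-}\mathsf{pLaw}$ and $\sSigma\text{-}\Lex$ whose counit is an equivalence; 2-functoriality and the triangle identities are verified exactly as in the unsliced case.

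Finally I would compose with sorted Gabriel--Ulmer duality: Theorem~\ref{sortedGU} provides a biequivalence $(\sSigma\text{-}\Lex)^\op\leftrightarrows\sSigma\text{-}\ssLFP$, and composing the opposite of the sliced reflective 2-adjunction above with it yields a 2-adjunction $\sSigma\text{-}\mathsf{LFP}\leftrightarrows(\sSigma\text{-}\mathsf{pLaw})^\op$. Its unit is the composite of the unit of the sliced copy of Theorem~\ref{LexDCRC} (an equivalence, by reflectivity) with the image, under a 2-functor, of the unit of the biequivalence Theorem~\ref{sortedGU} (an equivalence); a composite of equivalences being an equivalence, the unit of the composite 2-adjunction is an equivalence, as claimed. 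Unwinding the composite, $\Mod$ sends a $\sSigma$-sorted partial Lawvere theory $\clL$ to the category of CR functors $\clL\to\Par$, equivalently of finite-limit-preserving functors $K_t(\clL)\to\Set$, sliced over $[\sSigma,\Set]$.

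I expect the main obstacle to be the restriction step in the second paragraph: reconciling the ``$\sSigma$-sorted'' (bijective-on-objects) conditions across the sliced adjunction, because $K_t$ does not preserve object-sets on the nose --- it formally splits the domain idempotents, enlarging the object-set by the subobject posets --- so the reflector's structure functor is only essentially bijective on objects. Resolving this requires either carefully transporting along the canonical equivalences $K_t\circ\Par\simeq1$ and $\eta\colon 1\to\Par\circ K_t$ at the level of objects, or --- the route I would take --- phrasing $\sSigma\text{-}\mathsf{pLaw}$ through $\sSigma$-coloured props so that the generating objects are rigidly present; everything else is a routine transport, along the slicing, of the already-established Theorems~\ref{LexDCRC}, \ref{GU} and~\ref{sortedGU}.
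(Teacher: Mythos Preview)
Your proposal is correct and follows essentially the same route as the paper: slice the reflective 2-adjunction $K_t\dashv\Par$ of Theorem~\ref{LexDCRC} under $\Par(\L\sSigma)$ (which the paper, like you, identifies as the free DCR category on $\sSigma$), restrict to bijective-on-objects structure maps, and compose with sorted Gabriel--Ulmer duality (Theorem~\ref{sortedGU}). The paper's sketch is terser and simply asserts that ``the desired result follows passing to functors bijective on objects in the slice,'' whereas you are more explicit about the bookkeeping issue with $K_t$ enlarging object-sets; your suggested resolution via $\sSigma$-coloured props is a reasonable way to make that step precise.
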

\begin{proof}[Sketch of proof]
	The proof is divided into intermediate steps: each tag on the following two diagrams indicates the section where the proof of the adjunction, or equivalence, is given.
	\setlength{\fboxsep}{1pt}
	\begin{center}
	\begin{adjustbox}{max width=0.925\textwidth}
		\parbox{\linewidth}{\[\begin{tikzcd}[ampersand replacement=\&]
			(\sSigma \text{-}\mathsf{pLaw})^\op \arrow[rr, shift right] \arrow[rd, dashed, shift right] \& \& (\sSigma \text{-}\Lex)^\op \arrow[ll, "\scriptsize\fbox{$\star$}" description, shift right] \arrow[ld, shift left] \& \& (\mathsf{DCRC}^\leq)^\op \arrow[rr, shift right] \arrow[rd, dashed, shift right=2] \& \& (\Lex)^\op \arrow[ld, shift left] \arrow[ll, "\scriptsize\fbox{\ref{LexDCRC}}" description, shift right] \\
			\& \sSigma \text{-}\mathsf{LFP} \arrow[ru, "\scriptsize \fbox{\ref{sortedGU}}" description, shift left] \arrow[lu, dashed, shift right] \& \& \& \& \mathsf{LFP} \arrow[ru, "\scriptsize\fbox{\ref{GU}}" description, shift left] \arrow[lu, dashed] \&
		\end{tikzcd}
	\]
	}
\end{adjustbox}
\end{center}
\setlength{\fboxsep}{2pt}
	The claim in $(\star)$ is the only one that needs to be proven. Yet it is also the most trivial one. We will deduce it directly from \ref{LexDCRC}. Indeed if $\mathsf{Lex}$ is reflective in $\mathsf{DCRC}^\leq$, $(\Lex)^\op /\L\sSigma $ is coreflective in $(\mathsf{DCRC}^\leq)^\op /\Par(\L\sSigma) $, now observe that $\Par(\L\sSigma)$ is precisely the free discrete cartesian restriction category over $\sSigma$. The desired result follows passing to functors bijective on objects in the slice.\qedhere
\end{proof}

\begin{observation}\label{obs:freemodelslfp}
	In analogy with~\ref{obs:freemodels}, we can show that sorted partial Lawvere theories have free models. For the single-sorted case, let $p: \parcat{\bbF^\op} \to \clL$ be a partial Lawvere theory. Indeed we can look at it as a morphism in $\mathsf{DCRC}^\leq$, then the previous theorem produces an adjunction $F \dashv \Mod_{p}$ \[F: \Mod_{\parcat{\bbF^\op}} \leftrightarrows \Mod_{\clL} : \Mod_{p}.\] The functor $\Mod_{p}$ coincides with the \textit{forgetful functor}. Its left adjoint $F$ provides free objects.
\end{observation}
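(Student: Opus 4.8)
The plan is to deduce both claims from functoriality of $\Mod$ together with the defining property of $1$-cells in $\mathsf{LFP}$. I would treat the single-sorted case first. The structure functor $p\colon\parcat{\bbF^\op}\to\clL$ of a partial Lawvere theory (Definition~\ref{partial_law}) is in particular a $1$-cell of $\mathsf{DCRC}^\leq$, hence a $1$-cell of $(\mathsf{DCRC}^\leq)^\op$ in the reverse direction. Applying the $2$-functor $\Mod\colon(\mathsf{DCRC}^\leq)^\op\to\mathsf{LFP}$ of Theorem~\ref{thm:main} produces a $1$-cell $\Mod_p\colon\Mod_\clL\to\Mod_{\parcat{\bbF^\op}}$ of $\mathsf{LFP}$. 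By definition a $1$-cell of $\mathsf{LFP}$ is a directed-colimit-preserving right adjoint, so $\Mod_p$ is automatically equipped with a left adjoint $F\colon\Mod_{\parcat{\bbF^\op}}\to\Mod_\clL$, and the unit of $F\dashv\Mod_p$ exhibits free models. For the $\sSigma$-sorted statement one runs the identical argument with $\Par(\L\sSigma)$ in place of $\parcat{\bbF^\op}$ — recall from the proof of the preceding theorem that $\Par(\L\sSigma)$ is the free DCR category on $\sSigma$ — and with the $\sSigma$-sorted variety $2$-adjunction in place of Theorem~\ref{thm:main}.

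It remains to recognise $\Mod_p$ as the appropriate forgetful functor, which amounts to describing $\Mod_{\parcat{\bbF^\op}}$ concretely. In complete analogy with Observation~\ref{obs:freemodels}, a model of $\parcat{\bbF^\op}$ is a CR functor $\parcat{\bbF^\op}\to\Par$, and since $\parcat{\bbF^\op}$ is the free DCR category on one object (Proposition~\ref{as_props_ultimo}) such a functor amounts to its value $X$ on the generating object; by Remark~\ref{rem:laxtotal} a homomorphism is merely a total function between these values, so $\Mod_{\parcat{\bbF^\op}}\simeq\Set$. (Equivalently: $K_t(\parcat{\bbF^\op})\simeq\bbF^\op$ since the counit of $K_t\dashv\Par$ is an equivalence, whence $\Mod_{\parcat{\bbF^\op}}\simeq\Lex(\bbF^\op,\Set)\simeq\Set$ by Gabriel--Ulmer duality, Theorem~\ref{GU}.) Under this identification $\Mod_p$ carries a model $L$ to $L\circ p$, whose underlying set is $L(p(1))=L(1)$ because $p$ is identity-on-objects; hence $\Mod_p$ is evaluation at the terminal object, i.e.\ the partial analogue of the forgetful functor of Remark~\ref{rem:forget}, and its left adjoint $F$ is genuinely the free-model functor. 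The $\sSigma$-sorted case is the same with $[\sSigma,\Set]$ in place of $\Set$.

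I expect the only step with real content to be this concrete description of $\Mod_{\parcat{\bbF^\op}}$ — specifically, checking that a lax transformation between two such models carries no data beyond a total function on the generating object, which follows from Remark~\ref{rem:laxtotal} and the fact that the (structural) naturality of $\delta$ forces $\alpha_{A\otimes B}=\alpha_A\otimes\alpha_B$. Everything else is formal: the existence of $F$ is a one-line consequence of the definition of $\mathsf{LFP}$ once $\Mod$ is known to be a $2$-functor, and the passage from the single-sorted to the $\sSigma$-sorted statement merely swaps $\bbF^\op$ for $\L\sSigma$ throughout, invoking the sorted Gabriel--Ulmer duality (Theorem~\ref{sortedGU}) and the sorted variety $2$-adjunction in place of their unsorted counterparts.
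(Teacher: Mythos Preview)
Your argument is correct and follows exactly the route the paper sketches: view $p$ as a $1$-cell in $\mathsf{DCRC}^\leq$, apply the variety $2$-adjunction so that $\Mod_p$ lands in $\mathsf{LFP}$ and hence automatically has a left adjoint, and identify $\Mod_p$ with the forgetful functor. You supply more detail than the paper does on the last point---the explicit identification $\Mod_{\parcat{\bbF^\op}}\simeq\Set$ via freeness of $\parcat{\bbF^\op}$ (or, alternatively, via $K_t(\parcat{\bbF^\op})\simeq\bbF^\op$ and Gabriel--Ulmer)---whereas the paper simply asserts that $\Mod_p$ ``coincides with the forgetful functor'' by analogy with Observation~\ref{obs:freemodels}.
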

\section{Conclusions and future work}
We introduced partial Lawvere theories and their associated notion of partial equational theory. Our definitions are guided by the appropriate universal property, replacing cartesian categories with discrete cartesian restriction categories. Knowing the right universal property determines our choice of syntax, isolating the correct class of string diagrams that replace classical terms. This enables the standard methodology of presenting a theory by means of a signature and equations, while avoiding  ad-hoc notations and eliminating the subtleties of reasoning about partial structures.

The extension is conservative: every equational theory yields a partial equational theory such that the categories of models coincide, even though our models are in $\Par$ rather than in $\Set$. The recently proposed \emph{Frobenius theories}~\cite{Bonchi2017c} take their models in the category of relations $\ct{Rel}$, and are guided by the structure of cartesian bicategories of relations~\cite{Carboni1987}. Every partial equational theory yields a Frobenius theory and again, the categories of models coincide. We feel that our notion is a sweet-spot. First, we have shown that our notion of partial theories is expressive, capturing a number of important examples. Second, we retain much of the richness of the semantic picture, via a canonical variety theorem and existence of free models.

\medskip

% \subsection{Theories as monoids}
%\begin{itemize}
%\item

% Addressing this enticing problem needs a systematic approach and a careful analysis of the formal category theory (in the sense of \cite{street1978yoneda,street1980cosmoiof,wood1982abstract}) of the 2-category of DCR categories, and in particular of its formal theory of monads \cite{Street1972}, as an obligatory step.% towards the sol
%ution of this enticing problem.
% \subsection{Tensor product of partial Lawver theories}
% \item The category of Lawvere theories can be equipped with a canonical symmetric monoidal product operation characterised by the fact that models of $\clS\otimes \clT$ are exactly the $\clS$-models in the category of $\clT$-models:%, or equivalently, the $\clT$-models in the category of $\clS$-models:
% % \[\Mod_{\clS \otimes \clT} \cong \Mod_{\clS}(\Mod_{\clT})\]
% such construction is relevant to study models of a theory interpreted in the models of another theory; say, group objects in the category of graphs, or groups in the category of groups (i.e., abelian groups).

% It thus comes as a natural question as whether the category $\ct{pLaw}$ carries a similar symmetric monoidal product; there is empirical evidence that such a construction exists, but a thorough study of all the technicalities involved is another reason to develop further the formal category theory of restriction categories.

%\item
%For computer science,
There is much future work. The fact that the syntax introduced here is inherently partial makes it well-suited to applications in computing. In particular there is an evident notion of \emph{computable model} for partial Lawvere theories, namely those models valued in the category of sets and partial recursive functions. The corresponding \emph{computable varieties} seem to be interesting for programming language semantics, and therefore worthy of study. A further step would be the lifting of this situation to a more synthetic category of computable functions, such as a \emph{Turing category}~\cite{Coc08} or \emph{monoidal computer}~\cite{Pav13}.

An important part of categorical universal algebra is played by monads, a point of view that we have not considered here.
Indeed, Lawvere theories can be seen as \emph{finitary monads}~\cite{linton}, with the category of algebras giving the associated variety. This connection has been a fruitful one, relating areas of research that are, on the surface, very different, see e.g.~\cite{cheng2020distributive,loday2012algebraic,markl2007operads}. A natural question is whether there is an analogous approach for partial algebraic theories. We conjecture that there is, with certain formal monads~\cite{Street1972} in the 2-category $\mathsf{DCRC}^\leq$ playing the role of finitary monads.
We expect that other constructions of categorical universal algebra (e.g.~\cite{freyd1966algebra,power2006countable}) will have corresponding partial accounts.

% \subsection{The internal monologue between partiality and totality: $\mathsf{Law}$ vs $\mathsf{pLaw}$}
% There is a natural functor
% \[\]
% Lawvere theories, be them sigle-sorted, multi-sorted or unsorted, identify two possible partial theories: Indeed, given a Lawvere theory we can construct
% \begin{itemize}
% 	\item[$\mathsf{Rig}$] its associated fully partial theory (ref!). This construction would associate to the theory of monoids, the theory of partial monoids.
% 	\item[$\mathsf{Par}$] its associated \textit{fake} partial theory. That is precisely the fully partial theory where we add two axioms in order to make the operations total.
% \end{itemize}

% These two constructions are graphically syntetized by the picture below.
% \[
% \begin{tikzcd}[scale=.5]
% \mathsf{Law} \arrow[rrrrrr, "\mathsf{Rig}" description, dashed, bend left] \arrow[rrrrrr, "\mathsf{Par}" description, dashed, bend right] &  & \mathsf{Prod} \arrow[ll, dashed] &  & \mathsf{Lex} \arrow[ll] &  & \mathsf{pLaw} \arrow[ll, dashed]
% \end{tikzcd}
% \]

% One possible future research direction is to understand whether these associations are functorial. Moreover, in the opposite direction a partial Lawvere theory, identified with its Karubi-completion is a category with products and thus can be naturally identified with a Lawvere theory. In the diagram above this construction is represented by that dashed unlabled arrow. Is this an adjoint triple?

\bibliography{../refs}{}

\providecommand{\bysame}{\leavevmode\hbox to3em{\hrulefill}\thinspace}
\providecommand{\MR}{\relax\ifhmode\unskip\space\fi MR }
% \MRhref is called by the amsart/book/proc definition of \MR.
\providecommand{\MRhref}[2]{%
  \href{http://www.ams.org/mathscinet-getitem?mr=#1}{#2}
}
\providecommand{\href}[2]{#2}
\begin{thebibliography}{CGH12}

\bibitem[ALR03]{adamek2003duality}
J.~Ad{\'a}mek, F.W. Lawvere, and J.~Rosick{\`y}, \emph{On the duality between
  varieties and algebraic theories}, Algebra universalis \textbf{49} (2003),
  no.~1, 35--49.

\bibitem[AR94]{adamek_rosicky_1994}
J.~Adamek and J.~Rosick{\`y}, \emph{Locally presentable and accessible
  categories}, London Mathematical Society Lecture Note Series, Cambridge
  University Press, 1994.

\bibitem[ARV11]{Adamek2011}
J.~Ad{\'a}mek, J.~Rosick{\`y}, and E.M. Vitale, \emph{Algebraic theories},
  Cambridge Tracts in Mathematics \textbf{184} (2011), 1.

\bibitem[BD86]{borcauchy}
F.~Borceux and D.~Dejean, \emph{Cauchy completion in category theory}, Cahiers
  de Topologie et G{\'e}om{\'e}trie Diff{\'e}rentielle Cat{\'e}goriques
  \textbf{27} (1986), no.~2, 133--146.

\bibitem[B{\'E}93]{bloom1993iteration}
S.~L. Bloom and Z.~{\'E}sik, \emph{Iteration theories: The equational logic of
  iterative processes}, 1993.

\bibitem[Bet88]{Bethke88}
I.~Bethke, \emph{Notes on partial combinatory algebras}, Ph.D. thesis,
  Universiteit von Amsterdam, 1988.

\bibitem[Bir35]{birkhoff1935structure}
G.~Birkhoff, \emph{On the structure of abstract algebras}, Mathematical
  proceedings of the Cambridge philosophical society, vol.~31, Cambridge
  University Press, 1935, pp.~433--454.

\bibitem[BPS17]{Bonchi2017c}
F.~Bonchi, D.~Pavlovic, and P.~Sobocinski, \emph{Functorial semantics for
  relational theories}, arXiv preprint arXiv:1711.08699 (2017).

\bibitem[BSZ18]{Bonchi2018}
F.~Bonchi, P.~Soboci\'{n}ski, and F.~Zanasi, \emph{Deconstructing {L}awvere
  with distributive laws}, Journal of Logical and Algebraic Methods in
  Programming \textbf{95} (2018), 128--146.

\bibitem[Car91]{carboni1991matrices}
A.~Carboni, \emph{Matrices, relations, and group representations}, Journal of
  Algebra \textbf{136} (1991), no.~2, 497--529.

\bibitem[CD08]{Coecke2008}
B.~Coecke and R.~Duncan, \emph{Interacting quantum observables}, ICALP`08,
  2008, pp.~298--310.

\bibitem[Cen04]{centazzo2004generalised}
C.~Centazzo, \emph{Generalised algebraic models}, Generalised algebraic models,
  Presses universitaires de Louvain, 2004.

\bibitem[CGH12]{Coc12}
J.R.B. Cockett, X.~Guo, and J.W Hofstra, P, \emph{Range categories {II}:
  towards regularity}, Theory and Applications of Categories \textbf{26}
  (2012), 453--500.

\bibitem[CH08]{Coc08}
J.R.B. Cockett and P.J.W. Hofstra, \emph{Introduction to {T}uring categories},
  Annals of Pure and Applied Logic \textbf{156} (2008), 183--209.

\bibitem[Che20]{cheng2020distributive}
E.~Cheng, \emph{Distributive laws for {L}awvere theories}, Compositionality
  \textbf{2} (2020), 1.

\bibitem[CL02]{Coc02}
J.R.B. Cockett and S.~Lack, \emph{Restriction categories {I}: categories of
  partial maps}, Theoretical Computer Science \textbf{270} (2002), 223--259.

\bibitem[CL07]{Coc07}
J.R.B. Cockett and S.~Lack, \emph{Restriction categories {III}: colimits,
  partial limits, and extensivity}, Mathematical Structures in Computer Science
  \textbf{17} (2007), 775--817.

\bibitem[CO89]{Cur89}
P.L. Curien and A.~Obtu{\l}owicz, \emph{Partiality, cartesian closedness, and
  toposes}, Information and Computation \textbf{80} (1989), 50--95.

\bibitem[CV02]{GUcentazzo}
C.~Centazzo and E.M. Vitale, \emph{A duality relative to a limit doctrine},
  Theory and Applications of Categories \textbf{10} (2002), 486--497.

\bibitem[CW87]{Carboni1987}
A.~Carboni and R.F.C. Walters, \emph{Cartesian bicategories {I}}, Journal of
  Pure and Applied Algebra \textbf{49} (1987), 11--32.

\bibitem[Dug01]{Dugger2001}
D.~Dugger, \emph{Combinatorial model categories have presentations}, Advances
  in Mathematics \textbf{164} (2001), no.~1, 177--201.

\bibitem[Fox76]{Fox76}
T.~Fox, \emph{Coalgebras and cartesian categories}, Communications in Algebra
  \textbf{4} (1976), 665--667.

\bibitem[Fre66]{freyd1966algebra}
P.~Freyd, \emph{Algebra valued functors in general and tensor products in
  particular}, Colloquium mathematicum, vol.~14, Institute of Mathematics
  Polish Academy of Sciences, 1966, pp.~89--106.

\bibitem[Fre72]{freyd1972aspects}
\bysame, \emph{Aspects of topoi}, Bulletin of the Australian Mathematical
  Society \textbf{7} (1972), no.~1, 1--76.

\bibitem[FS90]{Freyd1990}
P.J. Freyd and A.~Scedrov, \emph{Categories, allegories}, North-Holland, 1990.

\bibitem[Gil14]{Gil14}
B.~Giles, \emph{An investigation of some theoretical aspects of reversible
  computing}, Ph.D. thesis, The University of Calgary, 2014.

\bibitem[GU71]{gabriel2006lokal}
P.~Gabriel and F.~Ulmer, \emph{Lokal pr\"asentierbare {K}ategorien}, Lecture
  Notes in Mathematics, Vol. 221, Springer-Verlag, Berlin, 1971.

\bibitem[HP07]{hyland2007category}
M.~Hyland and J.~Power, \emph{The category theoretic understanding of universal
  algebra: Lawvere theories and monads}, Electronic Notes in Theoretical
  Computer Science \textbf{172} (2007), 437--458.

\bibitem[Koc03]{Kock2003}
J.~Kock, \emph{Frobenius algebras and {2D} topological quantum field theories},
  Cambridge University Press, 2003.

\bibitem[Lac04]{compprops}
S.~Lack, \emph{Composing {PROPs}}, Theory and Applications of Categories
  \textbf{13} (2004), no.~9, 147--163.

\bibitem[Law63]{lawvere1963functorial}
F.W. Lawvere, \emph{Functorial semantics of algebraic theories}, Proceedings of
  the National Academy of Sciences of the United States of America \textbf{50}
  (1963), no.~5, 869.

\bibitem[Lin66]{linton}
F.E.J. Linton, \emph{Some aspects of equational categories}, Proceedings of the
  Conference on Categorical Algebra (Berlin, Heidelberg) (S.~Eilenberg, D.~K.
  Harrison, S.~MacLane, and H.~R{\"o}hrl, eds.), Springer Berlin Heidelberg,
  1966, pp.~84--94.

\bibitem[LV12]{loday2012algebraic}
J.-L. Loday and B.~Vallette, \emph{Algebraic operads}, Grundlehren der
  Mathematischen Wissenschaften [Fundamental Principles of Mathematical
  Sciences], vol. 346, Springer, Heidelberg, 2012.

\bibitem[ML65]{maclane1965categorical}
S.~Mac~Lane, \emph{Categorical algebra}, Bulletin of the American Mathematical
  Society \textbf{71} (1965), no.~1, 40--106.

\bibitem[MP89]{makkai1989accessible}
M.~Makkai and R.~Par\'{e}, \emph{Accessible categories: the foundations of
  categorical model theory}, Contemporary Mathematics, vol. 104, American
  Mathematical Society, Providence, RI, 1989.

\bibitem[MSS02]{markl2007operads}
M.~Markl, S.~Shnider, and J.~Stasheff, \emph{Operads in algebra, topology and
  physics}, Mathematical Surveys and Monographs, vol.~96, American Mathematical
  Society, Providence, RI, 2002.

\bibitem[Pal09]{Palmgren09constructivistand}
E.~Palmgren, \emph{Constructivist and structuralist foundations: {B}ishop's and
  {L}awvere's theories of sets}, 2009.

\bibitem[Pav13]{Pav13}
D.~Pavlovic, \emph{Monoidal computer {I}: Basic computability by string
  diagrams}, Information and Computation \textbf{226} (2013), 94--116.

\bibitem[Pow06]{power2006countable}
J.~Power, \emph{Countable {L}awvere theories and computational effects},
  Electronic Notes in Theoretical Computer Science \textbf{161} (2006), 59--71.

\bibitem[PV07]{palmgren_vickers_2007}
E.~Palmgren and S.J. Vickers, \emph{Partial horn logic and cartesian
  categories}, Annals of Pure and Applied Logic \textbf{145} (2007), 314--353.

\bibitem[RR88]{Rob88}
E.P. Robinson and G.~Rosolini, \emph{Categories of partial maps}, Information
  and Computation \textbf{79} (1988), 94--130.

\bibitem[Str72]{Street1972}
R.~Street, \emph{The formal theory of monads}, Journal of Pure and Applied
  Algebra \textbf{2} (1972), no.~2, 149--168.

\end{thebibliography}
\bibliographystyle{amsalpha}

\hrulefill

%\appendix
%\input{secs/A-restriction-basics.tex}

% .
\end{document}